\keywords{regular tree languages, parity tree automata, automata ambiguity}
\theoremstyle{plain} 
\def\perf{\mathit{perf}}
\newcommand{\xqed}[1]{%
	\leavevmode\unskip\penalty9999 \hbox{}\nobreak\hfill
	\quad\hbox{\ensuremath{#1}}}
\newcommand{\directions}{\{l,r\}}
\newcommand{\cA}{\mathcal{A}}
\newcommand{\cB}{\mathcal{B}}
\newcommand{\cD}{\mathcal{D}}
\def\nat{\mathbb{N}}
\newcommand{\fintree}{T^{fin}_{\Sigma(\{x_1, \dots, x_n\})}}
\newcommand{\graft}{\mathit{graft}}
\newcommand{\lead}{\alpha_{t_0, \cA, \widehat{STR}}(\phi, v)}
\newcommand{\Power}{\mathcal{P}}
\newcommand{\langUncountable}{\mathfrak{L} [L_0,L^{\neg ba}]}
\def\nat{\mathbb{N}}
\newtheorem{claim}{Claim}[thm]
\newenvironment{claimproof}[1][\color{darkgray}{Proof}]{\begin{proof}[#1]}{\end{proof}}
\newcounter{theoremCounter}
\newcounter{claimCounter}
\newcommand{\saveCounter}{
	\setcounter{theoremCounter}{\value{thm}}
	\setcounter{claimCounter}{\value{claim}}
}
\newcommand{\loadCounter}{
	\setcounter{thm}{\value{theoremCounter}}
	\setcounter{claim}{\value{claimCounter}}
}
\begin{document}

\title[Ambiguity Hierarchy of Regular Infinite Tree Languages]{\texorpdfstring{Ambiguity Hierarchy of\\Regular Infinite Tree Languages}{Ambiguity Hierarchy of Regular Infinite Tree
Languages}}

\author[A. Rabinovich]{Alexander Rabinovich}	
\author[D. Tiferet]{Doron Tiferet}	
\address{Tel Aviv University, Israel}	
\email{rabinoa@tauex.tau.ac.il, sdoron5.t2@gmail.com} 
\urladdr{https://www.cs.tau.ac.il/\textasciitilde{}rabinoa} 





\begin{abstract}
	An automaton is unambiguous if for every input it has at most one accepting computation.
	An automaton is $k$-ambiguous (for $k>0$) if
	for every input it has at most $k$ accepting computations.
	An automaton is boundedly ambiguous if it is $k$-ambiguous for some $k\in \nat$.
	An automaton is finitely (respectively, countably) ambiguous if for every input it has at most finitely (respectively, countably) many accepting computations.

	The degree of ambiguity of a regular language is defined in a natural way.
	A language is $k$-ambiguous (respectively, boundedly, finitely, countably ambiguous) if it is accepted by a $k$-ambiguous
	(respectively, boundedly, finitely, countably ambiguous)
	automaton.
	%
	Over finite words every regular language is accepted by a deterministic automaton.
	Over finite trees every regular language is accepted by an unambiguous automaton.
	Over $\omega$-words every regular language is accepted by an unambiguous B\"uchi automaton
	and by a deterministic parity automaton.
	Over infinite trees Carayol et al. showed that there are ambiguous languages.

	We show that over infinite trees there is a hierarchy of degrees of ambiguity:
	For every $k>1$ there are $k$-ambiguous languages that are not $k-1$ ambiguous; and there are finitely (respectively countably, uncountably) ambiguous languages that are not boundedly (respectively finitely, countably) ambiguous.
\end{abstract}

\maketitle


\section{Introduction}\label{sect:intro}

\paragraph*{Degrees of Ambiguity}
The relationship between deterministic and nondeterministic machines plays a central role
in computer science. An important topic is a comparison of expressiveness, succinctness and complexity of deterministic and nondeterministic models.
Various restricted forms of nondeterminism were suggested and investigated (see \cite{colcombet2015unambiguity,han2017ambiguity} for recent surveys).

Probably, the oldest restricted form of nondeterminism is unambiguity.
An automaton is unambiguous if for every input there is at most one accepting run.
For automata over finite words there is a rich and well-developed theory on the relationship between deterministic, unambiguous and nondeterministic
automata \cite{han2017ambiguity}. All three models have the same expressive power.
Unambiguous automata are exponentially more succinct than deterministic ones, and nondeterministic automata are
exponentially more succinct than unambiguous ones \cite{leiss1981succinct,leung2005descriptional}.

Some problems are easier for unambiguous than for nondeterministic automata. As shown by Stearns and Hunt \cite{stearns1985equivalence}, the equivalence and inclusion problems for unambiguous automata are in polynomial time, while these problems are PSPACE-complete for nondeterministic automata.

The complexity of basic regular operations on languages represented by unambiguous finite automata was investigated in \cite{JJS16}, and tight upper bounds on state complexity of intersection, concatenation and many other operations on languages represented by unambiguous automata were established.

It is well-known that the tight bound on the state complexity of the complementation of nondeterministic automata is $2^n$. In \cite{JJS16}, it was shown that the complement of the language accepted by an $n$-state unambiguous automaton is accepted by an unambiguous automaton with $2^{0.79n+\log n}$ states.

Many other notions of ambiguity were suggested and investigated.
A recent paper \cite{han2017ambiguity} surveys works on the degree of ambiguity
and on various nondeterminism measures for finite automata on words.

An automaton is \emph{$k$-ambiguous} if on every input it has at most $k$ accepting runs; it is \emph{boundedly ambiguous} if it is $k$-ambiguous for some $k$; it is \emph{finitely ambiguous} if on every input it has finitely many accepting runs.

It is clear that an unambiguous automaton is
$k$-ambiguous for every $k>0$, and a $k$-ambiguous automaton is
finitely ambiguous. The reverse implications fail.
For $\epsilon$-free automata over words (and over finite trees), on every input there are at most finitely many accepting runs. 
Hence, every $\epsilon$-free automaton on finite words and on finite trees
is finitely ambiguous.
However, over $\omega$-words
there are nondeterministic automata with uncountably many accepting runs. 
Over $\omega$-words and over infinite trees, finitely ambiguous automata are a proper subclass of the class of countably ambiguous automata, which is a proper subclass of nondeterministic automata.

The cardinality of the set of accepting computations of an automaton over an infinite tree $t$ is bounded by the cardinality of the set of functions from the nodes of $t$ to the state of the automaton,
and therefore, it is at most continuum $2^{\aleph_0}$.
The set of accepting computations on $t$ is definable in Monadic Second-Order Logic (MSO). In B\'{a}r\'{a}ny et al. in \cite{barany2010expressing} it was shown
that the continuum hypothesis holds for MSO-definable families of sets. Therefore, if the set of accepting computations
of an automaton on a tree $t$ is uncountable, then its cardinality is $2^{\aleph_0}$.
Hence, there are exactly two infinite degrees of ambiguity.

The degree of ambiguity of a regular language is defined in a natural way.
	A language is $k$-ambiguous (respectively, boundedly, finitely, countably ambiguous) if it is accepted by a $k$-ambiguous
	(respectively, boundedly, finitely, countably ambiguous)
	automaton.
%
%

Over finite words, every regular language is accepted by a deterministic automaton.
Over finite trees, every regular language is accepted by a deterministic bottom-up tree automaton and by an unambiguous top-down tree automaton.
Over $\omega$-words every regular language is accepted by an unambiguous B\"uchi automaton \cite{omegaLanguagesNonAmbiguous} and by a deterministic parity automaton.

Hence, the regular languages over finite words, over finite trees and over $\omega$-words are unambiguous.

In \cite{carayol2010choice} it was shown that the aforementioned situation is different for infinite trees.
Carayol et al. \cite{carayol2010choice} proved that the language $L_{\exists a} $ of infinite full-binary trees over the alphabet $ \{a,c\} $, defined as $L_{\exists a} := \{t\mid t$ has
at least one node labeled by $a\}$ is ambiguous. The proof is based on the undefinability
of a choice function in Monadic Second-Order logic (MSO) \cite{gurevich1983rabin,carayol2007mso}.

Our results imply that the complement of every countable regular language is not finitely ambiguous.
Since $L_{\exists a} $ is the complement (with respect to the alphabet $\{a,c\}$) of the language that consists of a single tree (i.e. the tree with all nodes labeled by $c$), we conclude that $L_{\exists a} $ is not finitely ambiguous (this strengthens the above mentioned result of \cite{carayol2010choice}).
%
%
Our main result states that over infinite trees there is a hierarchy of degrees of ambiguity:
\begin{thm}[Hierarchy]
  \leavevmode
  \phantomsection \label{th:hierarchy}
	\begin{enumerate}
		\item For every $k>1$ there are $k$-ambiguous languages that are not $(k-1)$-ambiguous.
		\item There are finitely ambiguous languages that are not boundedly ambiguous.
		\item There are countably ambiguous languages that are not finitely ambiguous.
		\item There are uncountably ambiguous languages that are not countably ambiguous.
	\end{enumerate}
\end{thm}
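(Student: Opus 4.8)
The plan is to factor the proof into two independent pieces: a single \emph{lower-bound core} that turns the MSO-undefinability of a choice function into a quantitative bound on the number of accepting runs, and a family of \emph{amplification} constructions that combine copies of this core into languages realizing each degree in the hierarchy. The upper bounds (exhibiting an automaton that meets the claimed degree) will always be the easy direction, obtained by explicit guess-and-check automata; the real content lies entirely in the lower bounds.

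I would first establish the core. Fix a base language $L_0$ --- a controlled variant of $L_{\exists a}$ in which accepting a tree amounts to committing to a witness (a branch, or a witnessing $a$-node) that no automaton can canonically select. The key lemma is: every parity automaton $\cA$ recognizing $L_0$ has some input tree carrying at least two accepting runs. I would prove the contrapositive: an unambiguous $\cA$ would let us read off, from the unique accepting run, an MSO-definable selection of the witness, contradicting the undefinability of a choice function of Gurevich--Shelah / Carayol et al.\ already invoked for $L_{\exists a}$. \textbf{This step is the main obstacle}, because the hierarchy needs the lemma in quantitative form: to separate $k$ from $k-1$, and especially to separate finite from countable and countable from uncountable, I must \emph{count} runs rather than merely distinguish one from many. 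I expect to need a pumping/self-similarity argument: partition the state set of $\cA$ into $\Qunam$ (states from which the residual behaviour is unambiguous) and $\Qam$ (the rest), and use a $\graft$ operation to splice gadgets at independent positions so that each surviving $\Qam$-position contributes an independent, unavoidable binary choice. Bounding the count from below then reduces to controlling how often an accepting run must pass through $\Qam$.

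With the core in hand, the amplification lemmas are combinatorial. A \emph{conjunctive} combination at the root --- place independent copies of $L_0$ in separate subtrees --- makes the number of accepting runs multiply, so iterating it $n$ times forces $2^{n}$ runs and yields the upper bounds by running the slots in parallel. For part~(1) I would realize an arbitrary (not necessarily power-of-two) $k$ by combining the binary core with redundant unambiguous structure, or by a gadget with $k$ symmetric, automaton-indistinguishable witnesses; pinning the exact value $k$ so that the lower bound ($\ge k$ runs for every automaton) matches the upper bound ($\le k$ runs for one automaton) is the delicate point of this part. Part~(2) uses a single language whose trees carry a \emph{finite but unbounded} number of independent core gadgets: each tree forces only finitely many runs (upper bound: a guessing automaton), yet the count grows with the number of gadgets, so no uniform bound exists. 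Part~(4) is the infinitary analogue: a language whose trees carry \emph{infinitely many} independent gadgets along a spine forces $2^{\aleph_0}$ runs for every automaton, while the cardinality fact already recorded (at most continuum, and continuum once uncountable) supplies the matching upper bound.

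Finally, part~(3) is handled directly by $L_{\exists a}$ itself: an automaton guessing one witnessing $a$-node has at most countably many accepting runs (one per witness), so $L_{\exists a}$ is countably ambiguous, while the core lemma in its ``not finitely ambiguous'' form --- equivalently, that the complement of a countable regular language is never finitely ambiguous --- shows it is not finitely ambiguous. Assembling the four constructions, each with its explicit upper-bound automaton and its core-driven lower bound, gives the full hierarchy. The recurring risk throughout is that a clever automaton could trade runs against states and evade the count; guarding against this is exactly what the $\Qunam/\Qam$ decomposition and the grafting argument are designed to do.
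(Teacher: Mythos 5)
Your plan stands or falls with what you yourself flag as the main obstacle~--- the \emph{quantitative} core~--- and as written it falls there. Undefinability of a choice function (Theorem~\ref{theorem-MSO-choice-function}) only yields ``not unambiguous,'' which is the result of \cite{carayol2010choice}; your proposed upgrade to ``not finitely ambiguous'' via pumping/self-similarity with a $\Qunam$/$\Qam$ state decomposition is not a proof, and it faces a concrete obstruction: when you graft gadgets at independent positions, the residual languages $L(\cA_q)$ occurring at those positions need not equal (or even resemble) the gadget language, so ``each surviving $\Qam$-position contributes an independent, unavoidable binary choice'' is precisely what an adversarial automaton can deny~--- it may cover the gadget language by several \emph{unambiguous} residuals and never be forced into a local binary choice. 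The paper closes this gap by a different idea, which is the one you are missing: strengthen Gurevich--Shelah to Lemma~\ref{lemma-definable-selection-functions}~--- there is no MSO-definable function assigning to every non-empty antichain a finite non-empty subset (this follows because finite sets \emph{do} admit MSO-definable choice, via the lexicographically minimal element). Given that, finite ambiguity is contradicted directly and without any run counting: by the membership-game analysis (playing an accepting run $\phi$ against a regular positional winning strategy of Pathfinder in $G_{t_0,\cA}$, which exists since $t_0\notin L(\cA)$), every $\phi\in ACC(\cA,t_0[t_1/Y])$ MSO-definably selects an element of $Y$, and existentially quantifying over accepting runs defines the set of all selected elements~--- a finite non-empty subset of $Y$ whenever $\cA$ is finitely ambiguous, contradicting Lemma~\ref{lemma-definable-selection-functions}. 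This is Proposition~\ref{prop:main-new}, and it, not a pumping argument, is what powers every level of the hierarchy.

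Downstream of the core there is a second gap in your part~(1): the conjunctive ``runs multiply'' scheme needs the lower bound $da \geq da_1\cdot da_2$ for trees composed at the root, which you never prove (and which suffers from exactly the residual-language problem above), and you concede the difficulty of hitting an exact $k$ that is not a power of two. The paper sidesteps both by using the \emph{union} $L_{\neg a_1}\cup\dots\cup L_{\neg a_k}$ of $k$ deterministic languages (upper bound $k$ by Lemma~\ref{lemma-ambiguity-of-union-and-intersection}) and proving, by induction on $k$, that every finitely ambiguous automaton for it has at least $k$ accepting runs on the single tree $t_c$ (Lemma~\ref{lemma-at-least-k-ambiguous}); the induction step again invokes the core through Moore-machine reductions to $L_{\exists a_1}$. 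Your parts~(3) and~(4) do match the paper in substance: (3) is $L_{\exists a_1}$ with the core lower bound, and (4) is the spine-of-gadgets scheme~--- indeed your $\Qunam$/$\Qam$ decomposition is essentially how the paper proves Proposition~\ref{lemma-uncountably-ambiguous-language}, by finding an \emph{ambiguous} residual state repeating infinitely often along the spine and grafting two runs independently. But note that even there the gadget must be not \emph{boundedly} ambiguous, not merely ambiguous, because the first step of that argument must rule out covering the gadget by a finite union of unambiguous residuals; so part~(4), too, ultimately rests on the quantitative core that your proposal leaves unproved.
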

 Some natural tree languages that witness items  (1), (3) and (4) of Theorem \ref{th:hierarchy} are described in the   examples below.
We have not found  a ``natural'' finitely ambiguous language which is  not boundedly ambiguous (Theorem \ref{th:hierarchy}(2)). 
\begin{exas}
 Let $T_\Sigma^\omega$ be the set of all infinite full-binary trees over an alphabet $\Sigma$.
		Let $\Sigma_k = \{c, a_1, a_2, ..., a_k\}$, and let 
 $L_{\neg a_i} := \{t \in T_{\Sigma_k}^\omega \mid$ no node in $t$ is labeled by $a_i \}$
     for $1 \leq i \leq n$. Define:
		\begin{enumerate}
\item $L_{\neg a_1 \vee \dots \vee \neg a_k} := L_{\neg a_1} \cup \dots \cup L_{\neg a_k}$. We show that this language  is $k$-ambiguous, but is not $(k-1)$-ambiguous (see  Section~\ref{sect:k-ambig}).  In \cite{DBLP:conf/csl/BilkowskiS13} it was shown that $L_{\neg a_1 \vee   \neg a_2}$ is two ambiguous.
    \item $L_{\exists a_1} := \{t \in T_{\Sigma_1}^\omega \mid$ there exists an $a_1$-labeled node in $t \}$. This is a countably ambiguous language that is not finitely ambiguous (see Section~\ref{sect:not-finite}).
\item $L_{no-\max-a_1} := \{t \in T_{\Sigma_1}^\omega \mid$ above every $a_1$-labeled node in $t$ there is an $a_1$-labeled node$ \}$. This is an uncountably ambiguous language that is not countably ambiguous (see Section~\ref{sect:uncount}).
\end{enumerate}
\end{exas}

\paragraph{Organization of the Paper:}
In Section~\ref{sect:prel} we recall notations
and basic results about automata and monadic second-order logic.
In Section~\ref{sect:simple} simple properties of languages are proved.
Section~\ref{sect:not-finite} gives a sufficient condition for a language to be not finitely ambiguous. The proof techniques used in Section~\ref{sect:not-finite} refine the proof techniques of \cite{carayol2010choice}~-- we rely on the fact that a choice function is not MSO-definable to obtain a lower bound for degree of ambiguity that is higher than the bound which was presented in \cite{carayol2010choice}.
Section~\ref{sect:k-ambig} deals with $k$-ambiguous languages~-- for every $k\in \nat$, we describe a $k$-ambiguous language that is not $(k-1)$-ambiguous.
Section~\ref{sect:finite} provides an example of a finitely ambiguous language which is not boundedly ambiguous. Section~\ref{sect:uncount} introduces a scheme for obtaining uncountably ambiguous languages from languages that are not boundedly ambiguous, and presents some natural examples of uncountably ambiguous languages.
In Section~\ref{sect:countable-langauges-ambiguity}, relying on the characterization of countable regular languages given by Niwi\'nski \cite{Damian}, we prove that every countable tree language is unambiguous.
The conclusion is given in Section~\ref{sect:concl}.

%


An extended abstract of this paper was published in \cite{rabinovich2020ambiguityHierarchy}.
Here we added the proofs that were sketched or missing in \cite{rabinovich2020ambiguityHierarchy}, presented natural examples of uncountably ambiguous languages (in Section~\ref{sect:uncount}), and added Section~\ref{sect:countable-langauges-ambiguity} in which we prove that countable languages are unambiguous using Niwi\'nski's Representation for Countable Languages.

\section{Preliminary} \label{sect:prel}
We recall here standard terminology and notations about trees, automata and logic \cite{PinPerrin, Rabin}.
In Subsection~\ref{subsect:mso} we also recall Gurevich-Shelah's theorem about undefinability of choice function and derive Lemma~\ref{lemma-definable-selection-functions} which plays
an important role in our proofs.

\subsection{Trees}
We view the set $\{l,r\}^*$ of finite words over alphabet $\{l,r\}$ as the domain of a full-binary tree, where the empty word $\epsilon$ is the root of the tree, and for each node $v \in \{l,r\}^*$, we call $v \cdot l$ the left child of $v$, and $v \cdot r$ the right child of $v$.

We define a tree order ``$\leq$'' as a partial order such that $\forall u,v \in \{l,r\}^*: u \leq v$ iff $u$ is a prefix of $v$.
Nodes $u$ and $v$ are incomparable~-- denoted by $u\perp v$~-- if neither $u\leq v$ nor $v\leq u$; a set $U$ of nodes is an \textbf{antichain}, if its
elements are incomparable with each other.

We say that an infinite sequence $\pi = v_0, v_1, \dots$ is a \textbf{tree branch} if $v_0 = \epsilon$ and $\forall i \in \nat: v_{i+1} = v_i \cdot l$ or $v_{i+1} = v_i \cdot r$.



If $\Sigma$ is a finite alphabet, then a $\Sigma$-labeled full-binary tree $t$ is a labeling function $t: \{l,r\}^* \rightarrow \Sigma$.
We denote by $T^\omega_\Sigma$ the set of all $\Sigma$-labeled full-binary trees. We often use ``tree'' for ``labeled full-binary tree.''

Given a $\Sigma$-labeled tree $t$ and a node $v \in \{l,r\}^*$, the tree $t_{\geq v}$ (called the subtree of $t$, rooted at $v$) is defined by $t_{\geq v}(u) := t(v \cdot u)$ for each $u \in \{l,r\}^*$.

\paragraph{Grafting}
Given two labeled trees $t_1$ and $t_2$ and a node $v \in \{l,r\}^*$, the grafting of $t_2$ on $v$ in $t_1$, denoted by $t_1 [t_2/v]$, is the tree $t$ that is obtained from $t_1$ by replacing the subtree of $t_1$ rooted at $v$ by $t_2$. Formally,
$t(u) := \begin{cases}
t_2(w) & \exists w \in \{l,r\}^*: u = v \cdot w\\
t_1(u) & \text{otherwise}
\end{cases}$

More generally, given a tree $t_1$, an antichain $Y \subseteq \{l,r\}^*$ and a tree $t_2$, the grafting of $t_2$ on $Y$ in $t_1$, denoted by $t_1 [t_2/Y]$, is obtained by replacing each subtree of $t_1$ rooted at a node $y \in Y$ by the tree $t_2$.


\paragraph{Tree Language}
A language $L$ over an alphabet $\Sigma$ is a set of $\Sigma$-labeled trees. We denote by $\overline{L}:=T^\omega_\Sigma \setminus L$ the complement of $L$.


\subsection{Automata} We define the following notions of automata:


 \paragraph{Parity $\omega$-word Automata (PWA)}
 A PWA is a tuple $(Q_\cA, \Sigma, Q_I, \delta, \mathbb{C})$ where $\Sigma$ is a finite alphabet, $Q$ is a finite set of states, $Q_I \subseteq Q$ is a set of initial states, $\delta \subseteq Q \times \Sigma \times Q$ is a transition relation, and $\mathbb{C}: Q \rightarrow \nat$ is a coloring function. A run of $\cA$ on an $\omega$-word $y = a_0 a_1 \dots$ is an infinite sequence $\rho = q_0 q_1 \dots$ such that $q_0 \in Q_I$, and $(q_i, a_i, q_{i+1}) \in \delta$ for all $i \in \nat$. We say that $\rho$ is accepting if the maximal number that occurs infinitely often in $\mathbb{C}(q_0) \mathbb{C}(q_1) \dots$ is even.

 \paragraph{Language}
 We denote the set of all accepting runs of $\cA$ on $y$ by $ACC(\cA, y)$. The language of $\cA$ is defined as $L(\cA) := \{ y \in \Sigma^\omega \mid ACC(\cA, y) \neq \emptyset \}$.


\paragraph{Parity Tree Automata (PTA) on Infinite Trees}
A PTA is a tuple $(Q_\cA, \Sigma, Q_I, \delta, \mathbb{C})$ where $\delta \subseteq Q \times \Sigma \times Q \times Q$, and $\Sigma$, $Q$, $Q_I$, $F$ are defined as in PWA. A computation of $\cA$ on a tree $t$ is a function $\phi: \{l,r\}^* \rightarrow Q$ such that $\phi(\epsilon) \in Q_I$, and $\forall v \in \{l,r\}^*: (\phi(v), t(v), \phi(v \cdot l), \phi(v \cdot r)) \in \delta$. We say that $\phi$ is accepting if for each tree branch $\pi = v_0 v_1 \dots$, the maximal number that occurs infinitely often in $\mathbb{C}(\phi(v_0)) \mathbb{C}(\phi(v_1)) \dots$ is even.

Given a PTA $\cA = (Q_\cA,\Sigma, Q_I, \delta_\cA, \mathbb{C}_\cA)$ and a set $Q' \subseteq Q_\cA$, we define $\cA_{Q'} := (Q_\cA,\Sigma, Q', \allowbreak \delta_\cA, \mathbb{C}_\cA)$ as the automaton obtained from $\cA$ by replacing the set of initial states $Q_I$ with $Q'$. For a singleton $Q' = \{q\}$, we simplify this notation by $\cA_q := \cA_{Q'}$.

\paragraph{Language}
We denote the set of all accepting computations of $\cA$ on $t$ by $ACC(\cA, t)$. The language of $\cA$ is defined as $L(\cA) := \{ t \in T^\omega_\Sigma \mid ACC(\cA, t) \neq \emptyset \}$. A tree language is said to be \emph{regular} if it is accepted by a PTA.

A state $q \in Q$ of a PTA $\cA$ is called useful if there is a tree $t \in L(\cA)$, a computation $\phi \in ACC(\cA, t)$ and a node $v \in \{l,r\}^*$ such that $\phi(v) = q$. Throughout the paper we will assume that all states of PTA are useful.

\paragraph{Degree of Ambiguity of an Automaton}
We denote by $|X|$ the cardinality of a set $X$.
An automaton $\cA$ is $k$-ambiguous if $|ACC(\cA, t)| \leq k$ for all $t \in L(\cA)$;
$\cA$ is unambiguous if it is $1$-ambiguous;
$\cA$ is boundedly ambiguous if there is $k \in \nat$ such that $\cA$ is $k$-ambiguous;
$\cA$ is finitely ambiguous if $ACC(\cA, t)$ is finite for all $t$;
$\cA$ is countably ambiguous if $ACC(\cA, t)$ is countable for all $t$.

The degree of ambiguity of $\cA$ (notation $da(\cA)$) is defined by $da(\cA) := k$ if $\cA$ is $k$-ambiguous and either $k = 1$ or $\cA$ is not $k-1$ ambiguous,
$da(\cA) := finite$ if $\cA$ is finitely ambiguous and not boundedly ambiguous, $da(\cA) := \aleph_0$ if $\cA$ is countably ambiguous and not finitely ambiguous, and $da(\cA) := 2^{\aleph_0}$ if $\cA$ is not countably ambiguous.

We order the degrees of ambiguity in a natural way: $i < j < finite < \aleph_0 < 2^{\aleph_0}$, for $i< j \in \nat$.

\subparagraph{\textbf{Degree of  Ambiguity of a Language}} We say that a regular tree language $L$ is unambiguous (respectively, $k$-ambiguous, finitely ambiguous, countably ambiguous) if it is accepted by an unambiguous (respectively, $k$-ambiguous, finitely ambiguous, countably ambiguous) automaton.
We define $da(L) := min_\cA \{da(\cA) \mid L(\cA) = L \}$.

\subsection{Monadic Second-Order Logic}\label{subsect:mso}
We use standard notations and terminology about monadic second-order logic (MSO) \cite{Rabin,trakhtenbrotfinite,thomas1990automata}.

Let $\tau$ be a relational signature. A structure (for $\tau$) is a tuple $M=(D, \{R^M \mid R\in \tau\})$ where $D$ is a domain, and each symbol $R \in \tau$ is interpreted as a relation $R^M$ on $D$.
%
%

MSO-formulas use first-order variables, which are interpreted by elements of the structure, and monadic second-order variables, which are interpreted as sets of elements. Atomic MSO-formulas are of the following form:
\begin{itemize}
	\item $R(x_1, \dots, x_{n})$ for an $n$-ary relational symbol $R$ and first order variables $x_1, \dots, x_n$
	\item $x = y$ for two first-order variables $x$ and $y$
	\item $x \in X$ for a first-order variable $x$ and a second-order variable $X$
\end{itemize}
MSO-formulas are constructed from the atomic formulas, using boolean connectives, the first-order quantifiers, and the second-order quantifiers.

We write $\psi(X_1, \dots, X_n, x_1, \dots, x_m)$ to indicate that the free variables of the formula $\psi$ are $X_1, \dots, X_n$ (second order variables) and $x_1, \dots, x_m$ (first order variables).
We write $ M \models\psi(A_1,\dots, A_n,a_1,\dots a_m)$ if $\psi$ holds in $M$ when subsets $A_i$ are assigned to $X_i$ for $i=1,\dots,n$ and
elements $a_i$ are assigned to variables $x_1,\dots,x_m$ for $i=1,\dots ,m$.

\paragraph{Coding}
Let $\Delta$ be a finite set. We can code a function from a set $D$ to $\Delta$ by a tuple of unary predicates on $D$.
This type of coding is standard, and we shall use explicit variables that range over such mappings and expressions of the form ``$F(u)=d$''
(for $d\in \Delta$) in MSO-formulas, rather than their codings.

Formally, for each finite set $\Delta$ we have second-order variables $X_1^\Delta, X_2^\Delta, \dots$
that range over the functions from $D$ to $\Delta$, and atomic formulas
$X_i^\Delta(u)=d $ for $d\in \Delta$ and $u$ a first order variables \cite{trakhtenbrotfinite}.
Often the type of the second order variables will be clear from the context and we drop the superscript $\Delta$.

\paragraph{Definable Relations}
The powerset of $D$ is denoted by $\Power (D)$. 
We say that a relation $R \subseteq \Power(D)^n \times D^m$ is MSO-definable in a structure $S$ with universe $D$ if there is an MSO-formula $\psi(X_1, \dots, X_n, x_1, \dots, x_m)$ such that $R = \{(D_1, \dots, D_n, u_1, \dots, u_m) \in \Power(D)^n \times D^m \mid S \models \psi(D_1 \dots, D_n, u_1 \dots, u_m)\}$.

An element $d \in D$ is MSO-definable in a structure $S$ if there is a formula $\psi(x)$ such that $S \models \phi(u)$ iff $u = d$. A set $U \subseteq D$ is MSO-definable if there is a formula $\phi(X)$ such that $S \models \phi(V)$ iff $V = U$.
%
%
A function is MSO-definable if its graph is. 

The unlabeled binary tree is the structure $(\{l,r\}^*, \{E_l, E_r\})$ where $E_l$ and $E_r$ are binary symbols, respectively interpreted as $\{(v, v \cdot l) \mid v \in \{l,r\}^*)\}$ and $\{(v, v \cdot r) \mid v \in \{l,r\}^*)\}$.

It is easy to verify the correctness of the following lemma:
\begin{lem}
	The following relations are MSO-definable in the unlabeled full-binary tree.
	\begin{itemize}
		\item The ancestor relation $\leq$.
		\item ``A set of nodes is a branch,'' ``A set of nodes is an antichain.'' 
		\item Let $\cA = (Q,\Sigma, Q_I, \delta, \mathbb{C})$ be a PTA. We use $\phi$ for a function $\{l,r\}^* \rightarrow Q$ and $\sigma$ for a function $\{l,r\}^* \rightarrow \Sigma$.
		\begin{itemize}
			\item ``$\phi$ is a computation of $\cA$ on the tree $\sigma$.'' 
			\item ``$\phi$ is an \textbf{accepting} computation of $\cA$ on the tree $\sigma$.'' 
		\end{itemize}
	\end{itemize}
\end{lem}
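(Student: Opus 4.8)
The plan is to exhibit, for each listed relation, an explicit MSO-formula over the signature $\{E_l,E_r\}$ and to argue briefly that it defines the intended relation; since everything reduces to a few standard encodings, the work is local. Throughout I would use the abbreviations $\mathrm{root}(x) \equiv \neg\exists y\,(E_l(y,x) \vee E_r(y,x))$ for ``$x$ is the root $\epsilon$'' and $\mathrm{child}(x,y) \equiv E_l(x,y) \vee E_r(x,y)$. For the ancestor relation, since $\leq$ is the reflexive-transitive closure of $\mathrm{child}$, I would use the usual closure formula: $u \leq v$ holds iff every set containing $u$ and closed under $\mathrm{child}$ also contains $v$, i.e.\ $\forall X\,[(u \in X \wedge \forall a\,\forall b\,((a \in X \wedge \mathrm{child}(a,b)) \to b \in X)) \to v \in X]$. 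A short induction on the length of the path from $u$ to $v$ shows this is exactly the prefix relation.

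Having $\leq$, antichains are immediate: $U$ is an antichain iff $\forall u\,\forall v\,((u \in U \wedge v \in U \wedge u \leq v) \to u = v)$. For branches I would characterize $B$ as the node set of an infinite root-to-leaf path by three conjuncts: the root lies in $B$; every node of $B$ has exactly one child in $B$ (the $\exists!$ being expanded in the usual way over $\mathrm{child}$); and every non-root node of $B$ has its parent in $B$. In the full-binary tree every node has two children, so the second conjunct forces $B$ to be infinite, while the third forces downward closure; together they pin $B$ down to a single path, which is exactly a branch.

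For computations, $\phi$ and $\sigma$ are function-valued second-order variables, and I rely on the coding convention (atomic formulas ``$\phi(u)=q$'' and ``$\sigma(u)=a$'') already introduced in the preliminaries. The formula $\mathrm{Comp}(\phi,\sigma)$ asserts (i) the initial condition $\exists x\,(\mathrm{root}(x) \wedge \bigvee_{q \in Q_I}\phi(x)=q)$ and (ii) local consistency at every node: for all $v,v_l,v_r$ with $E_l(v,v_l)$ and $E_r(v,v_r)$, the finite disjunction $\bigvee_{(q,a,q_l,q_r)\in\delta}(\phi(v)=q \wedge \sigma(v)=a \wedge \phi(v_l)=q_l \wedge \phi(v_r)=q_r)$ holds. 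Finiteness of $Q_I$ and of $\delta$ guarantees these are genuine MSO-formulas.

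The only step needing real care is the acceptance condition, so I expect it to be the main obstacle: the parity condition must become an MSO property of the pair $(\phi,B)$. Writing $C$ for the finite image of $\mathbb{C}$ and reading ``$\mathbb{C}(\phi(y))=c$'' as shorthand for the finite disjunction $\bigvee_{q:\,\mathbb{C}(q)=c}\phi(y)=q$, I would first define, for each $c \in C$, the predicate $\mathrm{Inf}_c(B,\phi) \equiv \forall x\,(x \in B \to \exists y\,(y \in B \wedge x \leq y \wedge \mathbb{C}(\phi(y))=c))$, stating that color $c$ recurs along $B$. Because $B$ is an infinite chain, ``for every node of $B$ there is a later node of $B$ of color $c$'' is equivalent to ``$c$ occurs infinitely often on $B$,'' which is the point one must verify. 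The branch $B$ then satisfies the parity condition iff the largest recurring color is even, namely $\bigvee_{c \in C,\ c\,\mathrm{even}}(\mathrm{Inf}_c(B,\phi) \wedge \bigwedge_{c' \in C,\ c'>c}\neg\,\mathrm{Inf}_{c'}(B,\phi))$, a finite boolean combination over the fixed set $C$. Finally, $\phi$ is an accepting computation on $\sigma$ iff $\mathrm{Comp}(\phi,\sigma)$ holds and every set $B$ that is a branch satisfies this parity disjunction, which is expressible since ``$B$ is a branch'' was already shown definable.
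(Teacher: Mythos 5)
Your proof is correct: each formula you exhibit (the transitive-closure definition of $\leq$, the three-condition characterization of branches, the local-consistency encoding of computations, and the $\mathrm{Inf}_c$-based parity condition quantified over all branches) does define the intended relation, and the key equivalence you flag (``color recurs beyond every node'' iff ``color occurs infinitely often'') is exactly the point that needs checking. The paper offers no proof at all~-- it declares the lemma ``easy to verify''~-- and your constructions are precisely the standard ones it implicitly has in mind, so there is nothing that diverges from the paper's (omitted) argument.
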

The following two fundamental theorems were  proved by Rabin in his famous 1969 paper \cite{Rabin}.
\begin{thm}[Rabin \cite{Rabin}]
	A tree language is regular iff it is MSO-definable in the unlabeled binary tree structure.
\end{thm}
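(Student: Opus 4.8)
The plan is to prove both implications, treating the forward direction (regular $\Rightarrow$ definable) as a short consequence of the preceding Lemma, and devoting the bulk of the argument to the converse, where the real difficulty lies.

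For the direction that every regular language is MSO-definable, I would invoke directly the Lemma above, which records that the relation ``$\phi$ is an accepting computation of $\cA$ on the tree $\sigma$'' is MSO-definable in the unlabeled binary tree. A $\Sigma$-labeled tree is coded by a free function variable $\sigma$ (as in the Coding paragraph), and a computation is coded by a function variable $\phi : \{l,r\}^* \to Q_\cA$; since existential quantification over such a function variable stays within MSO (via the coding of functions by tuples of set variables), the language $L(\cA)$ is defined by $\psi(\sigma) := \exists \phi\,[\phi \text{ is an accepting computation of } \cA \text{ on } \sigma]$. Indeed $t \in L(\cA)$ iff $ACC(\cA,t) \neq \emptyset$ iff the unlabeled tree satisfies $\psi(t)$, so nothing beyond the Lemma is needed here.

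For the converse, that every MSO-definable language is regular, I would proceed by induction on the structure of the formula. To push the induction through quantifier alternations I would not restrict to a single free labeling variable $\sigma$, but work with formulas $\psi(\sigma, X_1, \dots, X_n)$ carrying additional free set variables, and associate to each the set of its satisfying assignments, coded as trees over the expanded alphabet $\Sigma \times \{0,1\}^n$ whose extra bits mark membership in $X_1, \dots, X_n$. I would then check that: (i) the atomic formulas built from $E_l$, $E_r$, $=$ and $\in$ define regular languages over this alphabet via small explicit automata; (ii) conjunction and disjunction correspond to intersection and union, obtained by products and sums of parity tree automata; (iii) existential quantification $\exists X_i$ corresponds to projecting away one alphabet coordinate, realized by a nondeterministic PTA that guesses the erased bit; and (iv) negation corresponds to complementation. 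Items (i)--(iii) are routine automaton constructions.

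The main obstacle is item (iv): closure of the class of languages accepted by parity tree automata under complementation, which is where all the depth of the theorem resides. I would obtain it through the game-theoretic characterization of acceptance: membership of $t$ in $L(\cA)$ is equivalent to the existence of a winning strategy for the \emph{Automaton} player in a parity game played on $t$ against a \emph{Pathfinder} who selects a branch, with the parity condition read along the chosen branch. By positional determinacy of parity games, exactly one of the two players wins; dualizing the game---swapping the roles and replacing the winning condition by its complement, which is again a parity condition after incrementing all colors by one---yields a game whose Automaton-player winning positions describe $\overline{L(\cA)}$, from which a PTA for the complement can be read off. The delicate point, and historically the hard core of Rabin's result, is to perform this dualization at the level of automata, converting ``for every Pathfinder strategy the Automaton wins'' into ``there exists an accepting computation of a new parity tree automaton''; positional determinacy is exactly what makes this encoding possible. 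I would either reproduce this construction or cite Rabin's complementation theorem for it.
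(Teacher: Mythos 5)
The paper never proves this statement: it is quoted as one of Rabin's two fundamental theorems with a citation to \cite{Rabin}, so there is no in-paper argument to compare yours against. Judged on its own merits, your outline is the classical one. The forward direction is correct and, as you say, needs nothing beyond the preceding lemma: code the labeling $\sigma$ and the computation $\phi$ by function (i.e.\ tuple-of-set) variables and existentially quantify over $\phi$. For the converse, your induction scheme~-- expanding the alphabet with $\{0,1\}^n$ coordinates for free set variables, explicit automata for atomic formulas, product/sum constructions for Boolean connectives, projection for $\exists X_i$, and complementation for negation~-- is the standard Thatcher--Wright scheme lifted to infinite trees, and you are right that the entire depth of the theorem sits in the complementation step. (Minor omission: first-order quantifiers also need handling, conventionally by relativizing first-order variables to singleton sets; this is routine.)

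The one place where your sketch papers over real difficulty is the dualization itself. Positional determinacy gives you that $t \notin L(\cA)$ iff Pathfinder has a \emph{positional} winning strategy in the membership game, and guessing that strategy as a $Q \times Q \rightarrow \{l,r\}$ labeling of the tree is indeed easy for a nondeterministic automaton. But the new automaton must then \emph{verify} that the guessed strategy defeats every Automaton play~-- a universal condition over all state sequences consistent with the strategy along each branch. Turning that per-branch universal $\omega$-regular condition into something a single nondeterministic parity tree automaton can check requires determinizing an $\omega$-word automaton (McNaughton's theorem) and running the resulting deterministic parity word automaton down every branch; determinacy alone does not supply this step, it only supplies the existence of the strategy. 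Since you explicitly reserve the option of citing Rabin's complementation theorem at exactly this point, the proposal stands as a correct high-level sketch, but a self-contained write-up would need the determinization ingredient made explicit rather than absorbed into the phrase ``can be read off.''
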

%
A labeled tree is regular iff it has finitely many different subtrees. An equivalent definition is:
a tree is regular iff its labeling is MSO-definable \cite{Rabin}. Hence, for every regular $\Sigma$-labeled tree $t_0$ there is an MSO-formula $\psi_{t_0}(\sigma^\Sigma$), where $\sigma^\Sigma$ is the coding of $\{l,r\}^* \rightarrow \Sigma$, that is satisfied by $t: \{l,r\}^* \rightarrow \Sigma$ iff $t=t_0$.
\begin{thm}[Rabin's basis theorem \cite{Rabin}]
	Any non-empty regular tree language contains a regular tree. 
\end{thm}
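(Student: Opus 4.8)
The plan is to reduce nonemptiness of $L$ to a parity game and extract a regular witness tree from a memoryless winning strategy. First I would use Rabin's theorem (stated above) to fix a PTA $\cA = (Q, \Sigma, Q_I, \delta, \mathbb{C})$ with $L(\cA) = L$; since $L$ is non-empty there is a tree in $L$, and hence an accepting computation of $\cA$ on that tree.

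Next I would set up the \emph{emptiness game} $G_\cA$, whose positions are the states of $\cA$. From a position $q$ the existential player (call her Eve) chooses a letter $a \in \Sigma$ together with a transition $(q, a, q_l, q_r) \in \delta$, and the universal player then selects a direction $d \in \directions$, moving the token to $q_d$. An infinite play thus traces a sequence of states $q_0 q_1 \dots$, and Eve wins if the maximal value occurring infinitely often in $\mathbb{C}(q_0)\mathbb{C}(q_1)\dots$ is even. A routine argument shows that Eve wins from some $q \in Q_I$ iff $L(\cA) \neq \emptyset$: an accepting computation $\phi$ on a tree $t$ yields a winning strategy (to a direction $d$, Eve replies with the transition witnessed at the current node), and conversely each play consistent with a winning strategy corresponds to a branch whose state sequence satisfies the parity condition.

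The key step is to invoke \emph{positional (memoryless) determinacy of parity games}: since $G_\cA$ is a parity game and Eve wins from some initial state, she has a positional winning strategy $s$ assigning to each state $q$ a fixed choice $s(q) = (a_q, q_l, q_r)$ with $(q, a_q, q_l, q_r) \in \delta$. I would then use $s$ to define a tree $t$ and a computation $\phi$ by induction on $\directions^*$: set $\phi(\epsilon) := q_0$ for a winning initial state, and whenever $\phi(v) = q$ put $t(v) := a_q$, $\phi(v \cdot l) := q_l$, and $\phi(v \cdot r) := q_r$. Every transition used lies in $\delta$, so $\phi$ is a computation of $\cA$ on $t$; and along any branch the state sequence is a play consistent with $s$, hence satisfies the parity condition, so $\phi$ is accepting and $t \in L(\cA) = L$.

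Finally, regularity of $t$ follows from positionality: the subtree $t_{\geq v}$ depends only on the state $\phi(v)$, because $s$ is a function of the current state alone. Since $Q$ is finite, $t$ has at most $|Q|$ distinct subtrees and is therefore regular, which completes the proof. The main obstacle is the appeal to positional determinacy of parity games; the remaining construction is a direct induction.
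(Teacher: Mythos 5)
Your proof is correct. The paper itself gives no proof of this theorem---it is stated with a citation to Rabin's 1969 paper---so there is nothing to compare line by line; but your argument is the standard modern game-theoretic proof, and it is fully compatible with the paper's conventions: the paper \emph{defines} regular languages as those accepted by a PTA (so you do not even need the MSO characterization to fix $\cA$), it defines regularity of a tree by ``finitely many distinct subtrees,'' which is exactly what your positionality argument delivers, and it relies on positional determinacy of parity games elsewhere (the membership game of Section 4, with Pathfinder positions colored $0$ just as your intermediate positions must be, and the appendix remark sketching a ``logic-free'' proof of Claim 4.12 in the same spirit). The only point worth making explicit is the coloring of the intermediate positions of the emptiness game (e.g.\ color $0$), so that the winning condition you state on the induced state sequence is literally a parity condition on a finite game graph; with that, positional determinacy applies and the rest of your construction---unfolding the positional strategy into a tree and an accepting computation, and bounding the number of distinct subtrees by $|Q|$---is sound. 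Note also that the paper's own appendix proof of Claim 4.12 \emph{uses} Rabin's basis theorem (applied to the language of winning strategies), so your direct game-theoretic proof of the basis theorem is the right order of dependencies and avoids any circularity.
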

\paragraph{Choice Function}
A choice function is a mapping that assigns to each non-empty set of nodes one element from the set.

\begin{thm}[Gurevich and Shelah \cite{gurevich1983rabin}] \label{theorem-MSO-choice-function}
	There is no MSO-definable choice function on the full-binary tree.
\end{thm}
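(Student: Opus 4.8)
The plan is to argue by contradiction using the composition method, phrased through Ehrenfeucht--Fra\"iss\'e games. Suppose $\psi(X,x)$ were an MSO-formula defining a choice function, and let $q$ be its quantifier rank. For a set $A\subseteq\{l,r\}^*$ and a node $a$, write $(t,A,a)$ for the full-binary tree expanded by the unary predicate $A$ and the marked element $a$, and let $\equiv_q$ denote elementary equivalence up to quantifier rank $q$. Since there are only finitely many rank-$q$ types and $\equiv_q$-equivalent structures agree on every rank-$q$ formula, it suffices to construct a \emph{single} nonempty set $X$ with the following homogeneity property: no element of $X$ realizes a rank-$q$ type that is unique among the elements of $X$, i.e.\ for every $a\in X$ there is $b\in X$ with $b\neq a$ and $(t,X,a)\equiv_q(t,X,b)$. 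Indeed, a choice function must mark exactly one $c\in X$ by $\psi(X,\cdot)$; interpreting $X$ and $x$ turns $\psi$ into a rank-$q$ assertion, so the $\equiv_q$-twin $b\neq c$ of the chosen element would also satisfy $\psi(X,b)$, contradicting uniqueness.

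The subtlety is that $(\{l,r\}^*,E_l,E_r)$ is rigid, so one cannot appeal to automorphisms: the twins must be indistinguishable \emph{only} at the bounded rank $q$, while $X$ is present as a parameter and hence can be used to define auxiliary nodes. A rank-$q$ formula can navigate a bounded number of edges from the root and from $x$, and the parameter $X$ lets it define, for example, the meet $a\wedge b$ of two marked elements, after which ``does $x$ descend to the left or to the right of the meet'' becomes expressible. This is precisely why choice on two incomparable nodes, and on any set possessing a rank-$q$-definable extremal element (the lexicographically least node, or the unique $X$-node that is a child of the root), \emph{is} MSO-definable. The target set $X$ must therefore be infinite and free of any such distinguished element.

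I would obtain such an $X$ from a self-similar regular antichain together with a Duplicator strategy. Take the labelled tree $t^X$ to be a self-similar (finitely generated) configuration, so that the pattern of $X$ seen below every node of the meet-skeleton of $X$ is the same, and pump the skeleton to a depth far exceeding $q$. Then, from any marked leaf, the top segment of its path (read near the root), the bottom segment (read near $x$), and the surrounding occurrences of $X$ all coincide, the only differences lying in a ``middle zone'' beyond the reach of $q$ rounds. Two leaves whose paths agree on these bounded top and bottom windows are shown to be $\equiv_q$ by a winning Duplicator strategy in the $q$-round game on $(t,X,a)$ and $(t,X,b)$: Duplicator answers each Spoiler move by the corresponding node in the mirror copy of the self-similar block, and self-similarity preserves the induced $\leq$, $E_l$, $E_r$ relations and membership in $X$ among the boundedly many pebbles. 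Arranging the pumped blocks symmetrically ensures that every leaf has such a twin.

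The main obstacle is exactly this last step: designing the self-similar antichain and the Duplicator strategy so that every marked element has a rank-$q$ twin despite the rigidity of the tree and the extra definitional power supplied by $X$ (meets, left/right-of-meet, and bounded navigation), while simultaneously avoiding the accidental creation of any distinguished node. The composition method reduces the global MSO evaluation to finite bookkeeping along the skeleton, but verifying that the pumped symmetric configuration genuinely admits a Duplicator response to every Spoiler move is the combinatorial heart of the Gurevich--Shelah theorem. Everything else---the finiteness of rank-$q$ types, the reduction to the homogeneity property, and the final clash with uniqueness of the chosen element---is routine.
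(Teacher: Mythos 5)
First, a point of reference: the paper does not prove this statement at all. Theorem \ref{theorem-MSO-choice-function} is imported from Gurevich--Shelah \cite{gurevich1983rabin} (a modern automata-theoretic proof appears in \cite{carayol2007mso}), so there is no in-paper argument to compare against, and your attempt has to stand on its own. It does not, because the step you defer is the entire theorem.

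Your reduction is correct but it is the easy half, and in fact it is reversible, which makes the gap precise. Given $\psi(X,x)$ of rank $q$, it indeed suffices to exhibit one nonempty $X$ in which every element has a distinct $\equiv_q$-twin; the uniqueness clash you describe is routine. But conversely, since there are only finitely many rank-$q$ types $\tau$ and each is captured by a rank-$q$ formula $\theta_\tau(X,x)$, the property ``$x$ is the \emph{unique} element of $X$ realizing $\tau$'' is expressible with only constant overhead in rank; so if for some $q$ every nonempty $X$ had a type-isolated element, one could define a choice function by selecting the unique realizer of the first isolated type in a fixed enumeration of the types. Hence the homogeneity statement you reduce to (for every $q$ there is a nonempty $X$ with no rank-$q$-isolated element) is essentially \emph{equivalent} to the theorem. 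Everything you label routine is routine; everything you label the ``combinatorial heart'' is the theorem itself, and it is exactly what your proposal never carries out. The ``self-similar pumped antichain'' is not specified, and the Duplicator strategy of ``answering in the mirror copy'' is precisely the move that rigidity plus the parameter $X$ threatens to break: Spoiler can pebble meets of previously placed pebbles, play near the root, or play set moves that separate the supposed mirror blocks, and showing Duplicator survives $q$ rounds requires either a genuine composition-of-types argument along the tree (Shelah-style, as in the original proof) or the pumping argument on concrete families of sets used by Carayol--L\"oding. As written, the proposal replaces the theorem by an equivalent unproved statement and stops, so it cannot be accepted as a proof.
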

The following lemma follows from Theorem \ref{theorem-MSO-choice-function}. It  plays a key role in our  proofs in Section~\ref{sect:not-finite},
where sufficient conditions are provided  for a language to be
 not finitely ambiguous.

\begin{lem} \label{lemma-definable-selection-functions}
	There is no MSO-definable function that assigns to every non-empty antichain $Y$ a finite non-empty subset $X\subseteq Y$.
\end{lem}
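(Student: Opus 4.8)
The plan is to reduce the non-existence of such a selection function to the Gurevich--Shelah theorem (Theorem~\ref{theorem-MSO-choice-function}) by a contradiction argument. Suppose, for the sake of contradiction, that there is an MSO-formula $\psi(Y, X)$ which, in the full-binary tree, defines a function $f$ sending every non-empty antichain $Y$ to a finite non-empty subset $X = f(Y) \subseteq Y$. The goal is to build from $\psi$ an MSO-definable choice function, i.e.\ a formula that selects exactly one node from an \emph{arbitrary} non-empty set of nodes, contradicting Theorem~\ref{theorem-MSO-choice-function}.

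The first step is to pass from an arbitrary non-empty set $Z$ of nodes to an antichain. A natural and MSO-definable way to do this is to take the set of $\leq$-minimal elements of $Z$; by the earlier lemma the relation $\leq$ and hence ``$y$ is a minimal element of $Z$'' are MSO-definable, and the minimal elements of any non-empty $Z$ form a non-empty antichain $Y_Z$. Applying the hypothesised function $f$ to $Y_Z$ yields a \emph{finite} non-empty subset $X = f(Y_Z)$ of $Y_Z \subseteq Z$. At this point I have reduced the problem from an arbitrary set to a finite non-empty set of pairwise-incomparable nodes, and the whole construction so far is MSO-definable since it is a composition of MSO-definable relations.

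The remaining task is to pick a single node out of the finite non-empty antichain $X$ in an MSO-definable way. This is where the finiteness in the hypothesis is essential: on a finite set of nodes one can use the natural linear order induced by the tree structure. Concretely, the lexicographic (Kleene--Brouwer) order on $\{l,r\}^*$, comparing two incomparable nodes by the direction ($l$ before $r$) taken at their first point of divergence, is MSO-definable, and on any \emph{finite} non-empty set it has a least element. I would therefore define the chosen node as the lexicographically least element of $X$; ``$x$ is the lex-least element of the finite set $X$'' is expressible in MSO. Composing these three MSO-definable steps---minimal elements, then $f$, then lex-least---gives an MSO-formula selecting one element from every non-empty $Z$, which is exactly an MSO-definable choice function and contradicts Theorem~\ref{theorem-MSO-choice-function}.

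I expect the main obstacle to be verifying that the lexicographic order is well-behaved and MSO-definable precisely on finite sets, and making sure the finiteness of $X = f(Y)$ is genuinely used---if $X$ were allowed to be infinite, a lex-least element need not exist (an infinite antichain can be a strictly descending lex-chain, e.g.\ $r, lr, llr, \dots$), so the argument would break, which is consistent with the fact that a full choice function is impossible. A secondary point to check is that each intermediate step stays within MSO: the minimal-elements map and the lex-least map are routine, so the only non-trivial ingredient is the assumed definability of $f$, which drives the contradiction.
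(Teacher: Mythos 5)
Your proposal is correct and follows essentially the same route as the paper's own proof: reduce to the Gurevich--Shelah theorem by composing the MSO-definable passage to $\leq$-minimal elements, the assumed finite-subset selector, and the lexicographically least element of the resulting finite set (the paper isolates this last step as its ``choice function over finite sets'' claim). Your remark about why finiteness is essential~--- an infinite antichain such as $r, lr, llr, \dots$ has no lex-least element~--- is exactly the right sanity check, even though the paper leaves it implicit.
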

\begin{proof}
	Assume, for the sake of contradiction, that a function that returns a finite non-empty  subset for each non-empty antichain is MSO-definable in the unlabeled full-binary tree, by an MSO-formula $FiniteAntichainSubset(X,Y)$.
	
	\begin{claim}[Choice function over finite sets] \label{claim-finite-choice}
		There is an MSO-definable function that assigns to each non-empty finite set $X \subseteq \{l,r\}^*$ an element $x \in X$.
	\end{claim}
	\begin{claimproof}
          We first define a lexicographic order ``$\leq_{\mathit{lex}}$'' on $\{l,r\}^*$, by $u \leq_{\mathit{lex}} v$ iff $u$ is a prefix of $v$ or $u = w \cdot l \cdot u'$ and $v = w \cdot r \cdot v'$ for some $w, u', v' \in \{l,r\}^*$.
		
                It is easy to verify that $\leq_{\mathit{lex}}$ is MSO-definable in the unlabeled full-binary tree. $\leq_{\mathit{lex}}$ is a linear order, and therefore each non-empty finite set has   exactly one $\leq_{\mathit{lex}}$-minimal element.
                We conclude that a finite set choice function is definable by $FiniteChoice(X,x) := $``$x$ is the $\leq_{\mathit{lex}}$-minimal element in $X$''.
	\end{claimproof}
	
	Let $FiniteChoice(X,x)$ be an MSO-formula that defines a function as in Claim \ref{claim-finite-choice}.
	We will use formulas $FiniteAntichainSubset(X,Y)$ and $FiniteChoice(X,x)$ to define a choice function by an MSO-formula $Choice(X,x)$ which is the conjunction of the following conditions:
	\begin{enumerate}
		\item $\exists Z:$ ``$Z$ is the set of $\leq$-minimal elements in $X$''
		\item $\exists Y: FiniteAntichainSubset(Z, Y)$
		\item $FiniteChoice(Y,x)$
	\end{enumerate}
	
	For each non-empty set $X$ there is a unique subset $Z \subseteq X$ of the $\leq$-minimal elements in $X$. This set is a non-empty antichain, and therefore $FiniteAntichainSubset(Z,Y)$ returns a finite subset $Y \subseteq Z$. Therefore, $FiniteChoice(Y,x)$ returns an element in $Y$.
	We conclude that $Choice(X,x)$ returns an element $x \in X$ and therefore defines a choice function in the unlabeled full-binary tree, in contradiction to Theorem \ref{theorem-MSO-choice-function}.
\end{proof}


\section{Simple Properties of Automata and Languages} \label{sect:simple}
In this section some simple lemmas are collected.

\begin{lem} \label{lemma-ambiguity-of-union-and-intersection}
	Let $\cA_1 = (Q_1,\Sigma_1, Q^1_{I_1}, \delta_1, \mathbb{C}_1)$ and $\cA_2 = (Q_2,\Sigma_2, Q^2_{I_1}, \delta_2, \mathbb{C}_2)$ be two PTA. Then:
	\begin{enumerate}
		\item There exists an automaton $\cB$ such that $L(\cB) = L(\cA_1) \cup L(\cA_2)$ and for each $t \in L(\cA_1) \cup L(\cA_2)$, $|ACC(\cB, t)| \leq |ACC(\cA_1, t)| + |ACC(\cA_2, t)|$.
		
		\item There exists an automaton $\cB$ such that $L(\cB) = L(\cA_1) \cap L(\cA_2)$ and for each $t \in L(\cA_1) \cap L(\cA_2)$, $|ACC(\cB, t)| \leq |ACC(\cA_1, t)| \cdot |ACC(\cA_2, t)|$.
	\end{enumerate}
\end{lem}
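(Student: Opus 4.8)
The statement is a standard product/union construction for parity tree automata, and the proof is by explicit construction of $\cB$ in each case, with the ambiguity bound tracked through a bijection between accepting computations of $\cB$ and pairs (or disjoint copies) of accepting computations of the $\cA_i$.

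For item (1), the plan is to take $\cB$ to be essentially the disjoint union of $\cA_1$ and $\cA_2$. Concretely, I would set $Q_\cB := (Q_1 \times \{1\}) \cup (Q_2 \times \{2\})$, let the initial states be $(Q^1_I \times \{1\}) \cup (Q^2_I \times \{2\})$, and let $\delta_\cB$ contain $((q,i), a, (q',i), (q'',i))$ exactly when the corresponding transition lies in $\delta_i$; the coloring $\mathbb{C}_\cB(q,i) := \mathbb{C}_i(q)$ preserves the parity condition branch-by-branch. Since the root of any computation of $\cB$ lies in a single copy $i$, and transitions never switch copies, every computation of $\cB$ on $t$ is a tagged copy of a computation of exactly one $\cA_i$, and it is accepting for $\cB$ iff the underlying computation is accepting for $\cA_i$. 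This gives a bijection $ACC(\cB,t) \cong ACC(\cA_1,t) \sqcup ACC(\cA_2,t)$, hence $L(\cB)=L(\cA_1)\cup L(\cA_2)$ and the additive cardinality bound, including the infinite-cardinal cases.

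For item (2), the plan is the usual product automaton that runs both $\cA_1$ and $\cA_2$ synchronously: $Q_\cB := Q_1 \times Q_2$, initial states $Q^1_I \times Q^2_I$, and $((q_1,q_2),a,(q_1',q_2'),(q_1'',q_2'')) \in \delta_\cB$ iff the $i$-th projection lies in $\delta_i$ for $i=1,2$. A computation $\phi$ of $\cB$ is then exactly a pair $(\phi_1,\phi_2)$ of computations of $\cA_1,\cA_2$ on the same $t$, via the projections $\phi_i(v) := \pi_i(\phi(v))$, and this correspondence is a bijection. The one genuine subtlety is the acceptance condition: a single product coloring that is the max of the two component colorings does \emph{not} capture ``both parity conditions hold,'' since the conjunction of two parity conditions is not in general a parity condition. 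The cleanest fix is to let $\cB$ be a parity automaton obtained by taking the standard product of the two parity acceptance conditions and applying the Rabin-index/latest-appearance-record style construction that realizes the intersection of parity conditions as a single parity condition on an enlarged state space (the product $Q_1\times Q_2$ augmented with a bounded record component). Crucially this augmentation is \emph{deterministic} in the record component given the underlying $(\phi_1,\phi_2)$, so it introduces no extra branching: the map $(\phi_1,\phi_2)\mapsto \phi$ remains a bijection onto $ACC(\cB,t)$, and $\phi$ is accepting iff both $\phi_1$ and $\phi_2$ are. This yields $L(\cB)=L(\cA_1)\cap L(\cA_2)$ and the multiplicative bound $|ACC(\cB,t)| = |ACC(\cA_1,t)|\cdot|ACC(\cA_2,t)|$.

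The main obstacle, and the only nonroutine point, is precisely the acceptance condition in item (2): expressing the conjunction of two parity conditions as one parity condition while keeping the computation-to-pair map bijective. I would emphasize that the record component is a function of the run so far (hence adds no nondeterminism), so the cardinality accounting is unaffected; the union case (1) is entirely routine by the disjoint-tagging argument.
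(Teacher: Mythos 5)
Your proposal is correct and follows essentially the same route as the paper: a disjoint-union automaton for (1), and for (2) a product automaton augmented with a \emph{deterministic} component that turns the conjunction of the two parity conditions into a single parity condition, with the determinism of that component guaranteeing the multiplicative bound on accepting computations. The only cosmetic difference is how the deterministic gadget is obtained---the paper invokes McNaughton's theorem to get a deterministic parity word automaton over the alphabet of color pairs, while you propose an explicit latest-appearance-record construction; both serve the identical purpose.
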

\begin{proof}
	(1) Assume that $Q_1$ and $Q_2$ are disjoint, and let $\cB := (Q_1 \cup Q_2,\Sigma_1 \cup \Sigma_2, Q^1_I \cup Q^2_I, \delta_1 \cup \delta_2, \mathbb{C}_1 \cup \mathbb{C}_2)$. It is clear that $L(\cB) = L(\cA_1) \cup L(\cA_2)$.
	
	Let $t \in L(\cB)$. By definition of $\cB$, for each $\phi \in ACC(\cB, t)$ we either have $\phi \in ACC(\cA_1, t)$ or $\phi \in ACC(\cA_2, t)$. Therefore, we obtain $|ACC(\cB, t)| = |ACC(\cA_1, t)| + |ACC(\cA_2, t)|$.
	
	(2)
	It is easy to verify that there is an MSO-formula over $\omega$-words that holds for $w= (c_1, c'_1), \dots, (c_i, c'_i), \dots \in (Image(\mathbb{C}_1) \times Image(\mathbb{C}_2))^\omega$ iff the maximal color that appears infinitely often in the first coordinate of $w$ and the maximal color that appears infinitely often in the second coordinate of $w$ are both even. Therefore (by McNaughton's Theorem \cite{mcnaughton1966testing}) there is a deterministic PWA $\cD = (Q_\cD, \Sigma_\cD, q_I^\cD, \delta_\cD, \mathbb{C}_\cD)$ over alphabet $\Sigma_\cD = Image(\mathbb{C}_1) \times Image(\mathbb{C}_2)$ such that $w \in L(\cD)$ iff the maximal color that appears infinitely often in the first coordinate of $w$ and the maximal color that appears infinitely often in the second coordinate of $w$ are both even.
	
	We will use the automata $\cA_1, \cA_2$ and $\cD$ to define a PTA $\cB := (Q_\cB, \Sigma_\cB, Q_I^\cB, \delta_\cB, \mathbb{C}_\cB)$ which accepts $L(\cA_1) \cap L( \cA_2)$.
	
	\begin{itemize}
		\item $Q_\cB = Q_1 \times Q_2 \times Q_\cD$
		\item $\Sigma_\cB:=\Sigma_1 \cap \Sigma_2$
		\item $Q_I^\cB := Q_I^1 \times Q_I^2 \times \{q_I^\cD\}$
		\item $((q,p,s), a, (q_1, p_1, s_1), (q_2, p_2, s_2)) \in \delta_\cB$ iff $(q,a,q_1,q_2) \in \delta_1$, $(p,a,p_1,p_2) \in \delta_2$, and $s_1 = s_2 = \delta_\cD(s, (\mathbb{C}_1(q), \mathbb{C}_2(p)))$.
		\item $\mathbb{C}_\cB(q_1, q_2, p) := \mathbb{C}_\cD(p)$
	\end{itemize}
It is easy to verify that 	$L(\cB) = L(\cA_1) \cap L(\cA_2)$.

	Assume, for the sake of contradiction, that there exists $t$ such that $|ACC(\cB, t)| > |ACC(\cA_1, t)| \cdot |ACC(\cA_2, t)|$. Since $\cD$ is deterministic, it follows that there is a computation in $ACC(\cB, t)$ such that either the projection of the first coordinate of $\phi$ on $Q_1$, denoted $\phi_1$, is not in $ACC(\cA_1, t)$ or the projection of the second coordinate of $\phi$ on $Q_2$, denoted $\phi_2$, is not in $ACC(\cA_2, t)$. Assume w.l.o.g. that $\phi_1 \notin ACC(\cA_1, t)$. Therefore, there is a tree branch $\pi = v_0, v_1, \dots$ such that the maximal color that $\mathbb{C}_1$ assigns to the states that occurs infinitely often in $\phi_1(\pi)$ is odd. By definition of $\cD$ we conclude that $w := (c_0, c_0'), (c_1, c_1'), \dots \notin L(\cD)$, where $c_i := \mathbb{C}_1(\phi_1(v_i))$ and $c_i' := \mathbb{C}_2(\phi_2(v_i))$. Hence, by definition of $\cB$ we conclude that the sequence of colors that $\mathbb{C}_\cB$ assigns to the states $\phi(\pi)$ is exactly $w$, and therefore $\phi \notin ACC(\cB, t)$~-- a contradiction.
\end{proof}

From Lemma~\ref{lemma-ambiguity-of-union-and-intersection}, we obtain:
\begin{cor} \label{corollary-ambiguity-closure}
	Boundedly, finitely and countably ambiguous tree languages are closed under finite union and intersection.
\end{cor}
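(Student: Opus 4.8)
The plan is to derive Corollary~\ref{corollary-ambiguity-closure} directly from Lemma~\ref{lemma-ambiguity-of-union-and-intersection} by observing that the cardinality bounds in that lemma are preserved under each of the relevant degree-of-ambiguity classes. The key point is that the lemma gives, for any two PTA $\cA_1,\cA_2$, an automaton $\cB$ for the union with $|ACC(\cB,t)|\leq |ACC(\cA_1,t)|+|ACC(\cA_2,t)|$ and an automaton for the intersection with $|ACC(\cB,t)|\leq |ACC(\cA_1,t)|\cdot|ACC(\cA_2,t)|$; so I only need to check that ``bounded,'' ``finite,'' and ``countable'' are each closed under the arithmetic operations $+$ and $\cdot$ on cardinalities.

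First I would set up the argument. Let $L_1,L_2$ be tree languages in one of the three classes. By definition of the degree of ambiguity of a \emph{language}, there are automata $\cA_1,\cA_2$ with $L(\cA_i)=L_i$ witnessing the class: for the bounded case, $\cA_i$ is $k_i$-ambiguous for some $k_i\in\nat$; for the finite case, $ACC(\cA_i,t)$ is finite for every $t$; for the countable case, $ACC(\cA_i,t)$ is countable for every $t$. Then I apply Lemma~\ref{lemma-ambiguity-of-union-and-intersection} to obtain automata $\cB_\cup$ and $\cB_\cap$ accepting $L_1\cup L_2$ and $L_1\cap L_2$ respectively, together with the stated pointwise cardinality bounds.

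Next I would carry out the three bookkeeping checks, one per class, for both operations simultaneously. For the bounded case: $|ACC(\cB_\cup,t)|\leq k_1+k_2$ and $|ACC(\cB_\cap,t)|\leq k_1\cdot k_2$, so $\cB_\cup$ is $(k_1+k_2)$-ambiguous and $\cB_\cap$ is $(k_1 k_2)$-ambiguous, hence both languages are boundedly ambiguous. For the finite case: the sum and product of two finite cardinals are finite, so both $ACC(\cB_\cup,t)$ and $ACC(\cB_\cap,t)$ are finite for every $t$, giving finitely ambiguous automata. For the countable case: the sum and product of two countable cardinals are countable (a standard fact about $\aleph_0$), so both automata are countably ambiguous. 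In each case the constructed automaton witnesses membership of the union (respectively intersection) in the same class, which is exactly the claim.

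There is essentially no obstacle here, since all the real work — the automata constructions and their ambiguity bounds — is already done in Lemma~\ref{lemma-ambiguity-of-union-and-intersection}; the corollary is a routine closure consequence. The only mild subtlety worth flagging is that the definition of $da(L)$ takes a \emph{minimum} over accepting automata, so I must argue at the level of the \emph{existence} of a witnessing automaton in the class rather than with a fixed canonical automaton; but this is immediate from the way the lemma produces $\cB$ from arbitrary $\cA_1,\cA_2$.
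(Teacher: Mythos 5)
Your proposal is correct and takes exactly the route the paper intends: the paper derives Corollary~\ref{corollary-ambiguity-closure} immediately from Lemma~\ref{lemma-ambiguity-of-union-and-intersection} with no further written argument, and your spelled-out checks (that sums and products of cardinals preserve boundedness, finiteness, and countability, and that a witnessing automaton suffices for the language-level claim) are precisely the routine bookkeeping the paper leaves implicit.
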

We often use implicitly the following simple Lemma.
\begin{lem}[Grafting]\label{lem:triv-composition}
	Let $\cA$ be an automaton, $t$, $t_1$ trees, $v \in \{l,r\}^*$ and $\phi\in ACC(\cA,t)$,
	and $\phi_1\in ACC(\cA_q,t_1)$.
	If $\phi(v)=q$, then $\phi [ \phi_1/v]$ is an accepting computation of $\cA$ on $t[t_1/v]$.
\end{lem}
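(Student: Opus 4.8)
The plan is to verify the two defining conditions of an accepting computation of $\cA$ on $t[t_1/v]$ directly from the definitions, by case analysis on node position relative to $v$. Write $t' := t[t_1/v]$ and $\psi := \phi[\phi_1/v]$, where $\psi$ is defined by $\psi(u) = \phi(u)$ when $v$ is not a prefix of $u$, and $\psi(v\cdot w) = \phi_1(w)$ for every $w \in \{l,r\}^*$. First I would check the local transition condition: for every node $u$, we need $(\psi(u), t'(u), \psi(u\cdot l), \psi(u\cdot r)) \in \delta_\cA$. Partition the nodes into three regions: nodes $u$ with $u \perp v$ or $u < v$ (strictly above or incomparable), the node $v$ itself, and nodes $v \cdot w$ strictly below $v$. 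In the first region both $\psi$ and $t'$ agree with $\phi$ and $t$, so the transition holds because $\phi$ is a computation of $\cA$ on $t$; one subtlety is the node $v$ itself, where $\psi(v) = \phi_1(\epsilon) = q$ by hypothesis, and $\psi(v\cdot l), \psi(v\cdot r)$ are determined by $\phi_1$, so this transition is exactly the root transition of $\phi_1$ and holds since $\phi_1$ is a computation of $\cA_q$ on $t_1$. In the strictly-below region the transition is an internal transition of $\phi_1$ on $t_1$, again valid.

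Next I would verify the initial-state condition $\psi(\epsilon) \in Q_I$. Since $v \neq \epsilon$ is impossible to assume in general, I must treat $v = \epsilon$ carefully: if $v = \epsilon$ then $\psi = \phi_1 \in ACC(\cA_q, t_1)$, and we would need $q \in Q_I$, which follows because $\phi(\epsilon) = q$ and $\phi(\epsilon) \in Q_I$ as $\phi$ is accepting for $\cA$. If $v \neq \epsilon$ then $\psi(\epsilon) = \phi(\epsilon) \in Q_I$ directly.

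Finally I would check the parity acceptance condition along every branch $\pi = u_0 u_1 \dots$. The key observation is that each branch either (a) never passes through $v$ (i.e.\ $v$ is not a prefix of any $u_i$ beyond a finite prefix, equivalently the branch avoids the cone above $v$), in which case the color sequence $\mathbb{C}(\psi(u_0))\mathbb{C}(\psi(u_1))\dots$ coincides with a branch of $\phi$ and is accepting because $\phi$ is accepting; or (b) the branch passes through $v$, so it has the form $u_0,\dots,u_m = v, v\cdot w_1, v\cdot w_2,\dots$, and its tail from $v$ onward is a branch of $\phi_1$ in $t_1$. Since acceptance depends only on the colors occurring infinitely often, the finite prefix before reaching $v$ is irrelevant, so the tail being an accepting branch of $\phi_1$ (which holds because $\phi_1 \in ACC(\cA_q, t_1)$) gives an even maximal infinitely-occurring color.

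This lemma is essentially a routine unwinding of definitions, so I do not expect a genuine obstacle; the only points requiring care are the boundary node $v$ (where the two computations must be glued so that $\phi_1$'s root state $q$ matches $\phi(v)$, which is exactly the hypothesis $\phi(v) = q$) and the acceptance argument for branches through $v$, where one must invoke that a finite prefix does not affect the parity condition. Everything reduces to the definitions of computation and accepting computation given for PTA.
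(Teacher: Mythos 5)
Your proof is correct and is precisely the routine definition-unwinding the paper has in mind: the paper states this Grafting Lemma without proof, treating it as immediate, and your case analysis (transitions strictly above/incomparable to $v$, at $v$, and below $v$; the initial-state check including the degenerate case $v=\epsilon$; and the split of branches into those avoiding the cone above $v$ and those passing through it for the parity condition) is the canonical argument. You also correctly identify the hypothesis $\phi(v)=q$ as exactly what makes the gluing valid at the boundary node, so there is nothing to add.
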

A similar lemma holds for general grafting. As an immediate consequence, we obtain the following lemma:
\begin{lem} \label{lemma-A_q-ambiguity}
$da(\cA) \geq da(\cA_q)$ for every useful state $q $ of $\cA$.
\end{lem}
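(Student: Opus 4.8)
The plan is to show that any accepting computation of the restricted automaton $\cA_q$ can be extended to an accepting computation of $\cA$ witnessing a run that uses $q$ at a node, thereby giving an injection from $ACC(\cA_q, t_1)$ into a fiber of accepting computations of $\cA$ on a suitable tree. Concretely, since $q$ is a useful state, the \emph{Grafting} Lemma~\ref{lem:triv-composition} supplies exactly the gluing mechanism needed. First I would fix a useful state $q$. By usefulness there exist a tree $t \in L(\cA)$, a computation $\phi \in ACC(\cA, t)$, and a node $v \in \{l,r\}^*$ with $\phi(v) = q$. This gives a fixed ``context'' into which arbitrary accepting $\cA_q$-computations can be plugged.

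Next I would define the map that realizes the inequality of degrees. Given any tree $t_1$ and any $\phi_1 \in ACC(\cA_q, t_1)$, Lemma~\ref{lem:triv-composition} guarantees that $\phi[\phi_1/v]$ is an accepting computation of $\cA$ on $t[t_1/v]$. The key observation is that the target tree $t[t_1/v]$ depends only on $t_1$ (with $t$, $v$ fixed), so for a fixed $t_1$ all these grafted computations live in $ACC(\cA, t[t_1/v])$. I would then argue that the assignment $\phi_1 \mapsto \phi[\phi_1/v]$ is injective: since two grafted computations agree outside the subtree rooted at $v$ and inside it equal $\phi_1$ and $\phi_1'$ respectively, distinct $\phi_1 \neq \phi_1'$ yield distinct images. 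Hence $|ACC(\cA_q, t_1)| \leq |ACC(\cA, t[t_1/v])|$ for every $t_1$, and taking a $t_1$ that maximizes the ambiguity of $\cA_q$ forces $\cA$ to be at least as ambiguous as $\cA_q$.

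Finally I would translate this cardinality bound into the statement $da(\cA) \geq da(\cA_q)$ at the level of the ordered degrees $1 < 2 < \dots < \mathit{finite} < \aleph_0 < 2^{\aleph_0}$. The argument is uniform across the degree scale: if $\cA$ is $k$-ambiguous then $|ACC(\cA, s)| \leq k$ for all $s$, so by the injection $|ACC(\cA_q, t_1)| \leq k$ for all $t_1$, making $\cA_q$ also $k$-ambiguous; the same monotone passage handles the boundedly/finitely/countably ambiguous cases, since each is a cardinality constraint preserved under the injection. Contrapositively, a large fiber for $\cA_q$ forces an equally large fiber for $\cA$, so $da(\cA)$ cannot be strictly below $da(\cA_q)$.

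The main obstacle, though it is minor, is the direction of the inequality and making the injection airtight rather than a mere surjection onto part of the target. One must be careful that the context $(t, \phi, v)$ is chosen once and held fixed, and that the injectivity really hinges on the two grafted computations being distinguishable on the subtree below $v$ — this is immediate because grafting replaces that subtree wholesale with $\phi_1$. I do not anticipate difficulty beyond verifying this injectivity and confirming that each degree in the order $i < j < \mathit{finite} < \aleph_0 < 2^{\aleph_0}$ is a monotone cardinality condition, which the above passage respects.
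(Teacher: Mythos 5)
Your proof is correct and follows exactly the route the paper intends: the paper derives this lemma as an ``immediate consequence'' of the Grafting Lemma~\ref{lem:triv-composition}, and your argument simply fills in the details of that derivation (fixing a context $(t,\phi,v)$ via usefulness, the injection $\phi_1 \mapsto \phi[\phi_1/v]$ into $ACC(\cA, t[t_1/v])$, and the monotone passage through the degree scale). No gaps.
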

%

\begin{cor} \label{corollary-A_Q-ambiguity}
	Let $\cA$ be a boundedly (respectively, finitely, countably) ambiguous PTA with a set $Q$ of useful states, and let $Q'\subseteq Q$. Then $\cA_{Q'}$ is boundedly (respectively, finitely, countably) ambiguous.
\end{cor}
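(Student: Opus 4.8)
The plan is to reduce Corollary~\ref{corollary-A_Q-ambiguity} to the single-initial-state case handled by Lemma~\ref{lemma-A_q-ambiguity}, using the decomposition of $\cA_{Q'}$ into the automata $\cA_q$ for $q \in Q'$. The key observation is that the set of initial states acts purely disjunctively: a computation $\phi$ is accepting for $\cA_{Q'}$ precisely when $\phi(\epsilon) \in Q'$ and $\phi$ satisfies the local transition and parity conditions, which are independent of the designated initial set. Consequently $ACC(\cA_{Q'}, t)$ is the disjoint union, over $q \in Q'$, of those accepting computations $\phi$ with $\phi(\epsilon) = q$; and for a fixed $q$ this latter set is exactly $ACC(\cA_q, t)$.

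The main step is then a cardinality estimate. For every tree $t$ I would write
\[
  |ACC(\cA_{Q'}, t)| \;=\; \sum_{q \in Q'} |ACC(\cA_q, t)| \;\leq\; \sum_{q \in Q} |ACC(\cA_q, t)|,
\]
the equality coming from the disjoint-union decomposition above and the inequality from $Q' \subseteq Q$. Now I invoke Lemma~\ref{lemma-A_q-ambiguity}, which gives $da(\cA) \geq da(\cA_q)$ for every useful state $q$; since $\cA$ is boundedly (respectively finitely, countably) ambiguous, each $\cA_q$ is also boundedly (respectively finitely, countably) ambiguous. In the bounded case each summand is at most some fixed $k$, so the finite sum is bounded by $|Q| \cdot k$; in the finite case a finite sum of finite cardinals is finite; in the countable case a finite sum of countable cardinals is countable. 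In all three cases the bound is uniform in $t$, which is exactly what the corresponding ambiguity notion requires.

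One technical point I would want to state carefully is that usefulness is inherited: the corollary's hypothesis is that $\cA$ has a set $Q$ of useful states, and Lemma~\ref{lemma-A_q-ambiguity} is phrased for useful states, so I should note that every $q \in Q' \subseteq Q$ is indeed useful and hence the lemma applies to it. I expect this to be the only subtle point; the rest is the routine disjoint-union bookkeeping. The genuinely load-bearing ingredient is Lemma~\ref{lemma-A_q-ambiguity} (and behind it the Grafting Lemma~\ref{lem:triv-composition}), which transfers the global ambiguity bound on $\cA$ to each single-state automaton $\cA_q$; once that is in hand the corollary is immediate, since it simply aggregates finitely many such bounds and the classes ``bounded,'' ``finite,'' and ``countable'' are each closed under finite sums of cardinals.
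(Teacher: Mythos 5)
Your proof is correct and follows exactly the route the paper intends: the paper states this corollary without proof as an immediate consequence of Lemma~\ref{lemma-A_q-ambiguity}, and the implicit argument is precisely your decomposition of $ACC(\cA_{Q'},t)$ into the disjoint union of $ACC(\cA_q,t)$ over $q\in Q'$, followed by the observation that bounded, finite, and countable cardinalities are closed under finite sums. Your attention to the usefulness hypothesis is also the right (and only) technical point to check.
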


\begin{lem} \label{lemma-subset-language-of-different-ambiguity}
	Let $L_1$ and $L_2$ be two tree languages such that $da(L_1) \neq da(L_2)$ and $L_1 \subseteq L_2$. Then, there exists a tree $t \in L_2 \setminus L_1$.
\end{lem}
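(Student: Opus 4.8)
The plan is to prove the contrapositive: if there is no tree in $L_2 \setminus L_1$, then in fact $da(L_1) = da(L_2)$, contradicting the hypothesis. The only structural fact I need is that the degree of ambiguity is, by its very definition $da(L) := \min_\cA \{da(\cA) \mid L(\cA) = L\}$, a function of the language alone and not of any particular accepting automaton.

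First I would assume, for the sake of contradiction, that $L_2 \setminus L_1 = \emptyset$, i.e.\ that no such tree $t$ exists. Combining this with the hypothesis $L_1 \subseteq L_2$ yields $L_2 \subseteq L_1$ as well, so the two languages coincide: $L_1 = L_2$.

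Since $da(\cdot)$ is defined solely in terms of the language being accepted, equal languages have equal degree of ambiguity; hence $da(L_1) = da(L_2)$. This directly contradicts the standing assumption $da(L_1) \neq da(L_2)$, completing the argument.

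I do not expect any genuine obstacle here: the statement is essentially a sanity-check lemma, and the entire content is the observation that $da$ is well-defined on languages. The only point requiring a moment's care is to invoke precisely the definition $da(L) = \min_\cA\{da(\cA)\mid L(\cA)=L\}$, so that the equality $L_1 = L_2$ immediately forces equality of the minimized quantities, rather than appealing to any property of specific witnessing automata.
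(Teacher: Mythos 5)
Your proof is correct and is essentially identical to the paper's own argument: assuming no tree lies in $L_2 \setminus L_1$ forces $L_1 = L_2$, hence $da(L_1) = da(L_2)$ because $da$ depends only on the language, contradicting the hypothesis. Nothing further is needed.
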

\begin{proof} 
	The lemma follows immediately, since otherwise we have $L_1 = L_2$ and therefore $da(L_1) = da(L_2)$, in contradiction to $da(L_1) \neq da(L_2)$.
\end{proof}

\begin{lem} \label{lemma-equivalent-automaton-with-one-initial-state}
	Let $\cA = (Q,\Sigma, Q_I, \delta, \mathbb{C})$ be a PTA. Then, there exists a PTA $\cB = (Q_\cB, \Sigma, \{q_I^\cB\},\allowbreak \delta_\cB, \mathbb{C})$ with single initial state such that $L(\cB) = L(\cA)$, and $da(\cB) \leq da(\cA)$.
\end{lem}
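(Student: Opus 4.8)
The plan is to build $\cB$ from $\cA$ by collapsing the set $Q_I$ of initial states into a single fresh state that, when placed at the root, is permitted to make exactly those first moves available to \emph{some} initial state of $\cA$. If $Q_I$ is already a singleton there is nothing to do, so assume otherwise and let $q_I^\cB \notin Q$ be new. I would set $Q_\cB := Q \cup \{q_I^\cB\}$, keep $\mathbb{C}$ on $Q$ and extend it to $q_I^\cB$ by any (say even) value; this choice is irrelevant, since no transition has $q_I^\cB$ as a target, so in any computation $q_I^\cB$ can occur only at the root $\epsilon$, hence only once along every branch and never affecting the largest colour seen infinitely often. The transitions are $\delta_\cB := \delta \cup \{(q_I^\cB, a, p, p') \mid \exists q \in Q_I:\ (q, a, p, p') \in \delta\}$, with single initial state $q_I^\cB$.

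For $L(\cB) = L(\cA)$ I would argue at the root in both directions. Given $\phi \in ACC(\cA, t)$ with $\phi(\epsilon) = q \in Q_I$, the function $\psi := \phi[\epsilon \mapsto q_I^\cB]$ (overwrite the root value, keep $\psi(v) = \phi(v)$ for $v \neq \epsilon$) is a computation of $\cB$: its root transition lies in $\delta_\cB$ by the witness $q$, and all other transitions are in $\delta$. It is accepting because $\psi$ and $\phi$ agree off the root, so along each branch they induce the same colour sequence except for the single root colour, which cannot change the largest colour occurring infinitely often. Conversely, if $\psi \in ACC(\cB, t)$ then $\psi(\epsilon) = q_I^\cB$, and since $\delta$ has no transition from $q_I^\cB$, the root transition must come from the added part, furnishing some $q \in Q_I$ with $(q, t(\epsilon), \psi(l), \psi(r)) \in \delta$; replacing the root value by this $q$ yields $\phi \in ACC(\cA, t)$ by the same colour argument. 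Hence $t \in L(\cA) \iff t \in L(\cB)$.

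For the ambiguity bound I would reuse this correspondence as the map $\eta : ACC(\cA, t) \to ACC(\cB, t)$, $\eta(\phi) := \phi[\epsilon \mapsto q_I^\cB]$. The two directions above show exactly that $\eta$ is well defined and \emph{surjective} (every accepting $\cB$-computation equals $\eta(\phi)$ for the $\phi$ recovered from its root witness). Therefore $|ACC(\cB, t)| \le |ACC(\cA, t)|$ for every $t$, both sides being $0$ when $t \notin L(\cA) = L(\cB)$. Since this pointwise inequality places $\cB$ in every ambiguity class ($k$-, boundedly, finitely, countably ambiguous) containing $\cA$, it gives $da(\cB) \le da(\cA)$.

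The only genuinely delicate point is the \emph{direction} of the cardinality comparison: $\eta$ need not be injective, because two accepting computations of $\cA$ differing only in their root state (possible exactly when several initial states share a valid first move to $(p,p')$) are identified by $\eta$. This is harmless — indeed it is precisely why ambiguity can only drop — but it is why I phrase the argument via a surjection $ACC(\cA,t)\twoheadrightarrow ACC(\cB,t)$ rather than an injection the other way. As a minor bookkeeping remark, if one insists (per the paper's standing convention) that all states of $\cB$ be useful, one deletes useless states afterwards; this affects neither $L(\cB)$ nor any $ACC(\cB,t)$, since useless states never occur in accepting computations.
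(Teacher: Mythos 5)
Your proposal is correct and matches the paper's own proof essentially verbatim: the same construction (a fresh single initial state inheriting the first moves of all states in $Q_I$) and the same surjection from $ACC(\cA,t)$ onto $ACC(\cB,t)$ obtained by overwriting the root state. Your added remarks on the colour of the new state, the non-injectivity of the map, and the usefulness convention are careful touches the paper leaves implicit, but they do not change the argument.
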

\begin{proof} 
	Let $Q_\cB := Q \cup \{q_I^\cB\}$ and $\delta_\cB := \delta_\cA \cup \{(q_I^\cB,a,q_l,q_r) \mid q_I \in Q_I$ and $(q_I,a,q_l,q_r) \in \delta\}$. It is easy to see that $L(\cB) = L(\cA)$.
	
	Let $t \in L(\cA)$, and let $g_t$ be a function from $ACC(\cA, t)$ to $ACC(\cB, t)$ that maps each computation $\phi \in ACC(\cA, t)$ to a computation $\phi'$ that assigns $q_I^\cB$ to node $\epsilon$, and $\phi(v)$ to other nodes. It is easy to see that $\phi' \in ACC(\cB, t)$, and that $g_t$ is surjective, and therefore $\forall t: |ACC(\cA, t)| \geq |ACC(\cB, t)|$, as requested.
\end{proof}

\begin{defi}[Moore machine]
	A Moore machine is a tuple $M = (\Sigma, \Gamma, Q, q_I, \delta, out)$, where $\Sigma$ is a finite input alphabet, $Q$ is a finite set of states, $ q_I\in Q$ is an initial state, $\delta: Q\times \Sigma \rightarrow Q$ is a transition function,
	$\Gamma$ is an output alphabet, and $out: Q \rightarrow \Gamma$ is an output function.
	
	Define $\widehat{\delta} : \Sigma^* \rightarrow Q$ by $\widehat{\delta}(\epsilon) := q_I$ and $\widehat{\delta}(w) := \delta(\widehat{\delta}(w'), a)$ for $w = w' \cdot a$ where $w' \in \Sigma^*$ and $a \in \Sigma$.
	We say that a function $F: \Sigma^* \rightarrow \Gamma$ is definable by a Moore machine if there is a Moore machine $M$ such that $F(w) = out(\widehat{\delta}(w))$ for all $w \in \Sigma^*$.
\end{defi}
\begin{defi}
	Let $F: \Sigma_1^* \rightarrow \Sigma_2$ be a function definable by a Moore machine, and let $t_1 \in T^\omega_{\Sigma_1}$. We define $t_2 := \widehat{F}(t_1)$ as a tree in $T^\omega_{\Sigma_2}$ such that $t_2(v) := F(t_1(v_1) \cdot \dots \cdot t_1(v_k))$ where $v_1, v_2, \dots, v_k$ is the path from the root to $v$.
	
	For a tree language $L \subseteq T^\omega_{\Sigma_1}$, we define $\widehat{F}(L) := \{\widehat{F}(t) \mid t \in L \} \subseteq T^\omega_{\Sigma_2}$.
\end{defi}
\begin{lem}[Reduction] \label{lemma-F-reduction}
	Let $L_1$ and $L_2$ be regular tree languages over alphabets $\Sigma_1$ and $\Sigma_2$, respectively. Let $F: \Sigma_1^* \rightarrow \Sigma_2$ be a function definable by a Moore machine.
	Assume that for each $t \in T^\omega_{\Sigma_1}$, $t \in L_1$ iff $\widehat{F}(t) \in L_2$. Then $da(L_1) \leq da(L_2)$.
\end{lem}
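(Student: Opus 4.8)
The plan is to pick an automaton $\cA_2 = (Q_2, \Sigma_2, Q_I^2, \delta_2, \mathbb{C}_2)$ with $L(\cA_2) = L_2$ and $da(\cA_2) = da(L_2)$ (such an automaton exists since the degrees are linearly ordered and $L_2$ is regular), and to build a product automaton $\cA_1$ for $L_1$ that runs the Moore machine $M = (\Sigma_1, \Sigma_2, Q_M, q_0^M, \delta_M, out)$ of $F$ deterministically down the tree, using its output at each node as the label fed to a simulated run of $\cA_2$. Concretely I set $\cA_1 := (Q_M \times Q_2,\ \Sigma_1,\ \{q_0^M\} \times Q_I^2,\ \delta_1,\ \mathbb{C}_1)$, where the first coordinate carries the \emph{incoming} Moore state (the state $M$ reaches after reading the path strictly above the current node), the coloring ignores the Moore coordinate via $\mathbb{C}_1(m,q) := \mathbb{C}_2(q)$, and the transition is
\[
  ((m,q),\, a,\, (m',q_l),\, (m',q_r)) \in \delta_1 \iff m' = \delta_M(m,a)\ \text{and}\ (q,\, out(m'),\, q_l,\, q_r) \in \delta_2 .
\]
Here, when $m$ is the incoming Moore state at a node $v$ carrying label $a = t_1(v)$, the value $m' = \delta_M(m,a)$ is simultaneously the Moore state \emph{after} reading $t_1(v)$ — so $out(m') = \widehat{F}(t_1)(v)$ is exactly the label of $\widehat{F}(t_1)$ at $v$ — and the incoming Moore state passed to both children.

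The key verification is that the Moore coordinate adds no ambiguity. Given $t_1$, the initial-state constraint forces the first coordinate at the root to be $q_0^M$, and the transition relation then forces the first coordinate to be the unique function $v \mapsto m_v$ determined by $t_1$ and $M$. Hence projecting onto the second coordinate is a bijection between computations of $\cA_1$ on $t_1$ and computations $\psi$ of $\cA_2$ on $t_2 := \widehat{F}(t_1)$: the constraint $(q, out(m_v), q_l, q_r) \in \delta_2$ is precisely the $\cA_2$-transition $(\psi(v), t_2(v), \psi(vl), \psi(vr)) \in \delta_2$. Since $\mathbb{C}_1 = \mathbb{C}_2 \circ \pi_2$, the two computations induce the same color sequence along every branch, so the bijection restricts to a bijection $ACC(\cA_1, t_1) \to ACC(\cA_2, \widehat{F}(t_1))$, giving $|ACC(\cA_1, t_1)| = |ACC(\cA_2, \widehat{F}(t_1))|$ for every $t_1 \in T^\omega_{\Sigma_1}$.

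From this equality $L(\cA_1) = L_1$ follows: $t_1 \in L(\cA_1)$ iff $ACC(\cA_1,t_1) \neq \emptyset$ iff $ACC(\cA_2, \widehat{F}(t_1)) \neq \emptyset$ iff $\widehat{F}(t_1) \in L_2$ iff (by the hypothesis) $t_1 \in L_1$. For the degree, note that for $t_1 \in L_1$ we have $\widehat{F}(t_1) \in L_2$, so the cardinality $|ACC(\cA_1, t_1)| = |ACC(\cA_2, \widehat{F}(t_1))|$ is bounded by whatever ambiguity $\cA_2$ exhibits on $L_2$; and for $t_1 \notin L_1$ we have $ACC(\cA_1, t_1) = \emptyset$. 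Thus if $\cA_2$ is $k$-ambiguous (respectively finitely, countably ambiguous), so is $\cA_1$, whence $da(\cA_1) \leq da(\cA_2) = da(L_2)$. Combined with $L(\cA_1) = L_1$ and the definition $da(L_1) = \min_{\cA} \{da(\cA) \mid L(\cA) = L_1\}$, this yields $da(L_1) \leq da(L_2)$. (If one insists on the global convention that all states be useful, pruning non-useful states from $\cA_1$ changes neither its language nor any $ACC$-set, hence neither its degree.)

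The main obstacle I anticipate is purely bookkeeping: getting the Moore coordinate's off-by-one right, i.e. storing the \emph{incoming} state so that the transition at $v$ can both reconstruct the label $\widehat{F}(t_1)(v) = out(\delta_M(m,a))$ from the local data $(m,a)$ and correctly hand the updated state $\delta_M(m,a)$ to \emph{both} children. Everything else — the bijection, its acceptance-preservation, and the case analysis transferring the degree — is routine once the product is set up so that the deterministic Moore simulation contributes a factor of $1$ to the number of runs.
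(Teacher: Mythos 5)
Your proof is correct and takes essentially the same route as the paper's: a product automaton whose deterministic Moore coordinate tracks the path read so far, with colors inherited from $\cA_2$, so that projecting onto the $\cA_2$-coordinate maps accepting computations of $\cA_1$ on $t$ injectively (in your version, bijectively) into those of $\cA_2$ on $\widehat{F}(t)$, yielding $da(L_1) \leq da(\cA_1) \leq da(\cA_2) = da(L_2)$. The only divergence is bookkeeping: you apply $out$ to the updated state $\delta_M(m,a)$, which actually matches the paper's inclusive-path definition of $\widehat{F}$ more precisely than the paper's own transition rule (which uses $out_M(p)$ on the incoming state).
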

\begin{proof}
	Let $\cA_2 = (Q_2 ,\Sigma_2, Q^2_I, \delta_2, \mathbb{C}_2)$ such that $\cA_2$ accepts $L_2$ and $da(\cA_2) = da(L_2)$.
	
	Let $M = (\Sigma_1, \Sigma_2, Q_M, q_I^M, \delta_M, out_M)$ be a Moore machine defining $F$. We will use $\cA_2$ and $M$ to define an automaton $\cA_1 = (Q_1 ,\Sigma_1, Q^1_I, \delta_1, \mathbb{C}_1)$ such that $t \in L(\cA_1)$ iff $\widehat{F}(t) \in L(\cA_2)$, by:	
	\begin{itemize}
		\item $Q_1 := Q_2 \times Q_M$
		\item $Q^1_I := Q^2_I \times \{q^M_I\}$
		\item $((q, p), a, (q_1, p_1), (q_2, p_2)) \in \delta_1$ iff $p_1 = p_2 = \delta_M(p, a)$ and $(q, out_M(p), q_1, q_2) \in \delta_2$
		\item $\mathbb{C}_1(q,p) := \mathbb{C}_2(q)$
	\end{itemize}
	
	First notice that $\forall t \in T^\omega_\Sigma: t \in L(\cA_1) \Leftrightarrow \widehat{F}(t) \in L(\cA_2) \Leftrightarrow \widehat{F}(t) \in L_2 \Leftrightarrow  t \in L_1$, and therefore $L(\cA_1) = L_1$ as needed.
	
	Let $\phi \in ACC(\cA_1, t)$, and define a computation $\phi'$ by $\phi'(v) = q_1$ for $\phi(v) = (q_1, q_2) \in Q_2 \times Q_M$. It is easy to see that $\phi' \in ACC(\cA_2, \widehat{F}(t))$ and since $M$ is deterministic, we conclude that $|ACC(\cA_1, t)| \leq |ACC(\cA_2, \widehat{F}(t)|)|$, and therefore $da(\cA_1) \leq da(\cA_2)$.
	
	We conclude that $da(L_1) \leq da(\cA_1) \leq da(\cA_2) = da(L_2)$, as requested.
\end{proof}
Let us state  another well-known characterization of regular trees.
\begin{fact}\label{fact:sect3}
 A tree $t$ is regular iff its labeling $t~:~\{l,r\}^*\rightarrow \Sigma$ is definable by a Moore machine.
\end{fact}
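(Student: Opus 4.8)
The plan is to use the characterization already recorded in the excerpt, namely that a labeled tree is regular iff it has only finitely many distinct subtrees $t_{\geq v}$, and to exhibit a direct correspondence between these subtrees and the states of a Moore machine whose input alphabet is the set of directions $\{l,r\}$ and whose output alphabet is $\Sigma$. Throughout I treat the labeling of $t$ as the function $\{l,r\}^* \rightarrow \Sigma$ that a Moore machine with input alphabet $\{l,r\}$ is meant to define via $t(w) = out(\widehat{\delta}(w))$.

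For the direction ``Moore-definable $\Rightarrow$ regular'', I would start from a Moore machine $M = (\{l,r\}, \Sigma, Q, q_I, \delta, out)$ with $t(w) = out(\widehat{\delta}(w))$. First I would record the standard fact that $\widehat{\delta}(v \cdot u)$ depends on $v$ only through the state $\widehat{\delta}(v)$: extending $\delta$ to words in the obvious way, one has $\widehat{\delta}(v \cdot u) = \delta^*(\widehat{\delta}(v), u)$. Consequently $t_{\geq v}(u) = t(v \cdot u) = out(\delta^*(\widehat{\delta}(v), u))$, so the subtree $t_{\geq v}$ is completely determined by the state $\widehat{\delta}(v) \in Q$. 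Since $Q$ is finite, $t$ has at most $|Q|$ distinct subtrees and is therefore regular.

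For the converse I would assume $t$ is regular and let $S := \{ t_{\geq v} \mid v \in \{l,r\}^* \}$ be its finite set of distinct subtrees. I would build a Moore machine $M = (\{l,r\}, \Sigma, S, t, \delta, out)$ whose states are the subtrees themselves: the initial state is $t = t_{\geq \epsilon}$, the output is $out(s) := s(\epsilon)$ (the root label of the subtree), and the transition is $\delta(s, d) := s_{\geq d}$ for $d \in \{l,r\}$. This $\delta$ maps into $S$ because if $s = t_{\geq v}$ then $s_{\geq d} = t_{\geq v \cdot d} \in S$. A short induction on $w$ gives $\widehat{\delta}(w) = t_{\geq w}$ (base case $\widehat{\delta}(\epsilon) = t_{\geq \epsilon}$; step $\widehat{\delta}(w \cdot d) = \delta(t_{\geq w}, d) = (t_{\geq w})_{\geq d} = t_{\geq w \cdot d}$), whence $out(\widehat{\delta}(w)) = t_{\geq w}(\epsilon) = t(w)$, so $M$ defines the labeling of $t$.

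I do not expect a genuine obstacle here; both implications are immediate once subtrees are identified with states. The only points needing a little care are the well-definedness of $\delta$ into the finite state set $S$, the bookkeeping of the induction $\widehat{\delta}(w) = t_{\geq w}$, and, in the forward direction, the initial observation that the state reached after reading a prefix captures all information relevant to the remainder of the run.
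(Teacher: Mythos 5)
Your proof is correct. The paper states Fact~\ref{fact:sect3} without proof, as a well-known characterization, so there is no argument to compare against; what you give is the standard (and essentially canonical) proof, built directly on the finitely-many-subtrees characterization of regular trees that the paper itself records just before Rabin's basis theorem. Both directions are sound: the forward direction correctly observes that $t_{\geq v}$ is determined by the state $\widehat{\delta}(v)$, and the converse correctly takes the finite set of subtrees as the state set, with the transition $\delta(s,d):=s_{\geq d}$ well-defined intrinsically on subtrees and the induction $\widehat{\delta}(w)=t_{\geq w}$ closing the argument.
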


\section{Not-Finitely Ambiguous Languages} \label{sect:not-finite}
We provide here sufficient conditions for a language to be not finitely ambiguous. These conditions will allow us to present some natural languages which are countably ambiguous and not finitely ambiguous, proving Theorem \ref{th:hierarchy}(3). In addition, these results are used
in Sects. \ref{sect:k-ambig}-\ref{sect:uncount} where it is proved that for every $k>1$ there is a   language of ambiguity degree equal to $k$
and there are languages with finite and uncountable degrees of ambiguity.

First, we state our main technical result~-- Proposition \ref{prop:main-new}. Then, we derive some consequences. Finally, a proof of Proposition \ref{prop:main-new} is given. Our proof relies on the fact that there is no MSO-definable function that assigns to every non-empty antichain $Y$ a finite non-empty subset $X\subseteq Y$
(Lemma~\ref{lemma-definable-selection-functions}), and our proof techniques refine the proof techniques of \cite{carayol2010choice}.

Recall that for trees $t$ and $t'$ and an antichain $Y$, we denote by $t[t'/Y]$ the tree obtained from $t$ by grafting $t'$ at every node in $Y$.
\begin{prop} \label{prop:main-new} Let $t_0$ and $t_1$ be regular trees and $L$ be a regular language such that $t_0\not\in L$ and
$t_0[t_1/Y]\in L$ for every
non-empty antichain $Y$. Then $L$ is not finitely ambiguous.
\end{prop}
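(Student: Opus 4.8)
The plan is to argue by contradiction: assume $L$ is accepted by a finitely ambiguous PTA $\cA=(Q,\Sigma,Q_I,\delta,\mathbb{C})$ with all states useful, and derive that a function forbidden by Lemma~\ref{lemma-definable-selection-functions} would be MSO-definable. Call a node $y\in Y$ \emph{pivotal} for a computation $\phi\in ACC(\cA,t_0[t_1/Y])$ if $\cA_{\phi(y)}$ does not accept the subtree $(t_0)_{\geq y}$. The first step is a grafting observation: \emph{every} accepting $\phi$ on $t_0[t_1/Y]$ has at least one pivotal node. Indeed, restricting $\phi$ to the copy of $t_1$ grafted at $y$ shows that $\cA_{\phi(y)}$ accepts $t_1$ for every $y\in Y$; if \emph{no} $y$ were pivotal, then $\cA_{\phi(y)}$ would also accept $(t_0)_{\geq y}$ for every $y$, and grafting onto $\phi$, at each $y\in Y$, an accepting computation of $\cA_{\phi(y)}$ on $(t_0)_{\geq y}$ (the general form of Lemma~\ref{lem:triv-composition}) would yield an accepting computation of $\cA$ on $t_0[(t_0)_{\geq y}/Y]=t_0$, contradicting $t_0\notin L$.

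Second, I would verify that the intended selector is MSO-definable from the parameter $Y$. Since $t_0$ and $t_1$ are regular, their labelings are MSO-definable, and so is the labeling of $t_0[t_1/Y]$ using $Y$ as a parameter (a node lies in a graft iff it is below some $y\in Y$, in which case its label is read off $t_1$, and otherwise off $t_0$). By the lemma in Subsection~\ref{subsect:mso}, ``$\phi$ is an accepting computation of $\cA$ on $t_0[t_1/Y]$'' is MSO-expressible. Pivotality is MSO-expressible as well: $t_0$ has finitely many distinct subtrees, the subtree-type of a node is an MSO-definable (regular) labeling, and for each of the finitely many types $s$ and each state $q$ whether $\cA_q$ accepts $s$ is a fixed truth value, so ``$y$ is pivotal for $\phi$'' is a finite disjunction of MSO conditions on $\phi(y)$ and the type of $y$. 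Hence any Boolean combination built from pivotality together with quantification over the (finitely many) accepting computations is MSO-definable in $Y$.

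The main obstacle is to turn this into a \emph{finite} nonempty subset of $Y$: the grafting observation gives nonemptiness, but a priori a single computation could be pivotal at infinitely many nodes, and one cannot simply select a $\le_{\mathit{lex}}$-least pivotal node, since an infinite antichain need not have a $\le_{\mathit{lex}}$-least element (the very phenomenon underlying Lemma~\ref{lemma-definable-selection-functions}). Here finite ambiguity must be used, and the decisive point is that the states at the grafting nodes all accept the \emph{fixed} regular tree $t_1$. If some state $q$ occurred as $\phi(y)$ at infinitely many nodes $y$ and $\cA_q$ admitted two distinct accepting computations on $t_1$, then choosing the sub-computation on each such graft independently would produce $2^{\aleph_0}$ accepting computations on the single tree $t_0[t_1/Y]$, contradicting finite ambiguity. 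I would use this multiplicity argument, together with a pigeonhole over the finite state set $Q$, to show that the pivotal set $P_\phi$ of each accepting $\phi$ is finite; reconciling this with the case in which the repeated state has a \emph{unique} accepting computation on $t_1$ is the delicate part I expect to require the most care.

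Once each $P_\phi$ is finite, it has a $\le_{\mathit{lex}}$-minimum by Claim~\ref{claim-finite-choice} (finite sets admit a definable choice), which supplies one canonical pivotal witness per computation. Taking the set $X$ of these witnesses over all $\phi\in ACC(\cA,t_0[t_1/Y])$ yields, for every non-empty antichain $Y$, a subset of $Y$ that is non-empty (some $\phi$ exists since $t_0[t_1/Y]\in L$, and each contributes a pivotal node) and finite (of size at most $|ACC(\cA,t_0[t_1/Y])|$, finite because $\cA$ is finitely ambiguous), and the whole assignment $Y\mapsto X$ is MSO-definable in the unlabeled full-binary tree. This contradicts Lemma~\ref{lemma-definable-selection-functions}; hence no finitely ambiguous automaton can accept $L$, i.e.\ $L$ is not finitely ambiguous.
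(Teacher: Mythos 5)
Your outer skeleton is the same as the paper's (each accepting computation on $t_0[t_1/Y]$ selects, MSO-definably, a witness in $Y$; finite ambiguity then bounds the selected subset, contradicting Lemma~\ref{lemma-definable-selection-functions}), and your grafting observation that every accepting computation has at least one pivotal node is correct, as are your MSO-definability claims. The gap is exactly the step you flag as ``delicate'': the claim that finite ambiguity forces every pivotal set $P_\phi$ to be finite. Your multiplicity argument only covers states $q$ occurring at infinitely many pivotal nodes such that $\cA_q$ has two accepting computations on $t_1$; the uniquely-accepting case is not a technicality but a hole that this style of argument cannot close. Concretely, take $\Sigma=\{a,c\}$, $t_0:=t_c$, $t_1:=t_a$ (the all-$c$ and all-$a$ trees), $L:=L_{\exists a}$, and the PTA $\cB$ with states $q_0,q_\top,q_1,q_L$, initial states $\{q_0,q_L\}$, transitions $(q_0,c,q_0,q_\top)$, $(q_0,c,q_\top,q_0)$, $(q_0,a,q_\top,q_\top)$, $(q_\top,\sigma,q_\top,q_\top)$ for $\sigma\in\{a,c\}$, $(q_1,a,q_1,q_1)$, $(q_L,c,q_L,q_1)$, with colors $\mathbb{C}(q_0)=1$ and $\mathbb{C}(q_\top)=\mathbb{C}(q_1)=\mathbb{C}(q_L)=0$. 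Then $L(\cB)=L_{\exists a}$, so all hypotheses of the proposition hold for $\cB$, and on the tree $t_c[t_a/Y]$ with $Y=\{l^ir\mid i\in\nat\}$ the computation $\phi$ defined by $\phi(l^i)=q_L$ and $\phi(u)=q_1$ for $u\geq l^ir$ is accepting. Every node of $Y$ is pivotal for $\phi$ (since $L(\cB_{q_1})=\{t_a\}$ does not contain $t_c$), so $P_\phi$ is infinite, yet $\cB_{q_1}$ has a unique accepting computation on $t_1$, so your pigeonhole-plus-multiplicity argument detects nothing: the failure of finite ambiguity of $\cB$ is witnessed only by computations (those through $q_0$) that are invisible to any analysis of $\phi$'s grafts. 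Hence ``finite ambiguity implies $P_\phi$ finite'' cannot be established by inspecting the grafted sub-computations, and without finiteness your selector collapses, because an infinite subset of an antichain (e.g.\ $\{l^ir\mid i\in\nat\}$ itself) need not have a $\leq_{\mathit{lex}}$-minimal element~-- which is precisely the phenomenon underlying Lemma~\ref{lemma-definable-selection-functions}.

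The paper avoids this issue by a per-computation choice mechanism that needs no finiteness claim at all. Since $t_0$ is regular and $t_0\notin L(\cA)$, Pathfinder has a regular positional winning strategy $\widehat{STR}$ in the membership game $G_{t_0,\cA}$ (Claim~\ref{claim-regular-positional-winning-strategy}); playing $\widehat{STR}$ against the Automaton strategy $str_\phi$ induced by an accepting computation $\phi$ on $t'=t_0[t_1/Y]$ produces a single play, which must contain a first invalid move for Automaton, occurring at a node $v$ with $t_0(v)\neq t'(v)$ (Claim~\ref{claim-automaton-makes-invalid-move}); this $v$ lies below a unique $y\in Y$. The play, and hence the map $\phi\mapsto y$, is MSO-definable with $t_0$ and $\widehat{STR}$ as fixed regular parameters (Proposition~\ref{claim-leads2}). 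This yields exactly one witness per computation unconditionally, after which the counting and the appeal to Lemma~\ref{lemma-definable-selection-functions} go through as in your last paragraph. To repair your proof you would need a comparable device for canonically selecting a single pivotal node from a possibly infinite antichain; $\leq_{\mathit{lex}}$-minimality cannot do it, and no MSO-definable rule can in general, so the game-theoretic (or an equivalent) ingredient appears unavoidable.
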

\saveCounter
\begin{defi}
	For a tree language $L$ over alphabet $\Sigma$, we denote by $Subtree(L)$ the tree language $\{t \in T^\omega_\Sigma \mid \exists t' \in L \: \exists v: t'_{\geq v} = t\}$.
\end{defi}

\begin{cor} \label{cor-countable-complement-not-finitely-ambiguous}
	Let $L$ be a non-empty regular language over an alphabet $\Sigma$ such that $Subtree(L) \neq T^\omega_\Sigma$.
	Then, the complement of $L$ is not finitely ambiguous.
\end{cor}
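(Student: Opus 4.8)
The plan is to reduce the statement to Proposition~\ref{prop:main-new} applied to the language $\overline{L}$. To invoke that proposition I need two regular trees $t_0$ and $t_1$ with $t_0\in L$ (so that $t_0\notin\overline{L}$) and $t_0[t_1/Y]\notin L$ (so that $t_0[t_1/Y]\in\overline{L}$) for every non-empty antichain $Y$. Since regular tree languages are closed under complement (by Rabin's characterization of regularity as MSO-definability, together with closure of MSO under negation), $\overline{L}$ is regular, and Proposition~\ref{prop:main-new} will then yield that $\overline{L}$ is not finitely ambiguous.

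The tree $t_0$ is immediate: as $L$ is a non-empty regular language, Rabin's basis theorem supplies a regular tree $t_0\in L$. The real work is producing a regular tree $t_1\notin Subtree(L)$, and for this the key preliminary step is to observe that $Subtree(L)$ is itself regular. Fixing a PTA $\cA$ with $L(\cA)=L$ and all states useful (as assumed throughout the paper), I would show $Subtree(L)=\bigcup_{q}L(\cA_q)$, where $q$ ranges over the states of $\cA$: for the inclusion $\subseteq$, an accepting computation $\phi$ on a tree $t'\in L$, restricted below a node $v$, is an accepting computation of $\cA_{\phi(v)}$ on $t'_{\geq v}$; conversely, if $t\in L(\cA_q)$, then usefulness of $q$ produces a tree of $L$ with a node labeled $q$, and the Grafting Lemma~\ref{lem:triv-composition} lets me graft $t$ there, exhibiting $t$ as a subtree of a member of $L$. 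Being a finite union of regular languages, $Subtree(L)$ is regular, so its complement is regular and, by the hypothesis $Subtree(L)\neq T^\omega_\Sigma$, non-empty; Rabin's basis theorem then furnishes a regular tree $t_1\notin Subtree(L)$.

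It remains to verify the grafting condition. Given a non-empty antichain $Y$, I pick some $y\in Y$; because $Y$ is an antichain, no element of $Y$ lies strictly below $y$, so the subtree of $t_0[t_1/Y]$ rooted at $y$ is exactly $t_1$. Hence if $t_0[t_1/Y]$ were in $L$, then $t_1$ would be a subtree of a member of $L$, i.e.\ $t_1\in Subtree(L)$, contradicting the choice of $t_1$. Therefore $t_0[t_1/Y]\notin L$, that is $t_0[t_1/Y]\in\overline{L}$, for every non-empty antichain $Y$, which together with $t_0\in L$ completes the hypotheses of Proposition~\ref{prop:main-new}.

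The main obstacle is the regularity of $Subtree(L)$, or equivalently the clean finite-union description $Subtree(L)=\bigcup_{q}L(\cA_q)$ via the automata $\cA_q$; once that is in hand, the extraction of the regular witness $t_1$ from the non-empty regular complement of $Subtree(L)$ and the antichain argument are routine consequences of Rabin's basis theorem and the Grafting Lemma.
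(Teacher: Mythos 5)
Your proof is correct and takes essentially the same approach as the paper's: a regular $t_0\in L$ via Rabin's basis theorem, regularity of $Subtree(L)$ via the automata obtained by varying the initial state (the paper's $\cB_Q$ is exactly your $\bigcup_q L(\cA_q)$), a regular $t_1$ from the non-empty regular complement of $Subtree(L)$, and then Proposition~\ref{prop:main-new} applied to $\overline{L}$. You merely spell out the steps the paper labels as ``clear'' (the usefulness-plus-grafting argument and the antichain verification).
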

\begin{proof}
 Let $L$ be as in Corollary~\ref{cor-countable-complement-not-finitely-ambiguous}.
We claim that there are regular $\Sigma$-labeled trees $t_0\in L$ and $t_1\not\in Subtree(L)$. Indeed, by Rabin's basis theorem there is a regular $t_0\in L$. Since $L$ is regular, there is an automaton $\cB = (Q, \Sigma, \{q_I\}, \delta, \mathbb{C})$ (with only useful states) that accepts $L$. It is clear that $\cB_Q$ accepts $Subtree(L)$, and therefore $Subtree(L)$ is regular.
The complement of $Subtree(L)$ is regular (as the complement of a regular language) and non-empty (since $Subtree(L) \neq T^\omega_\Sigma$), and therefore contains a regular tree $t_1$ (by Rabin's basis theorem). Note that $t_0[ t_1/Y]\not\in {L}$ for every non-empty antichain $Y$.

The complement of $L$ satisfies the assumption of Proposition \ref{prop:main-new}. Therefore, it is not finitely ambiguous.
\end{proof}

\begin{cor}[Not finitely ambiguous languages] \label{corollary-not-finitely-ambiguous-languages}
	The following languages are not finitely ambiguous:
	\begin{enumerate}
		\item The complement of a non-empty regular countable tree language.
		\item \label{corollary-singleton-complement} The complement of a regular language that contains a single tree.
		\item \label{corollary-exists-ai-not-finitely-ambiguous}
		The language $L_{\exists a_1} := \{t \in T_\Sigma^\omega \mid t$ has
		at least one node labeled by $a_1 \}$ over alphabet $\Sigma = \{a_1, \dots, a_m, c\}$.
	\end{enumerate}
\end{cor}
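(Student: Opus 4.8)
The plan is to derive all three items as consequences of Corollary~\ref{cor-countable-complement-not-finitely-ambiguous}, whose hypothesis requires a non-empty regular language $L$ with $Subtree(L) \neq T^\omega_\Sigma$; the conclusion then gives that $\overline{L}$ is not finitely ambiguous. So for each item I would identify the relevant $L$, verify regularity and non-emptiness, and exhibit some tree that is not a subtree of any tree in $L$.

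For item (2), the complement of a language containing a single tree $t_0$, I would take $L = \{t_0\}$. A singleton is regular (a single regular tree is regular by Fact~\ref{fact:sect3}, and MSO-definable by $\psi_{t_0}$, so $\{t_0\}$ is regular by Rabin's theorem) and non-empty. For the $Subtree$ condition I would argue that $Subtree(\{t_0\})$ is the set of all subtrees $t_{0,\geq v}$ of $t_0$; picking any label $b$ that occurs in $t_0$ and any label $d \neq b$, the constant tree labeled everywhere by a symbol distinct from some symbol used in $t_0$ cannot be a subtree of $t_0$ unless $t_0$ itself is that constant tree, so as long as $|\Sigma|>1$ one can find a tree outside $Subtree(\{t_0\})$; thus the hypothesis of Corollary~\ref{cor-countable-complement-not-finitely-ambiguous} holds. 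Item (1), the complement of a non-empty regular countable language $L$, follows similarly: $L$ is regular and non-empty by assumption, and the key point is that a countable language cannot satisfy $Subtree(L) = T^\omega_\Sigma$, since $T^\omega_\Sigma$ is uncountable while $Subtree(L)$, being a countable union (over $t'\in L$ and nodes $v$) of subtrees, is countable; hence $Subtree(L) \neq T^\omega_\Sigma$ and Corollary~\ref{cor-countable-complement-not-finitely-ambiguous} applies.

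For item (3), I would recognize $L_{\exists a_1}$ as the complement of $L_0 := \{t \in T^\omega_\Sigma \mid t$ has no node labeled $a_1\}$, i.e.\ the set of trees labeled only over $\Sigma \setminus \{a_1\}$. This $L_0$ is regular and non-empty (it contains, e.g., the constant-$c$ tree). Then I would check $Subtree(L_0) \neq T^\omega_\Sigma$: any subtree of a tree in $L_0$ is again $a_1$-free, so no tree containing an $a_1$-label lies in $Subtree(L_0)$, and such trees exist since $a_1 \in \Sigma$. Applying Corollary~\ref{cor-countable-complement-not-finitely-ambiguous} to $L_0$ yields that $\overline{L_0} = L_{\exists a_1}$ is not finitely ambiguous. (Alternatively, item (3) is a direct instance of item (2) only when $\Sigma \setminus \{a_1\}$ is a singleton; in general it follows from the $Subtree$ argument just given.)

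The main obstacle, and the only place requiring genuine care, is the $Subtree$ bookkeeping: verifying precisely that $Subtree(L)$ stays within the expected set of trees (countable in item (1), $a_1$-free in item (3), a subtree of $t_0$ in item (2)) and that the complement of this set is non-empty given the alphabet assumptions. Everything else — regularity of the chosen $L$, non-emptiness, and the passage to the complement — is routine and immediate from Rabin's theorems and Fact~\ref{fact:sect3}, so the argument reduces almost entirely to supplying the correct witness tree outside $Subtree(L)$ in each case and invoking Corollary~\ref{cor-countable-complement-not-finitely-ambiguous}.
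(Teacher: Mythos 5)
Your overall strategy---deriving all three items from Corollary~\ref{cor-countable-complement-not-finitely-ambiguous}---is sound, and your items (1) and (3) are correct, but your item (2) rests on a false intermediate claim. You assert that a constant tree labeled by a symbol $d$ differing from some label $b$ occurring in $t_0$ cannot be a subtree of $t_0$ unless $t_0$ is itself that constant tree. This is wrong: take $\Sigma=\{a,c\}$ and let $t_0$ be the tree whose root is labeled $a$ and all other nodes labeled $c$; then $(t_0)_{\geq l}$ is the constant-$c$ tree $t_c$, so $t_c \in Subtree(\{t_0\})$ even though $t_0 \neq t_c$. Thus your proposed witness may well lie inside $Subtree(\{t_0\})$. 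The repair is immediate and is already in your own text: $Subtree(\{t_0\})$ is countable (indeed finite, since a regular tree has only finitely many distinct subtrees), while $T^\omega_\Sigma$ is uncountable for $|\Sigma|\geq 2$, so a witness outside it exists by the very counting argument you use for item (1). Equivalently, you could do what the paper does and simply observe that item (2) is the instance of item (1) for the countable language $\{t_0\}$; the paper disposes of (2) with ``follows immediately from (1).''

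For item (3) your route genuinely differs from the paper's, and it works. The paper restricts to the subalphabet $\{c,a_1\}$: it notes $L_{\exists a_1} \cap T^\omega_{\{c,a_1\}} = T^\omega_{\{c,a_1\}} \setminus \{t_c\}$, applies item (2) to this restriction, and then lifts the conclusion back to the full alphabet using the fact that $T^\omega_{\{c,a_1\}}$ is unambiguous together with closure of finitely ambiguous languages under intersection (Corollary~\ref{corollary-ambiguity-closure}). You instead apply Corollary~\ref{cor-countable-complement-not-finitely-ambiguous} directly to the $a_1$-free language $L_0$, verifying $Subtree(L_0) = L_0 \neq T^\omega_\Sigma$. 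This is legitimate: that corollary never requires $L_0$ to be countable, only that $Subtree(L_0)$ omit some tree, and $L_0$ is indeed uncountable when $m \geq 2$, so your route bypasses both item (2) and the closure lemma. Your parenthetical remark---that (3) is a direct instance of (2) only when $\Sigma \setminus \{a_1\}$ is a singleton---is exactly the point that forces the paper into its intersection detour; your direct application of the $Subtree$ criterion is the cleaner of the two arguments here.
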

\begin{proof}
	%
	(1) Let $L$ be a non-empty regular countable tree language. Every tree has countably many subtrees, and since $L$ is countable we conclude that $Subtree(L)$ is countable. Therefore, $Subtree(L)$ does not contain all trees. By Corollary~\ref{cor-countable-complement-not-finitely-ambiguous}, we conclude that $\overline{L}$ is not finitely ambiguous.
	
	(2)	Follows immediately from (1).
	
	(3) By the definition of $L_{\exists a_1}$ we have $L_{\exists a_1} \cap T^\omega_{\{c,a_1\}} = T^\omega_{\{c,a_1\}} \setminus \{t_c\}$, and therefore by (2), $L_{\exists a_1} \cap T^\omega_{\{c,a_1\}}$ is not finitely ambiguous.
	It is easy to see that $T^\omega_{\{c,a_1\}}$ is unambiguous (since there is a deterministic automaton that accepts it). Therefore, by Corollary~\ref{corollary-ambiguity-closure} we conclude that $L_{\exists a_1}$ is not finitely ambiguous.
\end{proof}
It is easy to prove that the complement of every finite language (i.e. a language which consists of finitely many trees) is countably ambiguous. Therefore, we obtain:
\begin{cor} \label{corollary-not-finitely-countable-ambiguous-languages}	
	If $L$
	is regular and its
	complement is finite and non-empty, then $da(L)=\aleph_0$. 
\end{cor}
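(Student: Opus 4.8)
The plan is to establish the two bounds that pin $da(L)$ to $\aleph_0$ in the ordering $finite < \aleph_0 < 2^{\aleph_0}$: that $L$ is \emph{not} finitely ambiguous (forcing $da(L) > finite$, hence $da(L) \geq \aleph_0$) and that $L$ \emph{is} countably ambiguous (forcing $da(L) \leq \aleph_0$). Note first that, since $L$ is regular, its complement $\overline{L}$ is regular as well, and by hypothesis $\overline{L}$ is finite and non-empty.

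For the lower bound I would observe that $\overline{L}$, being finite, is in particular a non-empty regular \emph{countable} tree language; hence its complement $\overline{\overline{L}} = L$ is not finitely ambiguous by Corollary~\ref{corollary-not-finitely-ambiguous-languages}(1). This already yields $da(L) > finite$, i.e. $da(L) \geq \aleph_0$.

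The bulk of the work is the upper bound, i.e. the remark (made just before the statement) that the complement of a finite language is countably ambiguous. Write $\overline{L} = \{s_1, \dots, s_n\}$. First I would check that each $s_i$ is a \emph{regular} tree: by Rabin's basis theorem a non-empty regular language contains a regular tree, and since the singleton $\{s\}$ of a regular tree $s$ is MSO-definable (hence regular), one can peel off regular members one at a time, using closure of regular languages under boolean operations, to conclude that all of $s_1, \dots, s_n$ are regular. Writing $L = \bigcap_{i=1}^n \overline{\{s_i\}}$ and invoking the closure of countably ambiguous languages under finite intersection (Corollary~\ref{corollary-ambiguity-closure}), it then suffices to show that $\overline{\{s\}}$ is countably ambiguous for a single regular tree $s$. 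For this I would build an automaton that, by Fact~\ref{fact:sect3}, deterministically simulates the Moore machine computing the labeling of $s$ along every path (so that at each node $v$ it knows the expected value $s(v)$), while nondeterministically guessing a single node $v$ at which it commits to the inequality $t(v) \neq s(v)$; below $v$ and off the guessed root-to-$v$ path it moves to an accept-everything state. Such a run is completely determined by the choice of the witness node $v$ with $t(v)\neq s(v)$, so the number of accepting runs on $t$ equals the number of such witnesses, which is at most $|\{l,r\}^*| = \aleph_0$; thus the automaton is countably ambiguous and accepts exactly $\overline{\{s\}} = \{t \mid t \neq s\}$.

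Combining the two bounds gives $da(L) = \aleph_0$. The only delicate point is the upper bound: one must verify that the excluded trees $s_1,\dots,s_n$ are genuinely regular (so that the Moore-machine simulation is available) and that the guessing automaton marks exactly one witness node per accepting run, which is precisely what keeps the count at $\aleph_0$ rather than $2^{\aleph_0}$. The lower bound, by contrast, is immediate from the already-established Corollary~\ref{corollary-not-finitely-ambiguous-languages}(1).
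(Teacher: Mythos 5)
Your proposal is correct and takes essentially the same approach as the paper: the paper also gets the lower bound from Corollary~\ref{corollary-not-finitely-ambiguous-languages}(1), reduces the upper bound to the single-tree case via closure of countably ambiguous languages under finite intersection (Lemma~\ref{lemma-ambiguity-of-union-and-intersection}), and proves that case (its Claim~\ref{claim-singletone-complement-is-regular}) with exactly your automaton~-- a deterministic Moore-machine simulation of the regular excluded tree combined with a nondeterministic guess of a single witness node, forcing countably many accepting runs. The only difference is that you spell out, via a Rabin-basis-theorem peeling argument, why the finitely many excluded trees are regular, a fact the paper simply asserts.
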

\begin{proof}[Proof of Corollary~\ref{corollary-not-finitely-countable-ambiguous-languages}]
	We first prove the following claim:
	\begin{claim} \label{claim-singletone-complement-is-regular}
		Let $L$ be a regular tree language containing a single tree. Then $\overline{L}$ is countably ambiguous.
	\end{claim}
	\begin{claimproof}
		Assume that $L = \{t\}$.
		$L$ is a regular language, and therefore $t$ is regular. By Fact \ref{fact:sect3}  there is a Moore machine $M = (\{l,r\}, \Sigma, Q_M, q_I^M, \delta_M, out_M)$ such that for each $v \in \{l,r\}^*$, $out(\widehat{\delta}(v)) = \sigma$ iff $t(v) = \sigma$ (that is, $M$ defines the function $t: \{l,r\}^* \rightarrow \Sigma$).
		
		We will use $M$ to construct a countably ambiguous automaton $\cA$ that accepts $\overline{L}$ by guessing a node $v \in \{l,r\}^*$ such that $t(v) \neq t'(v)$ for each tree $t' \in \overline{L}$.
		
		Let $\cA := (Q_\cA,\Sigma, Q_I, \delta, \mathbb{C})$ such that:
		\begin{itemize}
			\item $Q_\cA := \{q, q'\} \times Q_M$
			\item $Q_I := \{(q', q_I^M)\}$
			\item $\delta$ is defined by:
			\begin{itemize}
				\item $((q,p), a, (q,p'), (q,p'')) \in \delta$ iff $\delta_M(p, l) = p'$, $\delta_M(p, r) = p''$
				\item $((q',p), a, (q,p'), (q,p'')) \in \delta$ iff $\delta_M(p, l) = p'$, $\delta_M(p, r) = p''$ and $out(p) \neq a$
				\item $((q',p), a, (q',p'), (q,p'')), ((q',p), a, (q,p'), (q',p'')) \in \delta$ iff $\delta_M(p, l) = p'$, $\delta_M(p, r) = p''$ and $out(p) = a$.
			\end{itemize}
			\item $\forall p \in Q_M: \mathbb{C}(q,p) := 0$ and $\mathbb{C}(q',p) := 1$
		\end{itemize}
		
		By definition of $\cA$, it is clear that $t' \in L(\cA)$ iff there is a node $v$ such that $t'(v) \neq t(v)$, and therefore $t' \in L(\cA)$ iff $t' \neq t$.
		
		For each computation $\phi$ of $\cA$ on $t'$, the $Q_M$ component is determined deterministically by $M$ and $t$. If $\phi$ is accepting, there are finitely many nodes $v$ such that the first component of $\phi(v)$ is $q'$~-- otherwise, there would be a branch where the maximal color assigned infinitely often by $\mathbb{C}$ is odd, in contradiction to $\phi$ being an accepting computation. Therefore, there are countably many accepting computations on each tree $t' \in L(\cA)$, and $\cA$ is countably ambiguous.
	\end{claimproof}
	
	$L$ is finite and therefore there are $t_1, \dots, t_k \in T^\omega_\Sigma$ such that $L = \{t_1, \dots, t_k\}$.
	A finite tree language does not contain a non-regular tree, and therefore $t_1, \dots, t_k$ are regular.
	By Claim \ref{claim-singletone-complement-is-regular}, for each tree $t_i \in L$, there is a countably ambiguous automaton $\cA_i$ such that $t \in L(\cA_i)$ iff $t \neq t_i$. Notice that $\overline{L} = L(\cA_1) \cap \dots, \cap L(\cA_k)$, and therefore by Lemma~\ref{lemma-ambiguity-of-union-and-intersection} we conclude that $\overline{L}$ is countably ambiguous.	
\end{proof}

\loadCounter
\paragraph{On the Proof of Proposition \ref{prop:main-new}}
In the rest of this section, Proposition \ref{prop:main-new} is proved.
Let us sketch some ideas of the proof.
For a language $L$, as in Proposition \ref{prop:main-new}, and any non-empty antichain $Y$   we show that
 if $\cA$ does not accept $t_0$ and accepts $t:=t_0[t_1/Y]$, then every $\phi\in ACC(\cA,t)$ chooses (in an MSO-definable way) an element from $Y$. Hence, the computations in $ACC(\cA,t)$ choose together a subset $X$ of $Y$ of cardinality $\leq |ACC(\cA,t)|$ (each computation chooses a single element). Therefore, if $\cA$ accepts $ {L}$ and
is finitely ambiguous, then $X$ is finite~-- a contradiction to Lemma~\ref{lemma-definable-selection-functions}.
To implement this plan,
in Subsection~\ref{subsect:membership} we recall a game theoretical
interpretation of ``a tree is accepted by an automaton.'' Then, in Subsection~\ref{subsect:mso-defin-game} we analyze which concepts related to these games are MSO-definable. Finally, in Subsection~\ref{subsect:finish}, the proof is completed.
\subsection{Membership Game}\label{subsect:membership}
Let $\cA = (Q, \Sigma, \{q_I\}, \delta, \mathbb{C})$ be a PTA, and let $t$ be a $\Sigma$-labeled tree.	
A two-player game $G_{t, \cA}$ (called a ``membership game'') between Automaton and Pathfinder is defined as follows. The positions of Automaton are $\{l,r\}^* \times Q$, and the positions of Pathfinder are $\{l,r\}^* \times Q \times Q$. The initial position is $(\epsilon, q_I)$.

From a position $(v, q) \in \{l,r\}^* \times Q$ Automaton chooses a tuple $(q_l,q_r) \in Q \times Q$ such that $\exists a \in \Sigma: (q, a, q_l, q_r) \in \delta$, and moves to the position $(v, q_l, q_r)$. From a position $(v, q_l, q_r) \in \{l,r\}^* \times Q \times Q$ Pathfinder chooses a direction $d \in \{l,r\}$, and moves to the position $(v \cdot d, q_d)$.

We define a \textbf{play} $\overline{s} := e_0, d_0, e_1, d_1, \dots, e_i, d_i, \dots \in (Q \times Q \times \{l,r\})^\omega$ as an infinite sequence of moves, corresponding to the choices of Automaton and Pathfinder from the initial position.
We say that the move $e_i = (q_l, q_r)$ from position $(v, q)$ is \textbf{invalid} for Automaton if $(q, t(v), q_l, q_r) \notin \delta$.

A \textbf{strategy} for a player in $G_{t, \cA}$ is a function that determines the next move of the player based on previous moves of both players.

A \textbf{positional strategy} for a player in $G_{t, \cA}$ is a strategy that determines the next move of the player based only on the current position. A positional strategy for Automaton is a function $str: \{l,r\}^* \times Q \rightarrow Q \times Q$, and a positional strategy for Pathfinder is a function $STR: \{l,r\}^* \times Q \times Q \rightarrow \{l,r\}$.

Let $\mathbb{C}_G$ be a coloring function that maps each position in $G_{t, \cA}$ to a color in $\nat$. We define $\mathbb{C}_G(v, q) := \mathbb{C}(q)$ for Automaton's positions, and $\mathbb{C}_G(v,q_l ,q_r) := 0$ for Pathfinder's positions.

For each play $\overline{s}$ define $\pi_{\overline{s}}$ as the infinite sequence of positions corresponding to the moves in $\overline{s}$.	A play $\overline{s}$ is winning for Automaton iff $\overline{s}$ does not contain an invalid move for Automaton, and the maximal color that $\mathbb{C}_G$ assigns infinitely often to the positions in $\pi_{\overline{s}}$ is even. Since all Pathfinder's positions are colored by $0$, it is sufficient to consider the coloring of Automaton's positions in $\pi_{\overline{s}}$.

We say that a play is consistent with a strategy of a player if all moves of the player are according to the strategy. A \textbf{winning strategy} for a player is a strategy such that each play that is consistent with the strategy is winning for the player.

Parity games are positionally determined \cite{emerson1991tree}, i.e., for each parity game, one of the players has a positional winning strategy. Therefore, if a player has a winning strategy, then he has a positional winning strategy. Additionally, if a positional strategy of a player wins against all positional strategies of the other player, then it is a winning strategy.

We recall standard definitions and facts about the connections between games and tree automata \cite{gurevich1982trees,PinPerrin}.

Let $\phi: \{l,r\}^* \rightarrow Q$ be a function such that $\phi(\epsilon) = q_I$ and $\forall v \in \{l,r\} : \exists a \in \Sigma: (\phi(v), a, \phi(v \cdot l), \phi(v \cdot r)) \in \delta$. We define a positional strategy $str_\phi: \{l,r\}^* \times Q \rightarrow Q \times Q$ for Automaton, by $str_\phi(v,q) := (\phi(v \cdot l), \phi(v \cdot r))$.
Conversely, for each positional strategy $str: \{l,r\}^* \times Q \rightarrow Q \times Q$ of Automaton we construct a function $\phi_{\mathit{str}}: \{l,r\}^* \rightarrow Q$ by $\phi(\epsilon) := q_I$ and for all $v \in \{l,r\}^*$ we set $\phi(v \cdot l) := q_l$, and $\phi(v \cdot r) := q_r$ where $str(v, \phi(v)) = (q_l, q_r)$.

\begin{claim} \phantomsection \label{claim-computation-to-strategy}
	\begin{enumerate}
		\item \label{bullet-computation-on-vi} Let $\overline{s}$ be a play that is consistent with $str_\phi$, and let $(v_i, q_i)$ be the $i$-th position of Automaton in $\pi_{\overline{s}}$. Then, $\phi(v_i) = q_i$.
		
		\item If $\phi \in ACC(\cA, t)$, then $str_\phi$ is a positional winning strategy for Automaton.
		
                \item If $str$ is a positional winning strategy for Automaton, then $\phi_{\mathit{str}} \in ACC(\cA, t)$.
	\end{enumerate}
\end{claim}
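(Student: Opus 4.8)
The plan is to prove the three items by unwinding the definitions, invoking positional determinacy only through the characterization that a \emph{winning strategy} is one all of whose consistent plays are winning.

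For item (1) I would induct on $i$. For $i=0$ the initial position is $(\epsilon,q_I)$ and $\phi(\epsilon)=q_I$ by definition of a computation, so $\phi(v_0)=q_0$. For the inductive step, assume $\phi(v_i)=q_i$. Since $\overline{s}$ is consistent with $str_\phi$, from the Automaton position $(v_i,q_i)=(v_i,\phi(v_i))$ the move is $str_\phi(v_i,\phi(v_i))=(\phi(v_i\cdot l),\phi(v_i\cdot r))$, leading to the Pathfinder position $(v_i,\phi(v_i\cdot l),\phi(v_i\cdot r))$; whichever direction $d_i$ Pathfinder then picks, the next Automaton position is $(v_i\cdot d_i,\phi(v_i\cdot d_i))$, so $v_{i+1}=v_i\cdot d_i$ and $q_{i+1}=\phi(v_{i+1})$, as required.

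For item (2), fix $\phi\in ACC(\cA,t)$ and show that every play $\overline{s}$ consistent with $str_\phi$ is winning for Automaton, which makes $str_\phi$ a winning strategy. By item (1) the $i$-th Automaton position of such a play is $(v_i,\phi(v_i))$, and $v_0,v_1,\dots$ (with $v_{i+1}=v_i\cdot d_i$) is a tree branch $\pi$. First, no move is invalid: the move from $(v_i,\phi(v_i))$ is $(\phi(v_i\cdot l),\phi(v_i\cdot r))$, which is valid precisely because $\phi$ is a computation on $t$, i.e. $(\phi(v_i),t(v_i),\phi(v_i\cdot l),\phi(v_i\cdot r))\in\delta$ (this also shows the moves $str_\phi$ prescribes at reachable positions are legal). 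Second, since Pathfinder's positions are colored $0$ while the Automaton position $(v_i,\phi(v_i))$ is colored $\mathbb{C}(\phi(v_i))$, the maximal color occurring infinitely often in $\pi_{\overline{s}}$ equals the maximal color occurring infinitely often in $\mathbb{C}(\phi(v_0)),\mathbb{C}(\phi(v_1)),\dots$ (the interleaved $0$'s cannot raise this maximum), and that value is even because $\phi$ is accepting and $\pi$ is a branch.

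For item (3), let $str$ be a positional winning strategy and put $\phi:=\phi_{\mathit{str}}$. The key observation is that for every node $v$ there is a play consistent with $str$ whose Automaton positions include $(v,\phi(v))$: let Pathfinder follow the directions spelling out $v$ and Automaton follow $str$, and an induction dual to item (1) shows the Automaton positions along this branch are exactly $(v_i,\phi(v_i))$. Since $str$ is winning, this infinite play has no invalid move; in particular $str(v,\phi(v))=(\phi(v\cdot l),\phi(v\cdot r))$ is valid, i.e. $(\phi(v),t(v),\phi(v\cdot l),\phi(v\cdot r))\in\delta$. As $v$ is arbitrary, $\phi$ is a computation of $\cA$ on $t$. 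For acceptance, given any branch $\pi=v_0,v_1,\dots$, run the play in which Pathfinder follows $\pi$; being consistent with the winning strategy $str$ it is winning, and by the same coloring argument as in item (2) the maximal color seen infinitely often in $\mathbb{C}(\phi(v_0)),\mathbb{C}(\phi(v_1)),\dots$ is even. Hence $\phi\in ACC(\cA,t)$. I expect the only genuinely delicate point, shared by items (2) and (3), to be this coloring bookkeeping: one must observe that the game interleaves $0$-colored Pathfinder positions between consecutive Automaton positions, so that the parity condition of the game over a play coincides with the parity condition of $\cA$ over the corresponding branch; everything else is a direct induction on the definitions.
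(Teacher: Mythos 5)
Your proposal is correct, and for item (1) it is essentially identical to the paper's argument: the same induction on $i$, with the same base case $(v_0,q_0)=(\epsilon,q_I)$ and the same inductive step unwinding $str_\phi(v_i,q_i)=(\phi(v_i\cdot l),\phi(v_i\cdot r))$.

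The difference is in items (2) and (3): the paper does not prove them at all, but dismisses them as ``well known results about membership games'' with a citation to the literature, whereas you prove both in full. Your proofs are the standard ones and they go through: for (2), item (1) pins the reachable Automaton positions to $(v_i,\phi(v_i))$, validity of every move follows from $\phi$ being a computation on $t$, and the parity condition transfers from the branch to the play; for (3), the dual induction shows that Pathfinder can steer any play consistent with $str$ to pass through $(v,\phi(v))$ for any prescribed $v$, which yields both that $\phi_{\mathit{str}}$ is a computation (no invalid move at any reachable position) and that it is accepting (every branch is realized by some consistent, hence winning, play). The one bookkeeping point you flag as delicate~-- that interleaving the $0$-colored Pathfinder positions does not change the maximal color occurring infinitely often~-- is correct (since $0$ is the least possible color), and is in fact already observed in the paper's setup of the membership game, where it is noted that it suffices to consider the coloring of Automaton's positions in $\pi_{\overline{s}}$. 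So your version buys self-containedness: it replaces an appeal to folklore with an explicit argument, at the cost of a page of routine induction; the paper's version keeps the section short by outsourcing exactly the parts you worked out.
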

\begin{claimproof}
	(1) We will prove by induction on $i$. For $i=0$ we have $(v_0, q_0) = (\epsilon, q_I)$ (by definition of $G_{t, \cA}$), and indeed $\phi(v_0) = \phi(\epsilon) = q_I$. Assume the claim holds for $i = k$ and we prove for $i = k+1$.
	
	Let $d \in \{l,r\}$ be the $i$-th move of Pathfinder in $\overline{s}$. By definition of $G_{t, \cA}$ we have $v_{i+1} = v_i \cdot d$, and $q_{i+1} = q_d$, where $str_\phi(v_i, q_i) = (q_l, q_r)$.
	
	By definition $str_\phi$ we have $(q_l, q_r) = (\phi(v_i \cdot l), \phi(v_i \cdot l))$, and therefore $q_{i+1} = \phi(v_i \cdot d) = \phi(v_{i+1})$, as requested.		
	
	(2) and (3) are well known results about membership games \cite{PinPerrin}. 
\end{claimproof}

The next claim describes what happens when Pathfinder plays his winning strategy in~$G_{t, \cA}$ against an Automaton's winning strategy
in $G_{t', \cA}$ (for $t'\neq t$). 	
\begin{claim} \label{claim-automaton-makes-invalid-move}
	Assume $t \notin L(\cA)$ and let $\phi$ be an accepting computation of $\cA$ on a tree $t'$, and $STR$ be a winning strategy of Pathfinder in $G_{t, \cA}$. Let $\overline{s} := e_0, d_0, e_1, d_1, \dots, e_i, d_i, \dots$ be the play that is consistent with $str_\phi$ and $STR$. Then, there is $i \in \nat$ such that $e_i$ is an invalid move for Automaton in $G_{t, \cA}$. Moreover, if $e_i$ is the first invalid move for Automaton in $\overline{s}$, then $t(v) \neq t'(v)$ for
	$v:=d_0 \dots d_{i-1}$.
\end{claim}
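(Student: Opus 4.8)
The plan is to exploit the fact that the two membership games $G_{t,\cA}$ and $G_{t',\cA}$ live on exactly the same arena: they have identical positions, identical available moves (from $(v,q)$ Automaton may pick any $(q_l,q_r)$ occurring in \emph{some} transition, independently of the tree), and identical coloring $\mathbb{C}_G$, which depends only on $\cA$'s states and not on the tree. The two games differ solely in which Automaton moves count as \emph{invalid}, and hence in the winning condition. Consequently, the single play $\overline{s}$ — consistent with the positional strategies $str_\phi$ and $STR$ — is simultaneously a play of both games, and I can evaluate its color sequence against either winning condition.

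First I would establish the existence of an invalid move by contradiction. Suppose $\overline{s}$ contains no invalid move for Automaton in $G_{t,\cA}$. Since $\phi\in ACC(\cA,t')$, Claim~\ref{claim-computation-to-strategy}(2) gives that $str_\phi$ is a winning strategy for Automaton in $G_{t',\cA}$; as $\overline{s}$ is consistent with $str_\phi$, it is winning for Automaton there, so its color sequence satisfies the parity condition. Because the coloring is tree-independent, the very same color sequence still satisfies the parity condition when $\overline{s}$ is read inside $G_{t,\cA}$. Combined with the assumption that $\overline{s}$ has no invalid move in $G_{t,\cA}$, this makes $\overline{s}$ winning for Automaton in $G_{t,\cA}$. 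But $\overline{s}$ is also consistent with Pathfinder's winning strategy $STR$ in $G_{t,\cA}$, so it must be winning for Pathfinder — a contradiction. Hence some $e_i$ is an invalid move for Automaton in $G_{t,\cA}$.

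For the second part, let $e_i$ be the first invalid move, played from Automaton's position $(v_i,q_i)$, where tracing Pathfinder's choices gives $v_i=d_0\cdots d_{i-1}=v$. By Claim~\ref{claim-computation-to-strategy}(1) we have $\phi(v)=q_i$, and by the definition of $str_\phi$ the move is $e_i=(q_l,q_r)=(\phi(v\cdot l),\phi(v\cdot r))$. Invalidity in $G_{t,\cA}$ means $(q_i,t(v),q_l,q_r)\notin\delta$, whereas $\phi$ being a computation of $\cA$ on $t'$ yields $(\phi(v),t'(v),\phi(v\cdot l),\phi(v\cdot r))=(q_i,t'(v),q_l,q_r)\in\delta$. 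If $t(v)=t'(v)$ held, the identical tuple would be both inside and outside $\delta$; therefore $t(v)\neq t'(v)$, as required.

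The only genuinely delicate point is the observation in the first paragraph — that both games share a single arena and coloring, so that the parity verdict of $\overline{s}$ transfers verbatim between them and the clash is forced purely by the mismatch in validity. Once this is in place, the two previously established parts of Claim~\ref{claim-computation-to-strategy} carry the argument, and everything else is bookkeeping.
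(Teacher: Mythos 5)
Your proof is correct and takes essentially the same route as the paper's: the first part is the same contradiction argument (a play with no invalid move would satisfy the parity condition---because the coloring depends only on the states, and $\phi$ is accepting on $t'$---hence be winning for Automaton, contradicting $STR$), and the second part is the identical transition-comparison at the first invalid move. The only cosmetic difference is that you obtain the parity fact by applying Claim~\ref{claim-computation-to-strategy}(2) to $G_{t',\cA}$ and transferring across the shared arena, whereas the paper verifies it directly from Claim~\ref{claim-computation-to-strategy}(1) and the definition of an accepting computation.
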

\begin{claimproof}
	Assume, for the sake of contradiction, that $\overline{s}$ does not contain an invalid move for Automaton, and let $(v_i, q_i)$ be the $i$-th position of Automaton in $\pi_{\overline{s}}$.
	By definition of $G_{t, \cA}$ it is easy to see that $\pi = v_0, \dots, v_i, \dots$ is a branch in the full-binary tree. Since $\phi$ is an accepting computation of $\cA$ on $t'$, we conclude that the maximal color that $\mathbb{C}$ assigns infinitely often to states in $\phi(\pi)$ is even. By Claim \ref{claim-computation-to-strategy}(\ref{bullet-computation-on-vi}) we have $\phi(v_i) = q_i$, and therefore $\phi(\pi) = q_0 \dots q_i \dots$. By the definition of $\mathbb{C}_G$ we have $\mathbb{C}_G(v_i, q_i) = \mathbb{C}(q_i)$ and we conclude that the maximal color that $\mathbb{C}$ assigns infinitely often in $\pi_{\overline{s}}$ is even, and therefore the play is winning for Automaton~-- a contradiction to $STR$ being a winning strategy of Pathfinder.
	
	Therefore, Automaton makes an invalid move in $\overline{s}$. Let $e_i = (q_l, q_r)$ be the first invalid move of Automaton in $\overline{s}$. Since $e_i$ is invalid we have $(q_i, t(v_i), q_l, q_r) \notin \delta$, and by definition of $str_\phi$ we obtain $(q_l ,q_r) = (\phi(v_i \cdot l), \phi(v_i \cdot r))$. Since $\phi(v_i) = q_i$ we have $(\phi(v_i), t(v_i), \phi(v_i \cdot l), \phi(v_i \cdot r)) \notin \delta$. $\phi$ is a computation of $\cA$ on $t'$ and therefore $(\phi(v_i), {t'}(v_i), \phi(v_i \cdot l), \phi(v_i \cdot r)) \in \delta$, and we conclude that $t(v_i) \neq {t'}(v_i)$. Notice that by the definition of $G_{t, \cA}$ we have $v_i = d_0 \dots d_{i-1}$, and the claim follows.
\end{claimproof}
\subsection{MSO-definability}\label{subsect:mso-defin-game}
Throughout this section we will use the following conventions and terminology.
%
%
\begin{description}[labelwidth=*]
	\item[Positional Pathfinder strategies as labeled trees]
	We will define a positional strategy $STR$ for Pathfinder as a function in $\{l,r\}^* \times Q \times Q \rightarrow \{l,r\}$. Hence, it can be considered
	as a $Q\times Q \rightarrow \{l,r\}$ labeled tree. Below we will not distinguish between a positional Pathfinder's strategy and the corresponding
	$Q\times Q \rightarrow \{l,r\}$ labeled full-binary tree. In particular, we call such a strategy regular, if the corresponding tree
	is regular.
	\item[MSO-definability] We will use ``MSO-definable'' for ``MSO-definable in the unlabeled full-binary tree.''
\end{description}
The rest of the proof deals with MSO-definability.
By Claim \ref{claim-automaton-makes-invalid-move}, there is a function $Invalid_\cA(\phi, STR, t, v)$ that, for every accepting computation $\phi$ of $\cA$ on $t'$, returns a node $v$ such that ${t'}(v) \neq t(v)$. This function depends on the strategy $STR$ of Pathfinder.
The restriction of $Invalid_\cA$ to the
Pathfinder positional winning strategies in $G_{t, \cA}$ is MSO-definable (with parameters $t$ and $STR$) by the following formula $ Leads_\cA(\phi, STR, t, v)$, that describes in MSO the play of $\phi$ against $STR$ up to the first invalid move of Automaton (at the position $(v,\phi(v)$).

Define $Leads_\cA(\phi, STR, t, v) $ as the conjunction of:
\begin{enumerate}
	\item $\phi(\epsilon) = q_I $~--the play starts from the initial position.
	\item $\forall u < v: ((\phi(u), t(u), \phi(u \cdot l), \phi(u \cdot r)) \in \delta $~-- all Automaton's moves at the positions $(u,q)$, where $u$ is an ancestor of $v$ respect $\delta$. (By Claim \ref{claim-computation-to-strategy}(1), in any play consistent with $\phi$, Automaton
	can reach only the positions of the form $(u,\phi(u))$).
	\item $(\phi(v), {t}(v), \phi(v \cdot l), \phi(v \cdot r)) \notin \delta$~-- the Automaton move at $(v,\phi(v))$ is invalid.
	\item $ \forall u <v: (STR(u, \phi(u \cdot l), \phi(u \cdot r)) = l) \leftrightarrow u \cdot l \leq v))$~-- the Pathfinder moves $d_0\dots d_j\dots$ are consistent with $STR$ and are along the path from the root to $v$, i.e., 
	$d_0d_1 \dots d_j \leq v$.
\end{enumerate}
To sum up, we have the following claim:
\begin{claim} \label{claim:summary}
	$Leads_\cA(\phi, STR, t, v) $ defines a function that, for every tree $t\not\in L(\cA)$, every Pathfinder's positional (in $G_{t, \cA}$) winning strategy
	$STR$, and every $\phi \in ACC(\cA,t')$, returns a node $v$ such that $t(v)\neq t'(v)$. %
\end{claim}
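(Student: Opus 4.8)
The plan is to reduce Claim~\ref{claim:summary} directly to Claim~\ref{claim-automaton-makes-invalid-move}, by showing that the formula $Leads_\cA(\phi, STR, t, v)$ picks out exactly the node at which Automaton commits its first invalid move in the play of $str_\phi$ against $STR$. I would fix a tree $t \notin L(\cA)$, a positional winning strategy $STR$ for Pathfinder in $G_{t,\cA}$, and an accepting computation $\phi \in ACC(\cA, t')$ for some tree $t'$. Since both $str_\phi$ (the positional Automaton strategy induced by $\phi$) and $STR$ are deterministic positional strategies, they determine a unique play $\overline{s} = e_0, d_0, e_1, d_1, \dots$; let $(v_i, q_i)$ be the $i$-th Automaton position in $\pi_{\overline{s}}$. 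By Claim~\ref{claim-automaton-makes-invalid-move} there is a first invalid move $e_i$ for Automaton, occurring at position $(v_i, \phi(v_i))$ with $v_i = d_0 \cdots d_{i-1}$, and moreover $t(v_i) \neq t'(v_i)$. It therefore suffices to prove that $v_i$ is the unique node satisfying $Leads_\cA(\phi, STR, t, \cdot)$.

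First I would match the four conjuncts of $Leads_\cA$ to the features of this play. Condition~(1) holds because $\phi$ is a computation, so $\phi(\epsilon) = q_I$, the initial Automaton position. By Claim~\ref{claim-computation-to-strategy}(1), whenever the play reaches an Automaton position along $\pi_{\overline{s}}$ it has the form $(u, \phi(u))$, at which $str_\phi$ plays $(\phi(u\cdot l), \phi(u\cdot r))$, after which Pathfinder responds with $STR(u, \phi(u\cdot l), \phi(u\cdot r))$. Thus condition~(4) states precisely that, for every strict ancestor $u$ of $v$, Pathfinder's $STR$-response at $u$ steps toward $v$; this is exactly the assertion that $v$ lies on the branch traced out by $\overline{s}$. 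Conditions~(2) and~(3) then say that every Automaton move strictly above $v$ is valid with respect to $t$ while the move at $v$ is invalid, i.e.\ $v$ is the first node on that branch at which Automaton's transition is not in $\delta$.

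To finish, I would establish functionality, i.e.\ existence and uniqueness of such a $v$. For existence, the node $v_i$ supplied by Claim~\ref{claim-automaton-makes-invalid-move} satisfies all four conjuncts: it lies on the branch $d_0 \cdots d_{i-1}$ (so~(4) holds), all earlier Automaton moves are valid by minimality of $i$ (so~(2) holds), and $e_i$ is invalid (so~(3) holds). For uniqueness, condition~(4) forces any witness onto the single branch determined by $STR$ against $str_\phi$, since Pathfinder's choice at each ancestor is a fixed value pinned down by $\phi$ and $STR$; and along a fixed branch the first-invalid-move node is unique, so conditions~(2)--(3) admit no other witness. The desired conclusion $t(v) \neq t'(v)$ is then immediate from the ``moreover'' part of Claim~\ref{claim-automaton-makes-invalid-move}.

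The part requiring the most care is the translation in the second paragraph: one must verify that the formula, which quantifies only over values of $\phi$ at tree nodes, genuinely reconstructs the branch and the first invalid position of the game play~-- in particular that condition~(4) correctly couples Pathfinder's $STR$-inputs $(\phi(u\cdot l), \phi(u\cdot r))$ to Automaton's $\phi$-induced moves via Claim~\ref{claim-computation-to-strategy}(1). Once this correspondence is pinned down, functionality and the final inequality follow mechanically from the two earlier claims.
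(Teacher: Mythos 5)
Your proof is correct and follows essentially the same route as the paper: the paper justifies Claim~\ref{claim:summary} implicitly through the annotations attached to the four conjuncts of $Leads_\cA$, citing Claim~\ref{claim-computation-to-strategy}(1) for the coupling between $\phi$ and the play and Claim~\ref{claim-automaton-makes-invalid-move} for the existence of a first invalid move with $t(v)\neq t'(v)$. You simply make explicit the existence and uniqueness verification that the paper leaves to the reader, which is a faithful elaboration rather than a different argument.
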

\noindent
Claim \ref{claim:summary} plays a crucial role in our proof. It is instructive to compare it with Theorem \ref{theorem-MSO-choice-function}
which implies that there is no MSO-definable function $F(t,D,v)$ that for a tree $t\neq t'$ and $D:=\{u\mid t(u)\neq t'(u)\}$ returns a node $v$ such that $t (v)\neq t'(v)$.

The following claim is folklore. Due to the lack of references, it is proved in the Appendix.

\vspace{0pt}
\begin{restatable}{claim}{claimregularpositionalwinningstrategy} \label{claim-regular-positional-winning-strategy}
	Let $t_0$ be a regular tree such that $t_0 \notin L(\cA)$. Then, Pathfinder has a regular positional winning strategy in $G_{t_0, \cA}$.
\end{restatable}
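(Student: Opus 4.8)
\textbf{The plan} is to exploit regularity of $t_0$ to collapse the infinite membership game $G_{t_0,\cA}$ to a \emph{finite} parity game, solve it by positional determinacy, and pull the resulting positional strategy back along the collapse. Since $t_0$ is regular, by Fact~\ref{fact:sect3} there is a Moore machine $M=(\{l,r\},\Sigma,Q_M,q_I^M,\delta_M,out_M)$ defining $t_0$; write $\widehat{\delta}_M(v)\in Q_M$ for the state $M$ reaches after reading the path to $v$, so that $t_0(v)=out_M(\widehat{\delta}_M(v))$. I will define a finite parity game $\widehat{G}$ whose Automaton positions are $Q_M\times Q$, whose Pathfinder positions are $Q_M\times Q\times Q$, and whose initial position is $(q_I^M,q_I)$. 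From $(p,q)$ Automaton chooses $(q_l,q_r)$ with $\exists a\in\Sigma:(q,a,q_l,q_r)\in\delta$ and moves to $(p,q_l,q_r)$, this move being \emph{invalid} exactly when $(q,out_M(p),q_l,q_r)\notin\delta$; from $(p,q_l,q_r)$ Pathfinder chooses $d\in\{l,r\}$ and moves to $(\delta_M(p,d),q_d)$. Colors are inherited as $\widehat{\mathbb{C}}(p,q):=\mathbb{C}(q)$ and $\widehat{\mathbb{C}}(p,q_l,q_r):=0$, and the winning condition for Automaton is the same as in $G_{t_0,\cA}$ (no invalid move occurs and the maximal color seen infinitely often is even).

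Next I would establish that the projection $\Pi$ sending $(v,q)\mapsto(\widehat{\delta}_M(v),q)$ and $(v,q_l,q_r)\mapsto(\widehat{\delta}_M(v),q_l,q_r)$ is a faithful game morphism: since $\widehat{\delta}_M(v\cdot d)=\delta_M(\widehat{\delta}_M(v),d)$ and $t_0(v)=out_M(\widehat{\delta}_M(v))$, the map $\Pi$ carries legal moves to legal moves, preserves validity of Automaton moves, and preserves colors position-by-position. Consequently $\Pi$ maps each play of $G_{t_0,\cA}$ to a play of $\widehat{G}$ with identical color sequence and identical first invalid move (if any), so a play is winning for Pathfinder in one game iff its image/preimage is winning in the other. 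This gives a two-way transfer of strategies that \emph{factor through} $\Pi$: a positional strategy in $\widehat{G}$ pulls back to a positional strategy in $G_{t_0,\cA}$, and a play consistent with the pullback projects to a play consistent with the original, preserving the winner.

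I then argue that Automaton loses $\widehat{G}$. If Automaton had a winning strategy in $\widehat{G}$, then (by positional determinacy of the finite parity game, \cite{emerson1991tree}) a positional one $\widehat{str}$, and its pullback $str(v,q):=\widehat{str}(\widehat{\delta}_M(v),q)$ would be a positional winning strategy for Automaton in $G_{t_0,\cA}$; by Claim~\ref{claim-computation-to-strategy}(3) this yields $\phi_{str}\in ACC(\cA,t_0)$, contradicting $t_0\notin L(\cA)$. Hence, by positional determinacy of $\widehat{G}$, Pathfinder has a positional winning strategy $\widehat{STR}:Q_M\times Q\times Q\to\{l,r\}$ in $\widehat{G}$. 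Define $STR(v,q_l,q_r):=\widehat{STR}(\widehat{\delta}_M(v),q_l,q_r)$; by the transfer established above, $STR$ is a positional winning strategy for Pathfinder in $G_{t_0,\cA}$.

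Finally, regularity of $STR$ is immediate from its construction. Viewing $STR$ as a $Q\times Q\to\{l,r\}$-labeled tree, its label at $v$ is the function $(q_l,q_r)\mapsto\widehat{STR}(\widehat{\delta}_M(v),q_l,q_r)$, which depends on $v$ only through the finite datum $\widehat{\delta}_M(v)$. Composing the Moore machine $M$ with the finite table $\widehat{STR}$ as an output function yields a Moore machine computing the labeling of $STR$, so by Fact~\ref{fact:sect3} this strategy tree is regular, which is exactly what is claimed. The only genuinely delicate point is the second paragraph: one must check carefully that $\Pi$ preserves the \emph{entire} winning condition, including the ``first invalid move'' clause, and that the strategy transfer runs in the correct direction for each player; everything else is bookkeeping over the finite product $Q_M\times Q$ and the standard positional determinacy of finite parity games.
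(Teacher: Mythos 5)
Your proof is correct, but it takes a different route from the paper's official proof of this claim. The paper argues via logic: it writes an MSO-formula $\mathit{WinningStrategy}_{t_0,\cA}(STR)$ expressing that $STR$ is a positional winning strategy for Pathfinder (using the MSO-definability of $t_0$ and of accepting computations), shows that this formula defines a non-empty language of $Q\times Q\to\{l,r\}$-labeled trees by invoking positional determinacy of the infinite game $G_{t_0,\cA}$, and then extracts a regular strategy by Rabin's basis theorem. You instead collapse $G_{t_0,\cA}$ onto a finite parity game $\widehat{G}$ over $Q_M\times Q$ via the Moore machine of $t_0$, solve $\widehat{G}$ by positional determinacy of finite parity games, and pull the strategy back; regularity of the resulting strategy is then automatic from the finite-memory construction rather than coming from Rabin's basis theorem. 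This is exactly the ``logic free proof'' that the paper only sketches in a closing remark of its appendix, and your writeup supplies the details that remark omits. What each approach buys: the paper's proof stays inside the MSO toolkit it uses everywhere else and avoids setting up the product game and the two-directional strategy transfer; yours is more elementary and self-contained (no Rabin basis theorem, no MSO formulas), is effective in that it explicitly constructs the strategy from a solution of a finite game, and only needs determinacy of finite parity games plus Claim~\ref{claim-computation-to-strategy}(3), rather than determinacy of parity games on infinite arenas. One cosmetic point you share with the paper: the membership game's ``invalid move'' clause means $\widehat{G}$ is not literally a parity game; to invoke positional determinacy one should (as is standard) either forbid invalid moves or route them to an odd-colored sink losing for Automaton, which changes nothing in your argument.
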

Let $t_0$ be a regular tree such that $t_0 \notin L(\cA)$. By Claim \ref{claim-regular-positional-winning-strategy} there is a regular
positional winning strategy $\widehat{STR}$ of Pathfinder in $G_{t_
	0, \cA}$. Now, we can substitute $\widehat{STR}$ and $t_0$ for arguments $STR$ and $t$ of $Leads_\cA$ and obtain the following Proposition:
\setcounter{claimCounter}{\value{claim}}
\addtocounter{thm}{4}
\begin{prop} \label{claim-leads2}
	For every regular tree 	$t_0 \notin L(\cA)$ and a regular positional winning strategy $\widehat{STR}$ for Pathfinder in $G_{t_0, \cA}$, there is an MSO-definable function that, for each accepting computation $\phi$ of $\cA$ on $t'$, returns a node $v$ such that $t_0(v) \neq t'(v)$.
\end{prop}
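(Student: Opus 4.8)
The plan is to assemble Proposition~\ref{claim-leads2} directly from the machinery already in place, treating it as a specialization of Claim~\ref{claim:summary} rather than as a fresh statement to prove from scratch. The goal is to produce a single MSO-formula, with no free second-order or first-order parameters beyond $\phi$, $t'$ (coded as $\sigma$) and $v$, that outputs a node $v$ where $t_0$ and $t'$ disagree, for every accepting computation $\phi$ of $\cA$ on $t'$. The starting material is the parametrized formula $Leads_\cA(\phi, STR, t, v)$, which we already know defines the desired disagreement-node function whenever $t \notin L(\cA)$ and $STR$ is a Pathfinder positional winning strategy in $G_{t, \cA}$.

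First I would fix the regular tree $t_0 \notin L(\cA)$ from the hypothesis and invoke Claim~\ref{claim-regular-positional-winning-strategy} to obtain a \emph{regular} positional winning strategy $\widehat{STR}$ for Pathfinder in $G_{t_0, \cA}$. Regularity is the crucial ingredient: because $\widehat{STR}$ is a regular $Q \times Q \to \{l,r\}$ labeled tree, its graph is MSO-definable in the unlabeled full-binary tree by the characterization of regular trees (Rabin's theorem, as recalled after the statement of Theorem~\ref{theorem-MSO-choice-function}); let $\psi_{\widehat{STR}}(STR)$ be the defining formula. Symmetrically, $t_0$ is regular, so there is an MSO-formula $\psi_{t_0}(\sigma^\Sigma)$ that holds exactly of the labeling $t_0$. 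The point is that quantities which appeared as \emph{parameters} in $Leads_\cA$ can now be internalized: they are no longer arbitrary inputs but specific MSO-definable objects.

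The construction is then to define the promised function by the formula
\[
  \Phi(\phi, \sigma', v) \;:=\; \exists STR\, \exists \sigma\; \bigl( \psi_{\widehat{STR}}(STR) \wedge \psi_{t_0}(\sigma) \wedge Leads_\cA(\phi, STR, \sigma, v) \bigr),
\]
where $\sigma'$ codes the tree $t'$. Since $\psi_{\widehat{STR}}$ pins down $STR$ uniquely to $\widehat{STR}$ and $\psi_{t_0}$ pins down $\sigma$ uniquely to $t_0$, the existential quantifiers are harmless: $\Phi(\phi, \sigma', v)$ holds iff $Leads_\cA(\phi, \widehat{STR}, t_0, v)$ holds. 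Now I would apply Claim~\ref{claim:summary} with $t := t_0$ and $STR := \widehat{STR}$: the hypotheses $t_0 \notin L(\cA)$ and ``$\widehat{STR}$ is a Pathfinder positional winning strategy in $G_{t_0, \cA}$'' are exactly what we secured, so for every $\phi \in ACC(\cA, t')$ the formula returns a node $v$ with $t_0(v) \neq t'(v)$, as required.

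The only genuine content beyond bookkeeping is verifying that $\widehat{STR}$ and $t_0$ are MSO-definable as individual objects so that their parameter slots can be eliminated, and this is precisely where Claim~\ref{claim-regular-positional-winning-strategy} (regularity of the Pathfinder strategy) and the regularity of $t_0$ do all the work, combined with the fact that a regular tree has an MSO-definable labeling. I would therefore expect the main (and essentially sole) obstacle to have been discharged already in Claim~\ref{claim-regular-positional-winning-strategy}; the remaining argument is the routine internalization of parameters into existentially quantified, uniquely-characterized second-order variables, so the proof of Proposition~\ref{claim-leads2} reduces to substituting $\widehat{STR}$ and $t_0$ into $Leads_\cA$ and reading off the conclusion of Claim~\ref{claim:summary}.
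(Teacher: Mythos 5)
Your proof is correct and is essentially the paper's own argument: both take the regular objects $t_0$ and $\widehat{STR}$, use their MSO-definability to write defining formulas $\psi_{t_0}(\sigma)$ and $\psi_{\widehat{STR}}(STR)$, existentially quantify these uniquely-determined parameters in front of $Leads_\cA(\phi, STR, \sigma, v)$, and conclude via Claim~\ref{claim:summary}. The only (harmless) deviations are that you re-derive $\widehat{STR}$ from Claim~\ref{claim-regular-positional-winning-strategy} although the proposition already supplies it as a hypothesis, and that you carry an unused parameter $\sigma'$ for $t'$, which in fact enters only through $\phi$.
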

\begin{claimproof} Let $\psi_{t_0}(\sigma)$ and $\psi_{\widehat{STR}}(STR) $ be MSO-formulas that define $t_0$ and
	$\widehat{STR}$.
	Then, by Claim \ref{claim:summary}, $\exists \sigma \exists STR: \psi_{t_0}(\sigma) \wedge \psi_{\widehat{STR}}(STR) \wedge Leads_\cA(\phi, STR, \sigma, v)$ defines
	such a function.
\end{claimproof}
Let us continue with  the proof of 	 Proposition \ref{prop:main-new}.
Recall that for trees $t$ and $t'$ and an antichain $Y$, we denote by $t[t'/Y]$ the tree obtained from $t$ by grafting $t'$ at every node in $Y$.
%
%

%
\addtocounter{thm}{-5}%
\setcounter{claim}{\value{claimCounter}}
\begin{claim} \label{claim-t0<Y>t1-is-regular}
	Let $t_0$ and $t_1$ be regular trees. Then, there is an MSO-formula $\graft_{t_0,t_1}(Y,\sigma) $ defining a function that
	for every antichain $Y$ returns the tree $t_0[ t_1/Y]$.
\end{claim}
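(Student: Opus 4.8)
The plan is to build the MSO-formula $\graft_{t_0,t_1}(Y,\sigma)$ directly from the known regularity of $t_0$ and $t_1$. By Rabin's theorem (the characterization used earlier in the excerpt), since $t_0$ and $t_1$ are regular there are MSO-formulas $\psi_{t_0}(\sigma)$ and $\psi_{t_1}(\sigma)$ over the coding $\sigma$ of a labeling $\{l,r\}^*\to\Sigma$ such that $\psi_{t_0}$ (resp.\ $\psi_{t_1}$) holds of exactly the labeling $t_0$ (resp.\ $t_1$). The target tree $t_0[t_1/Y]$ is, by the very definition of grafting, the labeling that agrees with $t_1$ on every node lying at-or-below some $y\in Y$, and agrees with $t_0$ everywhere else. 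So the whole task reduces to expressing this ``piecewise'' description in MSO, using $\psi_{t_0}$ and $\psi_{t_1}$ as black boxes together with the ancestor relation $\le$, which is MSO-definable by the first Lemma of the excerpt.

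The key steps, in order, are as follows. First I would introduce an auxiliary node-predicate ``$u$ is grafted,'' meaning $\exists y\,(y\in Y \wedge y\le u)$; this is MSO-definable since $\le$ is. Second, I would capture the portion of $\sigma$ sitting in each grafted cone: for a fixed $y\in Y$ the subtree $\sigma_{\ge y}$ must equal $t_1$, and I express this by relativizing $\psi_{t_1}$ to the cone rooted at $y$. Concretely I would write $\psi_{t_1}$ with all its first-order quantifiers and atomic formulas shifted so that the root is $y$ instead of $\epsilon$ — i.e.\ relativize quantifiers to $\{u\mid y\le u\}$ and replace the ``root'' test inside $\psi_{t_1}$ by ``$=y$'' and each $E_l,E_r$ edge step by the same edge within that cone (these edge relations restrict cleanly since the cone is closed under children). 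Third, for the non-grafted part I would relativize $\psi_{t_0}$ to the complement region $\{u\mid \neg\exists y(y\in Y\wedge y\le u)\}$; since $t_0[t_1/Y]$ agrees with $t_0$ precisely off the grafted cones, $\sigma$ restricted to this region must coincide with $t_0$ restricted to the same region. Finally $\graft_{t_0,t_1}(Y,\sigma)$ is the conjunction of (the $t_0$-agreement on the non-grafted region), (for every $y\in Y$ the cone-relativized $t_1$-agreement), and the assertion that $Y$ is an antichain (already MSO-definable). One verifies that for an antichain $Y$ the unique $\sigma$ satisfying this conjunction is exactly $t_0[t_1/Y]$, matching the two-case formula in the definition of grafting.

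The main obstacle is the relativization of the pre-existing formulas $\psi_{t_0}$ and $\psi_{t_1}$ so that they constrain only the intended region rather than the whole tree. One cannot simply conjoin $\psi_{t_0}(\sigma)$ and $\psi_{t_1}(\sigma)$, since neither holds of the grafted tree globally; the content of the step is the syntactic relativization, and the subtlety is that $\psi_{t_0}$ and $\psi_{t_1}$ may quantify over second-order (set) variables, so relativization must also restrict those set quantifiers to the appropriate region and be careful that a grafted cone is a genuine copy of the full binary tree (it is, via the prefix-shift $u\mapsto y\cdot u$, so the relativized formula faithfully encodes ``$\sigma_{\ge y}=t_1$''). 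Once this relativization is carried out mechanically — a standard but bookkeeping-heavy operation — the correctness proof is immediate from the definition of $t_0[t_1/Y]$ and the defining properties of $\psi_{t_0},\psi_{t_1}$. An alternative, possibly cleaner, route uses Fact~\ref{fact:sect3}: since $t_0,t_1$ are regular their labelings are Moore-machine-definable, and one can run the two machines in the two regions and splice at the antichain $Y$; but the relativization approach is more direct given the MSO framework already set up, so I would pursue that.
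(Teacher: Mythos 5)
Your treatment of the grafted cones is sound and matches the paper: for each $y \in Y$ the cone $\{u \mid u \geq y\}$ is isomorphic to the full binary tree via $u \mapsto y \cdot u$, so relativizing the first-order quantifiers of $\psi_{t_1}$ to $\geq y$ faithfully expresses $\sigma_{\geq y} = t_1$; this is exactly the paper's formula $\psi^{\geq y}_{t_1}$. The gap is in your third step, the $t_0$-part. The non-grafted region $R := \{u \mid \neg \exists y\, (y \in Y \wedge y \leq u)\}$ is prefix-closed but it is \emph{not} a copy of the full binary tree: every node whose child lies in $Y$ loses that child in $R$. Consequently, relativizing $\psi_{t_0}$ to $R$ does not express ``$\sigma$ agrees with $t_0$ on $R$.'' The formula $\psi_{t_0}$ is only guaranteed to have its intended meaning when evaluated over the full binary tree; on the truncated structure $R$ its relativization can say anything. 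Concretely, over $\Sigma = \{a,c\}$ the formula $\psi_{t_c}(\sigma) := \forall u\, (\sigma(u) = c) \,\vee\, \exists u\, \neg \exists v\, E_l(u,v)$ is a legitimate MSO-definition of the all-$c$ tree $t_c$ (the second disjunct is false in the full binary tree), yet its relativization to $R$ is satisfied by \emph{every} labeling whenever $Y \neq \emptyset$, because some node of $R$ has its left child grafted away. With this admissible choice of $\psi_{t_0}$, your conjunction would put no constraint at all on the non-grafted part, so the constructed $\graft_{t_0,t_1}(Y,\sigma)$ would not define the intended function.

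The paper avoids this trap with a small but essential device that your proof is missing: instead of relativizing $\psi_{t_0}$, it existentially quantifies a \emph{global} labeling $\sigma_0$ of the whole tree, asserts $\psi_{t_0}(\sigma_0)$ (which pins $\sigma_0$ down to be exactly $t_0$, since here $\psi_{t_0}$ is evaluated on the full binary tree, where it is trustworthy), and then requires pointwise agreement $\sigma(v) = \sigma_0(v)$ for every node $v$ having no ancestor in $Y$. Your argument becomes correct if you replace your step three by this clause. Alternatively, you could fix a specific well-behaved definition of $t_0$ (say, one encoding a Moore-machine run, whose top-down consistency conditions do restrict soundly to prefix-closed regions), but then the soundness of that restriction is a claim requiring proof, not the purely ``mechanical'' bookkeeping your write-up asserts.
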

\begin{claimproof}[Proof of Claim \ref{claim-t0<Y>t1-is-regular}]
	$t_0$ and $t_1$ are regular, and therefore there are MSO-formulas $\psi_{t_0}(\sigma)$ and $\psi_{t_1}(\sigma)$ that defines $t_0$ and $t_1$. 
	
	Let $\psi^{\geq y}_{t_1}(y, \sigma)$ be a formula that is obtained from $\psi_{t_1}(\sigma)$ by relativizing the first-order quantifiers to $\geq y$, i.e., by replacing subformulas of the form $\exists x (\dots) $ and $\forall x(\dots)$ by
	$\exists x (x\geq y) \wedge (\dots)$ and $\forall x (x\geq y) \rightarrow (\dots)$. Then, $v,t\models \psi^{\geq y}_{t_1}(y, \sigma)$
	iff $t_{\geq v} = t_1$.
	Hence, $\graft_{t_0,t_1}(Y,\sigma) $ can be defined as the conjunction of:
	\begin{enumerate}
		\item $\exists \sigma_{ 0} \psi_{t_0}(\sigma_{0})\wedge \forall v $~-- ``if no $Y$ node is an ancestor of $v$ then $\sigma(v)=\sigma_{0}(v)$,'' and
		\item $\forall y ( y\in Y) \rightarrow \psi^{\geq y}_{t_1}(y,
                  \sigma)$~-- ``at every node in $Y$ a tree $t_1$ is grafted.''
                  \qedhere
	\end{enumerate}
\end{claimproof}

\subsection{Finishing the Proof of Proposition \ref{prop:main-new} }\label{subsect:finish}
{Now, we} have all the ingredients ready for the proof of Proposition \ref{prop:main-new}.

Let $\cA$ be such that $L(\cA) =  {L}$, and let $\lead $ be a formula that defines the function from Proposition \ref{claim-leads2} ($t_0[ t_1/Y]$ now takes the role of $t'$).

Define a formula: $Choice_{\cA, t_0, t_1, \widehat{STR}}(Y, \phi, y) := y \in Y \wedge \exists v (\lead \wedge v \geq y)$.

\begin{claim} \label{claim-choice-function-with-computation}
	$Choice_{\cA, t_0, t_1, \widehat{STR}}(Y, \phi, y)$ defines a function that for every non-empty antichain $Y$ and an accepting computation $\phi$ of $\cA$ on $t_0 [t_1/Y ]$, returns a node $y \in Y$.
\end{claim}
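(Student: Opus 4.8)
The plan is to read off the selected node from the node $v$ already produced by Proposition~\ref{claim-leads2}, and then to use the antichain structure of $Y$ to conclude that exactly one element of $Y$ lies above $v$. Throughout I would fix a non-empty antichain $Y$ and an accepting computation $\phi\in ACC(\cA,t')$ with $t':=t_0[t_1/Y]$. Since $t_0\notin L(\cA)$ and $\widehat{STR}$ is a regular positional winning strategy for Pathfinder in $G_{t_0,\cA}$, the hypotheses of Proposition~\ref{claim-leads2} are met, so it supplies a \emph{unique} node $v$ satisfying $\lead$, and this $v$ satisfies $t_0(v)\neq t'(v)$.

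Next I would pin down where $v$ can lie. By the definition of grafting (see Claim~\ref{claim-t0<Y>t1-is-regular}), $t'=t_0[t_1/Y]$ agrees with $t_0$ at every node that is not at or below some element of $Y$; the labels can change only inside the subtrees rooted at the nodes of $Y$. Hence $t_0(v)\neq t'(v)$ forces $v\geq y$ for some $y\in Y$. Because $Y$ is an antichain, this $y$ is unique: if $y_1,y_2\in Y$ were both prefixes of $v$ they would be $\leq$-comparable, so the antichain condition gives $y_1=y_2$. Thus the unique disagreement node $v$ determines a single element $y\in Y$ with $y\leq v$.

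It then remains to unwind the definition $Choice_{\cA,t_0,t_1,\widehat{STR}}(Y,\phi,y)=y\in Y\wedge\exists v(\lead\wedge v\geq y)$. Since $\lead$ has only the single witness $v$ produced in the first step, the conjunct $\exists v(\lead\wedge v\geq y')$ holds for $y'\in Y$ exactly when $y'\leq v$, which by the previous paragraph happens for precisely one $y'\in Y$. Hence the formula is satisfied by exactly one value of $y$, and the first conjunct $y\in Y$ guarantees that this value lies in $Y$, which is the claim.

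I expect the only delicate point to be the localization step: reading off from the definition of $t_0[t_1/Y]$ that $t_0$ and $t'$ can disagree only at or below the nodes of $Y$. Combined with the antichain property, this is precisely what converts the single disagreement node handed over by Proposition~\ref{claim-leads2} into a well-defined choice of one element of $Y$; everything else is bookkeeping around the fact that $\lead$ is already functional.
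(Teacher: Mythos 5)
Your proposal is correct and follows essentially the same route as the paper's proof: invoke Proposition~\ref{claim-leads2} to obtain the disagreement node $v$ with $t_0(v)\neq(t_0[t_1/Y])(v)$, then use the definition of grafting together with the antichain property of $Y$ to conclude that there is a unique $y\in Y$ with $y\leq v$. The paper states this in two sentences; you merely spell out the localization and uniqueness steps in more detail.
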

\begin{claimproof}
	By Proposition \ref{claim-leads2}, $\lead $ returns a node $v$ such that $t_0(v) \neq (t_0[t_1/Y])(v)$. By definition of $t_0 [t_1/Y]$, there is a unique node $y \in Y$ such that $v \geq y$.
\end{claimproof}
Define
$ChooseSubset_{\cA, t_0, t_1, \widehat{STR}}(Y, X) := \forall x: x \in X $ iff the following conditions hold:
\begin{enumerate}
	\item $ x \in Y$ and
	\item $ \exists \sigma$ such that
	\begin{enumerate}
		\item $ \graft_{t_0,t_1}(Y,\sigma)$~-- ``$\sigma = t_0[t_1/Y]$'' and
		\item $\exists \phi AcceptingRun_\cA(\sigma, \phi) \wedge Choice_{\cA, t_0, t_1, \widehat{STR}}(Y, \phi, x)$, {where}
		$ AcceptingRun_\cA(\sigma, \phi)$ defines ``$\phi $ is an accepting computation of $\cA$ on the tree $\sigma$.''
	\end{enumerate}
\end{enumerate}

\begin{claim} \label{claim-choice-function-last}
	$ChooseSubset_{\cA, t_0, t_1, \widehat{STR}}(Y, X) $ defines a function that maps every non-empty antichain $Y$ to a non-empty subset $X\subseteq Y$. Moreover, $|X|\leq |ACC(A, t_0[t_1/Y])|$.
\end{claim}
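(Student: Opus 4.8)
The plan is to read off all four required properties directly from the definition of $ChooseSubset_{\cA, t_0, t_1, \widehat{STR}}(Y, X)$, leaning on the two immediately preceding claims. First I would note that the formula defines a \emph{function} (a unique $X$ for each non-empty antichain $Y$): membership $x \in X$ is specified by an ``iff'' biconditional whose right-hand side mentions only $Y$, since $t_0$, $t_1$ and $\widehat{STR}$ are fixed parameters. Hence for each $Y$ the set $X$ is determined uniquely. Likewise $X \subseteq Y$ is immediate, because condition (1) in the definition forces every element of $X$ to lie in $Y$.

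Next I would establish that $X$ is non-empty, which is the one place the standing hypothesis of Proposition \ref{prop:main-new} is used. For a non-empty antichain $Y$ we have $t_0[t_1/Y] \in L = L(\cA)$, so $ACC(\cA, t_0[t_1/Y]) \neq \emptyset$; pick any $\phi$ in this set. By Claim \ref{claim-t0<Y>t1-is-regular} the unique $\sigma$ satisfying $\graft_{t_0,t_1}(Y,\sigma)$ is exactly $t_0[t_1/Y]$, and $AcceptingRun_\cA(\sigma,\phi)$ holds for this $\phi$. By Claim \ref{claim-choice-function-with-computation} the formula $Choice_{\cA, t_0, t_1, \widehat{STR}}(Y, \phi, y)$ then returns some node $y \in Y$. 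This $y$ witnesses conditions (1) and (2) simultaneously, so $y \in X$ and thus $X \neq \emptyset$.

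For the cardinality bound I would exhibit a surjection from the accepting computations onto $X$. By Claim \ref{claim-choice-function-with-computation}, $Choice_{\cA, t_0, t_1, \widehat{STR}}$ assigns to each $\phi \in ACC(\cA, t_0[t_1/Y])$ exactly one node $c(\phi) \in Y$ (it is a \emph{function} of $\phi$). Unwinding the definition of $ChooseSubset$, we have $x \in X$ iff some accepting computation chooses $x$, i.e.\ $X = \{\, c(\phi) \mid \phi \in ACC(\cA, t_0[t_1/Y]) \,\}$. Therefore $\phi \mapsto c(\phi)$ is a surjection $ACC(\cA, t_0[t_1/Y]) \twoheadrightarrow X$, which yields $|X| \leq |ACC(\cA, t_0[t_1/Y])|$.

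I expect the argument to be essentially bookkeeping once the two previous claims are available; the only genuine content is the invocation of $t_0[t_1/Y] \in L$ to guarantee at least one accepting computation (hence $X \neq \emptyset$), together with the observation that $Choice$ being single-valued on each $\phi$ is exactly what turns the definition of $X$ into the image of $ACC(\cA, t_0[t_1/Y])$. I anticipate no substantial obstacle; the main care point is keeping the fixed parameters $t_0$, $t_1$, $\widehat{STR}$ consistent so that the $\sigma$ produced by $\graft_{t_0,t_1}$ and fed into $AcceptingRun_\cA$ and $Choice_{\cA, t_0, t_1, \widehat{STR}}$ is literally the tree $t_0[t_1/Y]$.
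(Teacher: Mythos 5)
Your proof is correct and follows the same route as the paper's: non-emptiness of $X$ comes from $t_0[t_1/Y]\in L$ (the hypothesis of Proposition \ref{prop:main-new}) giving at least one accepting computation, and the bound $|X|\leq |ACC(\cA, t_0[t_1/Y])|$ comes from Claim \ref{claim-choice-function-with-computation} making $X$ the image of $ACC(\cA, t_0[t_1/Y])$ under a single-valued choice map. The paper's proof is just a terser version of this; your explicit surjection argument and the check that the $\sigma$ in the formula is literally $t_0[t_1/Y]$ are the same steps spelled out.
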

\begin{claimproof} If $Y$ is non-empty, then $t_0[t_1/Y]\in  {L}$. Hence, $\cA$ has at least one
	accepting computation on $t_0[t_1/Y]$. Therefore, $X$ is non-empty, by Claim \ref{claim-choice-function-with-computation}. The ``Moreover'' part immediately follows from Claim \ref{claim-choice-function-with-computation}.
\end{claimproof}
Let $\cA$ be such that $L(\cA) =  {L}$ and assume, for the sake of contradiction, that $\cA$ is finitely ambiguous. In particular, there are finitely many accepting computations of $\cA$ on $t_0 [t_1/Y]$, and therefore by Claim \ref{claim-choice-function-last}, we conclude that $ChooseSubset_{\cA, t_0, t_1, \widehat{STR}}(Y, X)$ assigns to every non-empty antichain $Y$ a finite non-empty $X\subseteq Y$~-- a contradiction to Lemma~\ref{lemma-definable-selection-functions}.

\section{\texorpdfstring{$k$}{k}-Ambiguous Languages} \label{sect:k-ambig}
In this section we prove that for every $0<k\in \nat$, there is a tree language with the degree of ambiguity equal to $k$.
First, we introduce some notations.
For a letter $\sigma$, we denote by $t_\sigma$, the full-binary tree with all nodes labeled by $\sigma$.
Let $L_{\neg a_1 \vee \dots \vee \neg a_k} := L_{\neg a_1} \cup \dots \cup L_{\neg a_k}$ be a tree language over alphabet $\Sigma_n = \{c, a_1, a_2, ..., a_n\}$, where $L_{\neg a_i} := \{t \in T_{\Sigma_n}^\omega \mid$ no node in $t$ is labeled by $a_i \}$.

%

\begin{prop}
	The degree of ambiguity of $L_{\neg a_1 \vee \dots \vee \neg a_k}$ for $k \leq n$ is $k$. 
\end{prop}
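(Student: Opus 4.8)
The plan is to prove the two bounds $da(L_{\neg a_1 \vee \dots \vee \neg a_k}) \le k$ and $da(L_{\neg a_1 \vee \dots \vee \neg a_k}) \ge k$ separately. For the upper bound I would note that each $L_{\neg a_i}$ is accepted by a one‑state deterministic (hence unambiguous) PTA $\cA_i$: take a single state $q$ with $(q,\sigma,q,q)\in\delta$ for every $\sigma\neq a_i$, no transition on $a_i$, and $\mathbb{C}(q)=0$. On a tree avoiding $a_i$ the constant map is the unique computation and it is accepting, while a tree containing $a_i$ has no computation at all, so $|ACC(\cA_i,t)|\le 1$ for all $t$. Applying Lemma~\ref{lemma-ambiguity-of-union-and-intersection}(1) $k-1$ times to $\cA_1,\dots,\cA_k$ produces an automaton $\cB$ with $L(\cB)=L_{\neg a_1 \vee \dots \vee \neg a_k}$ and $|ACC(\cB,t)|\le\sum_{i=1}^{k}|ACC(\cA_i,t)|\le k$; thus $\cB$ is $k$‑ambiguous and $da(L)\le k$.

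The substance is the lower bound: I must show that \emph{every} PTA $\cA$ with $L(\cA)=L_{\neg a_1 \vee \dots \vee \neg a_k}$ carries at least $k$ accepting computations on some tree. The engine I would use is a ``detection'' phenomenon. Fix a regular tree $W$ in which each of $a_1,\dots,a_k$ occurs at least once; then $W$ contains all of $a_1,\dots,a_k$, so $W\notin L$ and $ACC(\cA,W)=\emptyset$. For each $i$ let $W^{[i]}$ be $W$ with every $a_i$‑label rewritten to $c$; then $W^{[i]}\in L_{\neg a_i}\subseteq L$, so some $\theta_i\in ACC(\cA,W^{[i]})$ exists. The key observation is that $\theta_i$ cannot be ``$a_i$‑tolerant'': there is a node $u$ labelled $a_i$ in $W$ with $(\theta_i(u),a_i,\theta_i(u\cdot l),\theta_i(u\cdot r))\notin\delta$. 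Indeed, $W$ and $W^{[i]}$ differ only at $a_i$‑nodes, so if every move of $\theta_i$ stayed valid when reading $a_i$ there, then $\theta_i$ would be a valid computation on $W$, and it would still be accepting because acceptance depends only on the (unchanged) state labelling along branches, contradicting $W\notin L(\cA)$. This gives, for each $i$, a computation that genuinely ``sees'' $a_i$; the natural distinctness criterion is that two computations with different tolerance profiles—one absorbing $a_j$ everywhere for $j\neq i$ but rejecting $a_i$, the other rejecting some $a_{i'}\neq a_i$—cannot coincide.

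The main obstacle is that the $\theta_i$ live on $k$ \emph{different} trees $W^{[1]},\dots,W^{[k]}$, whereas I need $k$ pairwise distinct accepting computations on one common tree $t^\ast\in L$; and here the defining feature of $L$ fights back, since any $t^\ast\in L$ must avoid a single $a_{i_0}$ and can therefore host only one of the ``$W^{[i]}$‑shaped'' behaviours verbatim. I expect this assembly step to be the technical heart: the intended resolution is to transport the per‑index detecting computations onto a common witness by a compactness/pumping argument over the finite state set (locating a repeated state along a branch and grafting in the spirit of Lemma~\ref{lem:triv-composition}), thereby realizing on a single tree $k$ computations whose tolerance profiles are pairwise incompatible, hence distinct, which contradicts $(k-1)$‑ambiguity.

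As a consistency check on the target value $k$, I would note a cheap independent lower bound of $k-1$ coming from the paper's own machinery: the Moore‑machine relabelling $F$ that rewrites the root of a $\Sigma_{k-1}$‑tree to $a_k$ satisfies $\widehat{F}(s)\in L_{\neg a_1 \vee \dots \vee \neg a_k}$ iff $s\in L_{\neg a_1 \vee \dots \vee \neg a_{k-1}}$ (the image never avoids $a_k$, and it avoids $a_i$ for $i\le k-1$ exactly when $s$ does), so Lemma~\ref{lemma-F-reduction} yields $da(L_{\neg a_1 \vee \dots \vee \neg a_{k-1}})\le da(L_{\neg a_1 \vee \dots \vee \neg a_k})$. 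By induction this already gives $da(L_{\neg a_1 \vee \dots \vee \neg a_k})\ge k-1$, so the only nontrivial content is the final increment from $k-1$ to $k$—precisely the detection‑and‑assembly argument above.
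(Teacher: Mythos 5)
Your upper bound is correct and is exactly the paper's: deterministic one-state automata for each $L_{\neg a_i}$ plus Lemma~\ref{lemma-ambiguity-of-union-and-intersection}(1) give $k$-ambiguity. The lower bound, however, has a genuine gap, and it sits precisely where you yourself locate ``the technical heart.'' Your detection observation (every $\theta_i\in ACC(\cA,W^{[i]})$ must use, at some $a_i$-node of $W$, a transition that is invalid for the letter $a_i$) is correct, but the ``assembly'' step~-- transporting the $k$ detecting computations onto one common tree by ``a compactness/pumping argument over the finite state set'' with grafting~-- is not an argument; it is a restatement of the theorem. There is strong reason to believe no such elementary argument exists: a pumping/grafting argument of this shape would apply verbatim to finite trees and to $\omega$-words, where every regular language \emph{is} unambiguous, so it cannot by itself produce even two distinct computations on a common witness. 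All known proofs of the base case $k=2$ (the ambiguity of $L_{\neg a_1\vee\neg a_2}$, due to Carayol et al.\ and Bi{\l}kowski--Skrzypczak) pass through the Gurevich--Shelah theorem (Theorem~\ref{theorem-MSO-choice-function}) or comparable machinery, and the paper's proof is no exception: it proceeds by induction on $k$ (Lemma~\ref{lemma-at-least-k-ambiguous}), showing that every \emph{finitely ambiguous} automaton for the language has at least $k$ accepting computations on the single tree $t_c$, and the engine of both the base case and the inductive step is that complements of singletons and the languages of Lemma~\ref{lemma-not-finite} are not finitely ambiguous (Corollary~\ref{corollary-not-finitely-ambiguous-languages}, via Proposition~\ref{prop:main-new}), which in turn rests on the non-existence of an MSO-definable antichain-selection function (Lemma~\ref{lemma-definable-selection-functions}). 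Your proposal never invokes this obstruction, which is the decisive missing idea; note also that the case distinction ``$\cA$ finitely ambiguous vs.\ not'' matters, since the paper's counting of computations on $t_c$ is only carried out for finitely ambiguous $\cA$ (the other case being trivially not $(k-1)$-ambiguous).

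A smaller error: your ``consistency check'' reduction is broken. If $F$ merely rewrites the root label to $a_k$, then $s\in L_{\neg a_1\vee\dots\vee\neg a_{k-1}}$ iff $\widehat{F}(s)\in L_{\neg a_1\vee\dots\vee\neg a_k}$ fails in one direction: take $s$ whose only $a_1$-labeled node is the root and which contains all of $a_2,\dots,a_{k-1}$ below; then $s\notin L_{\neg a_1\vee\dots\vee\neg a_{k-1}}$ but $\widehat{F}(s)\in L_{\neg a_1}$. Lemma~\ref{lemma-F-reduction} requires the biconditional for \emph{all} inputs, so this $F$ does not qualify. The fix is easy (e.g., let $F$ output the parent's label, so every label of $s$ is reproduced below the root and nothing is erased), but as stated the reduction is wrong, and in any case it only yields $k-1$, not the increment you need.
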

It is easy to see that $L_{\neg a_i}$ are accepted by deterministic PTA. Therefore,
by Lemma~\ref{lemma-ambiguity-of-union-and-intersection}, we obtain that $L_{\neg a_1 \vee \dots \vee \neg a_k}$ is $k$-ambiguous. In the rest of this section we will show that $L_{\neg a_1 \vee \dots \vee \neg a_k}$ is not $(k-1)$-ambiguous. It was shown in \cite{DBLP:conf/csl/BilkowskiS13} that
$L_{\neg a_1 \vee \neg a_2}$ is ambiguous.

\begin{lem} \label{lemma-not-finite}
	Let	$L_{\exists a_1 \wedge \dots \wedge \exists a_m}:=\{t \in T_{\Sigma_n}^\omega \mid$ for every $i\leq m $ there is a node in $t$ labeled by $a_i \}$, and let $L$ be a tree language such that $t_c \notin L$ and $L_{\exists a_1 \wedge \dots \wedge \exists a_m} \cap T^\omega_{\{c, a_1, \dots, a_m\}} \subseteq L$. Then, $L$ is not finitely ambiguous.
\end{lem}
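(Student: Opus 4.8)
The plan is to reduce Lemma~\ref{lemma-not-finite} to Proposition~\ref{prop:main-new} by exhibiting regular trees $t_0$ and $t_1$ and verifying its two hypotheses: that $t_0 \notin L$ and that $t_0[t_1/Y] \in L$ for every non-empty antichain $Y$. The natural choice is $t_0 := t_c$, the tree with all nodes labeled $c$. This is regular, and by assumption $t_c \notin L$, so the first hypothesis is immediate.

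For $t_1$, I would take a regular tree over the sub-alphabet $\{c, a_1, \dots, a_m\}$ that contains at least one occurrence of each letter $a_1, \dots, a_m$; for instance, a tree in which along the leftmost branch the nodes are labeled $a_1, a_2, \dots, a_m$ in order and every other node is labeled $c$. Such a tree is clearly regular (its labeling is definable by a Moore machine that walks down the left spine and outputs $c$ everywhere once it leaves the spine), and crucially $t_1 \in L_{\exists a_1 \wedge \dots \wedge \exists a_m}$. The key observation is then: for any non-empty antichain $Y$, the grafted tree $t_0[t_1/Y]$ uses only letters from $\{c, a_1, \dots, a_m\}$ (since both $t_c$ and $t_1$ do), so $t_0[t_1/Y] \in T^\omega_{\{c, a_1, \dots, a_m\}}$; and since $Y$ is non-empty, at least one copy of $t_1$ is grafted in, so for each $i \leq m$ the grafted tree contains an $a_i$-labeled node. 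Hence $t_0[t_1/Y] \in L_{\exists a_1 \wedge \dots \wedge \exists a_m} \cap T^\omega_{\{c, a_1, \dots, a_m\}}$, and by the containment hypothesis $t_0[t_1/Y] \in L$.

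Having verified both hypotheses of Proposition~\ref{prop:main-new} for the regular trees $t_0 = t_c$ and $t_1$, I conclude directly that $L$ is not finitely ambiguous, which is exactly the assertion of the lemma. I should note that Proposition~\ref{prop:main-new} as stated requires $L$ to be a \emph{regular} language, so strictly the argument gives the conclusion for regular $L$ satisfying the hypotheses; in the intended application $L = L_{\neg a_1 \vee \dots \vee \neg a_k}$ is regular, so this suffices.

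The only genuinely non-routine point is confirming that a single grafted copy of $t_1$ already forces all the witnesses $a_1, \dots, a_m$ to appear, which is why $t_1$ must be chosen to contain every $a_i$ simultaneously rather than, say, a disjoint family of trees each witnessing one letter. Everything else is a matter of checking regularity and the alphabet bookkeeping. The main conceptual obstacle is thus simply selecting $t_1$ correctly; once that is fixed, the lemma falls out of Proposition~\ref{prop:main-new} with essentially no further work.
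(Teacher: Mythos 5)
Your proof is correct, but it follows a genuinely different route from the paper's. The paper does not invoke Proposition~\ref{prop:main-new} directly here: instead it builds a Moore-machine reduction $F$ that maps an arbitrary tree $t$ over $\Sigma$ to a tree in which, below the first $a_1$-labeled node of each path, the letters $a_2,\dots,a_m$ are forced to appear level by level (and which is $t_c$ when $t$ has no $a_1$ at all); it then concludes $t\in L_{\exists a_1}$ iff $\widehat{F}(t)\in L$, so $da(L)\geq da(L_{\exists a_1})$ by Lemma~\ref{lemma-F-reduction}, and finishes by citing Corollary~\ref{corollary-not-finitely-ambiguous-languages}(\ref{corollary-exists-ai-not-finitely-ambiguous}). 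You instead verify the hypotheses of Proposition~\ref{prop:main-new} directly, with $t_0=t_c$ and $t_1$ a single regular tree containing all of $a_1,\dots,a_m$; this is shorter, avoids the reduction machinery entirely, and the verification (alphabet bookkeeping plus ``one grafted copy of $t_1$ supplies all witnesses'') is exactly right. Both arguments ultimately rest on the same engine~-- Proposition~\ref{prop:main-new} is also what underlies Corollary~\ref{corollary-not-finitely-ambiguous-languages}~-- so the difference is one of packaging: the paper's route reuses its general reduction tool (Lemma~\ref{lemma-F-reduction}), which it exploits again in Section~\ref{sect:uncount}, while yours buys a one-step, self-contained proof. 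Two small remarks on your caveats: (i) the regularity restriction is equally implicit in the paper's proof, since Lemma~\ref{lemma-F-reduction} is stated for regular languages, and it is harmless in general~-- a non-regular $L$ is accepted by no automaton at all, hence is vacuously not finitely ambiguous, so your argument actually covers the lemma as stated; (ii) the intended application is not to $L_{\neg a_1 \vee \dots \vee \neg a_k}$ itself but, inside the proof of Lemma~\ref{lemma-at-least-k-ambiguous} (Claim~\ref{claim-exists-tree-in-Q-a1-am}), to $L(\cA_{Q_{\neg t_c \wedge \exists a_1 \wedge \dots \wedge \exists a_{m-1}}})$, which is regular in any case as the language of a PTA.
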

\begin{proof}
	Define a function $F: \Sigma^* \rightarrow \Sigma$ such that
	$F(\sigma_1 \dots \sigma_k) := a_{k-i+1}$
	if there is $i$ such that $\sigma_i = a_1$, for all $j<i: \sigma_j \neq a_1$ and $k-i+1 \leq m$. Otherwise, $F(\sigma_1 \dots \sigma_k) := c$.
	
	It is easy to see that $F$ is definable by a Moore machine, and $\forall t \in T^\omega_\Sigma: t \in L_{\exists a_1}$ iff $\widehat{F}(t) \in L$. Therefore, by Lemma~\ref{lemma-F-reduction} we conclude that $da(L) \geq da(L_{\exists a_1})$. Since $L_{\exists a_1}$ is not finitely ambiguous (by Corollary~\ref{corollary-not-finitely-ambiguous-languages}~(\ref{corollary-exists-ai-not-finitely-ambiguous})), we conclude that $L$ is not finitely ambiguous.
\end{proof}
\paragraph{Notations}
Let $a \in \Sigma$, $t_1 \in T^\omega_\Sigma$ and $t_2 \in T^\omega_\Sigma$. We define $Tree(a,t_1,t_2) \in T^\omega_\Sigma$ as a tree $t$ where $t(\epsilon) = a$, $t_{\geq l} = t_1$ and $t_{\geq r} = t_2$.

%

%
\begin{lem} \label{lemma-at-least-k-ambiguous}
	Let $\cA$ be a finitely ambiguous automaton over alphabet $\Sigma_n$ such that $L(\cA) = L_{\neg a_1 \vee \dots \vee \neg a_k}$ for $k \leq n$. Then $|ACC(\cA, t_c)| \geq k$.
\end{lem}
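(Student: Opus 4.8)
The plan is to prove the statement directly: if $\cA$ is finitely ambiguous and $L(\cA)=L_{\neg a_1\vee\dots\vee\neg a_k}$, then the $k$ ``reasons'' that place $t_c$ in the language each force a distinct accepting computation on $t_c$. For each $i\le k$ I fix a regular witness tree $r_i\in T^\omega_{\{c,a_1,\dots,a_k\}}$ that contains a node labelled $a_j$ for every $j\le k$ with $j\neq i$, but no node labelled $a_i$; thus $t_c[r_i/Y]\in L_{\neg a_i}\subseteq L(\cA)$ for every antichain $Y$, and more generally a tree obtained from $t_c$ by grafting witnesses lies in $L(\cA)$ iff the set of grafted indices is a single $\{i\}$: if two distinct indices $i\neq i'$ are grafted, the resulting tree contains every one of $a_1,\dots,a_k$, hence lies in no $L_{\neg a_j}$ with $j\le k$ and is rejected. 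The basic mechanism is the Grafting Lemma~\ref{lem:triv-composition}: if $\phi\in ACC(\cA,t_c)$, $\phi(v)=q$ and $r_i\in L(\cA_q)$, then $\phi$ extends to an accepting computation of $\cA$ on $t_c[r_i/v]$.

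The first key step is a no-mixing lemma. A single $\phi\in ACC(\cA,t_c)$ cannot accept the witnesses of two different reasons on incomparable nodes: if $v_1\perp v_2$, $r_i\in L(\cA_{\phi(v_1)})$ and $r_{i'}\in L(\cA_{\phi(v_2)})$ with $i\neq i'$, then grafting the corresponding accepting subcomputations (again via Lemma~\ref{lem:triv-composition}, applied at the two incomparable nodes) yields an accepting computation of $\cA$ on $t_c[r_i/v_1][r_{i'}/v_2]$; but that tree contains all of $a_1,\dots,a_k$ and so is not in $L(\cA)$, a contradiction. The same argument run on an infinite antichain $Z$ on which $\phi$ is constant, equal to some state $q$, shows $|\{i:r_i\in L(\cA_q)\}|\le 1$. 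Since a computation on the binary tree visits only finitely many states, any two infinite antichains inside $\phi$ contain an incomparable cross-pair; hence all states that $\phi$ repeats on an infinite antichain agree on which single witness they accept, and I can attach to $\phi$ a well-defined \emph{committed reason} $i(\phi)\in\{1,\dots,k\}$.

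It then remains to prove coverage: every reason $i\le k$ equals $i(\phi)$ for some $\phi\in ACC(\cA,t_c)$. Granting this, the map $\phi\mapsto i(\phi)$ is a surjection from $ACC(\cA,t_c)$ onto $\{1,\dots,k\}$, so $|ACC(\cA,t_c)|\ge k$, as required. To establish coverage for a fixed $i$ I would use finite ambiguity: take the infinite antichain $Y=\{r^n l:n\in\nat\}$, whose complementary branch $r^\omega$ avoids $Y$, and an accepting computation $\rho$ of $\cA$ on the regular tree $t_c[r_i/Y]$. Along $r^\omega$ the run $\rho$ is accepting, and at each $r^n l\in Y$ one has $r_i\in L(\cA_{\rho(r^n l)})$; pumping a recurring lasso of $\rho$ along $r^\omega$ should produce an accepting computation of $\cA$ on $t_c$ itself whose recurring side-state still accepts $r_i$, so that its committed reason is $i$.

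The hard part is exactly this coverage step, and the obstacle is that a naive limit of accepting computations of the receding perturbations $t_c[r_i/\{r^n l:n\ge N\}]$ (which converge to $t_c$) need not satisfy the parity acceptance condition. I would keep the argument finite instead: use the lasso decomposition of a single accepting $\rho$ on the regular tree $t_c[r_i/Y]$ to splice a genuinely accepting, $r_i$-supporting computation on $t_c$, invoking positional determinacy of the membership game (Subsection~\ref{subsect:membership}) to justify the splicing. Should direct splicing prove awkward, an alternative is to argue by contradiction: if no accepting computation on $t_c$ committed to $i$, then the witnesses produced by a Pathfinder winning strategy would, exactly as in the proof of Proposition~\ref{prop:main-new}, define an MSO selection of a finite non-empty subset from each non-empty antichain, contradicting Lemma~\ref{lemma-definable-selection-functions}.
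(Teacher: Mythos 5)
Your first half is sound: the no-mixing argument via grafting witness computations at incomparable nodes, together with the observation that any two infinite antichains contain an incomparable cross-pair, does show that each $\phi\in ACC(\cA,t_c)$ is committed to at most one index (it is a partial assignment, since some $\phi$ may be committed to nothing, but that is harmless for the counting). This is essentially the same grafting trick the paper uses at the very end of its proof. The genuine gap is the coverage step, which you correctly flag as the hard part but do not close, and neither of your two sketches closes it. The lasso/pumping sketch fails for a reason deeper than the parity condition: any pumping or splicing of an accepting computation $\rho$ on $t_c[r_i/Y]$ yields computations on trees that still carry grafted copies of $r_i$; to land on $t_c$ itself you must replace the subcomputations on those copies by accepting computations on $t_c$, and for that you need $t_c\in L(\cA_{\rho(r^nl)})$ --- nothing you have established says so, and this is exactly the missing content. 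The fallback via Proposition~\ref{prop:main-new} also does not go through as stated: that argument requires a base tree \emph{outside} the language, so that Pathfinder wins the membership game, whereas here $t_c\in L(\cA)$ and Pathfinder has no winning strategy in $G_{t_c,\cA}$. What the hypothesis ``no computation on $t_c$ is committed to $i$'' actually yields is, for every infinite antichain $Y$ and every $\phi\in ACC(\cA,t_c[r_i/Y])$, a \emph{non-empty} set of nodes $y\in Y$ with $t_c\notin L(\cA_{\phi(y)})$; but this set may be infinite, and an MSO selection of a non-empty, possibly infinite subset of each antichain contradicts neither Lemma~\ref{lemma-definable-selection-functions} nor Theorem~\ref{theorem-MSO-choice-function}.

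What closes the gap is the pivot the paper performs: split the relevant reachable states into those that accept $t_c$ and those that do not, and apply the choice-function machinery to the sub-automaton started from the states that do \emph{not} accept $t_c$ (from which $t_c$ is excluded by construction), rather than to $\cA$ and $t_c$. Concretely, finite ambiguity of $\cA$ (via Corollary~\ref{corollary-A_Q-ambiguity}), Lemma~\ref{lemma-not-finite} (or Corollary~\ref{corollary-not-finitely-ambiguous-languages}) and Lemma~\ref{lemma-subset-language-of-different-ambiguity} force the existence of a witness tree $t^*$ --- containing every $a_j$ with $j\le k$, $j\neq i$, and no $a_i$ --- that is accepted \emph{only} from states which also accept $t_c$. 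Using this $t^*$ (which depends on $\cA$) in place of your fixed $r_i$, every accepting computation on $t_c[t^*/Y_0]$ with $Y_0=\{r^nl : n\in\nat\}$ carries $t_c$-accepting states at all of $Y_0$, and grafting $t_c$-computations back at all of $Y_0$ produces an accepting computation on $t_c$ committed to $i$; no-mixing is unaffected since it only uses the letter content of the witnesses. Had you supplied this step, your direct no-mixing/coverage architecture would give a correct proof, organized differently from the paper's (which proceeds by induction on $k$, extracting $m-1$ computations from the induction hypothesis and one more by the grafting contradiction). As written, however, the bound $|ACC(\cA,t_c)|\ge k$ is not established.
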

\begin{proof}
	We will prove by induction on $k$.
	For $k = 1$ the claim holds trivially, since $t_c \in L(\cA)$ implies that $|ACC(\cA, t_c)| \geq 1$.
	
	Assume the claim holds for all $k < m \leq n$ and prove for $k = m$.
	
	
	Let $\cA = (Q,\Sigma, Q_I, \delta, \mathbb{C})$ be a finitely ambiguous automaton that accepts $L_{\neg a_1 \vee \dots \vee \neg a_m}$. Define $R := \{(q_1, q_2) \in Q \times Q \mid \exists q_i \in Q_I: (q_i, c, q_1, q_2) \in \delta)\}$, and let $R[1]$ and $R[2]$ be the projections of the first and second coordinate of $R$ on $Q$, respectively.
	
	Define $Q_{\exists a_m} := \{q \in R[1] \mid L(A_q) \cap L_{\exists a_m} \neq \emptyset \}$, and let
	$Q_{\exists a_m \wedge t_c} := \{q \in Q_{\exists a_m} \mid t_c \in L(\cA_q)\}$ and
	$Q_{\exists a_m \wedge \neg t_c} := Q_{\exists a_m} \setminus Q_{\exists a_m \wedge t_c}$.
	
	By definition of $Q_{\exists a_m \wedge \neg t_c}$ we have $t_c \notin L(\cA_{Q_{\exists a_m \wedge \neg t_c}})$ and therefore $L(\cA_{Q_{\exists a_m \wedge \neg t_c}}) \cap T^\omega_{\{c,a_m\}} \subseteq T^\omega_{\{c,a_m\}} \setminus \{t_c\}$. The language $T^\omega_{\{c,a_m\}} \setminus \{t_c\}$ is not finitely ambiguous by Corollary~\ref{corollary-not-finitely-ambiguous-languages}~(\ref{corollary-singleton-complement}).
	$L(\cA_{Q_{\exists a_m \wedge \neg t_c}})$ is finitely ambiguous (by Corollary~\ref{corollary-A_Q-ambiguity}) and since $T^\omega_{\{c,a_m\}}$ is unambiguous we conclude that $L(\cA_{Q_{\exists a_m \wedge \neg t_c}}) \cap T^\omega_{\{c,a_m\}}$ is finitely ambiguous, by Corollary~\ref{corollary-ambiguity-closure}. Therefore, by Lemma~\ref{lemma-subset-language-of-different-ambiguity}, there is a tree $t' \in T^\omega_{\{c,a_m\}} \setminus \{t_c\} = L_{\exists a_m} \cap T^\omega_{\{c,a_m\}}$ such that $t' \notin L(\cA_{Q_{\exists a_m \wedge \neg t_c}})$, and since $L_{\exists a_m} \cap T^\omega_{\{c,a_m\}} \subseteq L(\cA_{Q_{\exists a_m}}) = L(\cA_{Q_{\exists a_m \wedge t_c}}) \cup L(\cA_{Q_{\exists a_m \wedge \neg t_c}})$ we conclude that $t' \in L(\cA_{Q_{\exists a_m \wedge t_c}})$.
	
	Define $Q' := \{q \in R[1] \mid t' \in L(\cA_q)\}$ and $R' := \{(q_1,q_2) \in R \mid q_1 \in Q'\}$.	
	Since $t' \in L_{\exists a_m} \cap T^\omega_{\{c,a_m\}}$, we conclude that $\{t \in T^\omega_\Sigma \mid Tree(c, t', t) \in L_{\neg a_1 \vee \dots \vee \neg a_m}\} = L_{\neg a_1 \vee \dots \vee \neg a_{m-1}}$. Therefore, $L(\cA_{R'[2]}) = L_{\neg a_1 \vee \dots \vee \neg a_{m-1}}$, and by induction assumption we now obtain that: $|ACC(\cA_{R'[2]}, t_c)| \geq m-1$.
	
	For each computation $\phi \in ACC(\cA_{R'[2]}, t_c)$ we will construct a computation $g(\phi) \in ACC(\cA, t_c)$, as following. Let $q_2 := \phi(\epsilon)$. By the definition of $R'$, there is $(q_1, q_2) \in R'$ such that $t' \in L(\cA_{q_1})$. Since $t' \in L(\cA_{Q_{\exists a_m \wedge t_c}})$ we have $t_c \in L(\cA_{q_1})$, and therefore there is a computation $\phi_c \in ACC(\cA_{q_1}, t_c)$. Let $q_i \in Q_I$ such that $(q_i, c, q_1, q_2) \in \delta$. By defining $g(\phi) := Tree(q_i, \phi_c, \phi)$ we obtain that $g(\phi) \in ACC(\cA, t_c)$, as requested.
	
	Let $\Phi := \{g(\phi) \mid \phi \in ACC(\cA_{R'[2]}, t_c)\}$. $g(\phi)_{\geq r} = \phi$ and therefore $g$ is injective, and we conclude that $|{\Phi}| = |ACC(\cA_{R'[2]}, t_c)| \geq m-1$.
	
	We now need to find an additional computation $\phi \in ACC(\cA, t_c)$ such that $\phi \notin \Phi$, resulting $|ACC(\cA, t_c)| \geq m$.
	
	Let $Q_{\exists a_1 \wedge \dots \wedge \exists a_{m-1}} := \{q \in R[2] \mid L(\cA_q) \cap L_{\exists a_1 \wedge \dots \wedge \exists a_{m-1}} \neq \emptyset \}$ and let
	$Q_{t_c \wedge \exists a_1 \wedge \dots \wedge \exists a_{m-1}} := \{q \in Q_{\exists a_1 \wedge \dots \wedge \exists a_{m-1}} \mid t_c \in L(\cA_q) \}$ and $Q_{\neg t_c \wedge \exists a_1 \wedge \dots \wedge \exists a_{m-1}} := Q_{\exists a_1 \wedge \dots \wedge \exists a_{m-1}} \setminus Q_{t_c \wedge \exists a_1 \wedge \dots \wedge \exists a_{m-1}}$.
	\begin{claim} \label{claim-exists-tree-in-Q-a1-am}
		There is a full-binary tree $t'' \in L_{\exists a_1 \wedge \dots \wedge \exists a_{m-1}} \cap T^\omega_{\{c, a_1, \dots, a_{m-1}\}}$ such that $t'' \in L(\cA_{Q_{t_c \wedge \exists a_1 \wedge \dots \wedge \exists a_{m-1}}})$ and $t'' \notin L(\cA_{Q_{\neg t_c \wedge \exists a_1 \wedge \dots \wedge \exists a_{m-1}}})$.
	\end{claim}
	\begin{claimproof}
		By the definition of $R[2]$ we have $L_{\exists a_1 \wedge \dots \wedge \exists a_{m-1}} \cap T^\omega_{\{c, a_1, \dots, a_{m-1}\}} \subseteq L(\cA_{R[2]})$ and therefore by the definition of $Q_{t_c \wedge \exists a_1 \wedge \dots \wedge \exists a_{m-1}}$ and $Q_{\neg t_c \wedge \exists a_1 \wedge \dots \wedge \exists a_{m-1}}$, we have $L_{\exists a_1 \wedge \dots \wedge \exists a_{m-1}} \cap T^\omega_{\{c, a_1, \dots, a_{m-1}\}} \subseteq L(\cA_{Q_{t_c \wedge \exists a_1 \wedge \dots \wedge \exists a_{m-1}}}) \cup L(\cA_{Q_{\neg t_c \wedge \exists a_1 \wedge \dots \wedge \exists a_{m-1}}})$.
		
		Assume, for the sake of contradiction, that the claim does not hold. Then, we obtain $L_{\exists a_1 \wedge \dots \wedge \exists a_{m-1}} \cap T^\omega_{\{c, a_1, \dots, a_{m-1}\}} \subseteq L(\cA_{Q_{\neg t_c \wedge \exists a_1 \wedge \dots \wedge \exists a_{m-1}}})$. We have $t_c \notin L(\cA_{Q_{\neg t_c \wedge \exists a_1 \wedge \dots \wedge \exists a_{m-1}}})$, and therefore by Lemma~\ref{lemma-not-finite} we conclude that $L(\cA_{Q_{\neg t_c \wedge \exists a_1 \wedge \dots \wedge \exists a_{m-1}}})$ is not finitely ambiguous~-- a contradiction to $\cA$ being finitely ambiguous.
	\end{claimproof}
	Let $t''$ be a tree as in Claim \ref{claim-exists-tree-in-Q-a1-am}. We have $t'' \in L_{\exists a_1 \wedge \dots \wedge \exists a_{m-1}} \cap T^\omega_{\{c, a_1, \dots, a_{m-1}\}}$, and therefore $Tree(c, t_c, t'') \in L_{\neg a_1 \vee \dots \vee \neg a_m} = L(\cA)$, and there is a computation $\phi \in ACC(\cA, Tree(c, t_c, t''))$. Let $q := \phi(r)$. By definition of $t''$, we have $q \in Q_{t_c \wedge \exists a_1 \wedge \dots \wedge \exists a_{m-1}}$ and therefore $t_c \in L(\cA_q)$. Let $\phi_c \in ACC(\cA_q, t_c)$, and let $\phi'$ be the computation obtained from $\phi$ by grafting $\phi_c$ on $r$. We conclude that $\phi' \in ACC(\cA, t_c)$.
	
	Assume, for the sake of contradiction, that $\phi' \in {\Phi}$, and let $q_1 := \phi'(l)$ and $q_2 := \phi'(r)$. We have $t' \in L(\cA_{q_1})$ (by definition of $|\Phi|$) and $t'' \in L(\cA_{q_2})$ (by definition of $\phi'$). Therefore, by grafting computations $\phi_{t'} \in ACC(\cA_{q_1}, t')$ and $\phi_{t''} \in ACC(\cA_{q_2}, t'')$ to the left and right children of the root of $t_c$, respectively, we obtain $Tree(c, t', t'') \in L(\cA)$. That is a contradiction, since $t'$ contains an $a_m$ labeled node, and $t''$ contains $a_1, \dots, a_{m-1}$ labeled nodes, and therefore $Tree(c, t', t'') \notin L_{\neg a_1 \vee \dots \vee \neg a_m}$.
	
	We conclude that $\phi' \notin {\Phi}$, and therefore $|ACC(\cA, t_c)| \geq 1 + |{\Phi}| = 1 + (m - 1) = m$.
\end{proof}

\section{Finitely Ambiguous Languages} \label{sect:finite}
\begin{defi}
	Let $\Sigma = \{a_1, a_2, c\}$. We define the following languages over $\Sigma$:
	\begin{itemize}
		\item For $k,m \in \nat$ such that $k<m$, we define $L_{k,m}$ as the set of trees that are obtained from $t_c$ by grafting a tree $t' \in L_{\neg a_1 \vee \neg a_2}$ on node $l^k r$, and grafting $t_{a_1}$ on node $l^m$.
		\item For $m \in \nat$ we define $L_m := \cup_{k<m} L_{k,m}$.
		\item $L^{fa} := \cup_{m \in \nat} L_m$.
	\end{itemize}
\end{defi}
In this section we prove the following proposition:
\begin{prop} The degree of ambiguity of $L^{fa}$ is finite.
\end{prop}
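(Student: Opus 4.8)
The plan is to prove the two inequalities $da(L^{fa})\le\mathit{finite}$ and $da(L^{fa})>k$ for every $k\in\nat$, which together force $da(L^{fa})=\mathit{finite}$. The first says some finitely ambiguous automaton accepts $L^{fa}$; the second says no boundedly ambiguous one does.

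For the upper bound I would exhibit an explicit automaton $\cA$ accepting $L^{fa}$ and bound its runs on every input. Reading down the left spine $\epsilon,l,l^2,\dots$, the automaton locates the unique $l^m$ where the spine switches from $c$ to $a_1$ and checks that the subtree there is $t_{a_1}$ (the marker). At each right slot $l^j r$ with $j<m$ it must certify that the grafted subtree is either exactly $t_c$ (a plain slot) or lies in $L_{\neg a_1 \vee \neg a_2}$ (one special slot), using a single bit of memory to ensure at most one slot is declared special; certifying membership in $L_{\neg a_1}\cup L_{\neg a_2}$ costs a two-way guess. On a tree whose special subtree differs from $t_c$, only the correct guess of the special slot survives, so there are at most two runs; on the degenerate tree $d_m:=t_c[t_{a_1}/l^m]$ every slot may be guessed special, giving $O(m)$ runs. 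Since each count is finite, $\cA$ is finitely ambiguous.

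The harder direction is that no boundedly ambiguous automaton accepts $L^{fa}$. Suppose $\cA$ is $N$-ambiguous (hence finitely ambiguous) with $L(\cA)=L^{fa}$; I claim $|ACC(\cA,d_m)|\ge m$ for every $m$, contradicting $N$-ambiguity once $m>N$. For a run $\phi$ on $d_m$ and a slot $k<m$ write $q_k^\phi:=\phi(l^k r)$. Using Lemma~\ref{lem:triv-composition} together with the exact membership test $d_m[s/l^k r]\in L^{fa}\iff s\in L_{\neg a_1 \vee \neg a_2}$, one gets $L(\cA_{q_k^\phi})\subseteq L_{\neg a_1 \vee \neg a_2}$ and, crucially, that any single run has at most one slot $k$ with $L(\cA_{q_k^\phi})\ne\{t_c\}$: grafting nontrivial accepted subtrees into two distinct slots would yield an accepting run on a tree with two non-$c$ slots, which is not in $L^{fa}$. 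If moreover for every slot $k$ some run makes $k$ its unique special slot (the \emph{Covering} property), then sending each $k$ to such a run is injective, producing the required $m$ runs.

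The main obstacle is Covering. Fix $k$ and let $P_k$ be the set of states occurring at $l^k r$ over all accepting runs of $\cA$ on the trees $d_m[s/l^k r]$ with $s\in L_{\neg a_1 \vee \neg a_2}$; the single-perturbation argument gives $\bigcup_{q\in P_k}L(\cA_q)=L_{\neg a_1 \vee \neg a_2}$, and the members of $P_k$ that also accept $t_c$ are exactly the states realizable at $l^k r$ on $d_m$ itself. Assume Covering fails at $k$, i.e.\ every run on $d_m$ has $L(\cA_{q_k^\phi})=\{t_c\}$. Then all non-$t_c$ trees of $L_{\neg a_1 \vee \neg a_2}$ are accepted only from states $q\in P_k$ with $t_c\notin L(\cA_q)$; restricting to the alphabet $\{c,a_1\}$, where $L_{\neg a_1 \vee \neg a_2}$ is everything, the sub-automaton on these states accepts precisely $T^\omega_{\{c,a_1\}}\setminus\{t_c\}$. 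This language is not finitely ambiguous by Corollary~\ref{corollary-not-finitely-ambiguous-languages}(\ref{corollary-singleton-complement}), yet it is finitely ambiguous, being the intersection of the unambiguous $T^\omega_{\{c,a_1\}}$ with the language of a sub-automaton of the finitely ambiguous $\cA$ (Corollaries~\ref{corollary-A_Q-ambiguity} and~\ref{corollary-ambiguity-closure}) — a contradiction. Hence Covering holds and the lower bound follows. The points I expect to require the most care are the precise membership analysis of single and double perturbations of $d_m$ in $L^{fa}$ (which underpins both the $\subseteq$ bound and the at-most-one-special-slot property) and the reduction of Covering to the known non-finite-ambiguity of the complement of a singleton.
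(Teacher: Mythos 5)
Your proposal is correct and takes essentially the same route as the paper: the upper bound is the paper's guess-the-special-slot automaton (Lemma~\ref{lemma-accepted-by-finitiely-ambiguous-automata}), and your lower bound, like the paper's Lemma~\ref{lemma-L-at-least-m-1-ambiguous}, extracts $m$ distinct runs on $t^m$ (your $d_m$) from exactly the same two ingredients~-- the double-grafting contradiction (two non-$t_c$ slots force a tree outside $L^{fa}$) and the non-finite ambiguity of $T^\omega_{\{c,a_1\}}\setminus\{t_c\}$ (Corollary~\ref{corollary-not-finitely-ambiguous-languages}~(\ref{corollary-singleton-complement})) combined with Corollaries~\ref{corollary-ambiguity-closure} and~\ref{corollary-A_Q-ambiguity}. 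The only divergence is bookkeeping: the paper fixes one global witness tree $t_{\exists a_1}$, accepted only from states that also accept $t_c$, and pulls runs on $t^m$ back from the perturbed trees $t^m_i$, whereas you argue slot-by-slot via your ``Covering'' property; this is the same counting argument in dual form.
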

The proposition follows from Lemma~\ref{lemma-accepted-by-finitiely-ambiguous-automata} and Lemma~\ref{lemma-at-least-finitely-ambiguous} proved below.

\begin{lem} \label{lemma-accepted-by-finitiely-ambiguous-automata}
	There is a finitely ambiguous automaton that accepts $L^{fa}$
\end{lem}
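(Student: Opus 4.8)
The plan is to exhibit an explicit automaton $\cA$ recognizing $L^{fa}$ and argue that it has a uniform finite bound on the number of accepting computations per tree. First I would examine the structure of a tree $t \in L^{fa}$: by definition there is a unique $m \in \nat$ and a unique $k < m$ such that $t$ is obtained from $t_c$ by grafting some $t' \in L_{\neg a_1 \vee \neg a_2}$ at node $l^k r$ and grafting $t_{a_1}$ at node $l^m$. The crucial observation is that $m$ and $k$ are \emph{determined} by $t$: the node $l^m$ is the first node along the leftmost branch whose subtree is the constant tree $t_{a_1}$, and $l^k r$ is forced once $m$ is known (it is the unique node off the leftmost branch at which the grafted $t'$ sits, with all earlier nodes labeled $c$). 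So an accepting computation has essentially no freedom in locating the ``skeleton'' — the spine $c, c, \dots, c$ along $l^0, \dots, l^{m-1}$, the subtree $t_{a_1}$ at $l^m$, and the right-child positions $l^i r$ for $i < m$ which must all carry $t_c$ except at $i = k$.

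The automaton I would build runs a deterministic check on the $c$-labeled spine, deterministically verifies $t_{a_1}$ below $l^m$ and $t_c$ below each $l^i r$ with $i \neq k$, and at node $l^k r$ hands control to a subautomaton $\cB$ recognizing $L_{\neg a_1 \vee \neg a_2}$. By Lemma~\ref{lemma-ambiguity-of-union-and-intersection} (or the remark in Section~\ref{sect:k-ambig}) the language $L_{\neg a_1 \vee \neg a_2}$ is accepted by a $2$-ambiguous automaton, so $\cB$ contributes at most $2$ accepting computations on its subtree. The key step is to verify that all the other parts of any accepting computation are forced, so that the total count of accepting computations on $t$ equals the number of accepting computations of $\cB$ on the grafted subtree $t'$, hence is at most $2$. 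The one genuine source of nondeterminism I must control is the guess of where the spine ends, i.e., the value of $m$; I would arrange the transitions so that the automaton \emph{deterministically} detects $l^m$ as the first left-descendant whose subtree is $t_{a_1}$ (using a parity condition: color the spine state so that an infinite all-$c$ left branch that never commits is rejecting). This removes the apparent ambiguity in choosing $m$.

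The main obstacle I anticipate is precisely making the detection of the transition point $l^m$ unambiguous by parity-acceptance alone, since a top-down tree automaton cannot ``look ahead'' to confirm that a subtree equals $t_{a_1}$ before committing. The standard device is to let the automaton guess, at each spine node, whether to continue along the spine or to declare ``here begins $t_{a_1}$,'' and then use a subautomaton that accepts exactly $t_{a_1}$ (a single-tree language, hence unambiguous by the reasoning behind Claim~\ref{claim-singletone-complement-is-regular}) to validate the guess. Because $t_{a_1}$ is a single regular tree, this validating subautomaton is deterministic/unambiguous, so a wrong guess produces no accepting continuation and each tree forces a unique correct $m$. Thus I expect to show $|ACC(\cA, t)| \le 2$ for every $t \in L^{fa}$, which gives that $\cA$ is finitely (indeed boundedly) ambiguous; the complementary Lemma~\ref{lemma-at-least-finitely-ambiguous} then supplies the matching lower bound showing the degree is exactly $\mathit{finite}$ rather than bounded.
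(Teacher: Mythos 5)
There is a genuine error, and it is in your very first step. You claim that for $t \in L^{fa}$ the pair $(m,k)$ is determined by $t$, in particular that $l^k r$ is ``the unique node off the leftmost branch at which the grafted $t'$ sits.'' This is false: the tree $t_c$ itself belongs to $L_{\neg a_1 \vee \neg a_2}$ (it contains neither $a_1$ nor $a_2$), so the tree obtained from $t_c$ by grafting $t_{a_1}$ at $l^m$ \emph{alone} lies in $L_{k,m}$ for \emph{every} $k < m$ --- one may regard $t' = t_c$ as having been grafted at any of the $m$ positions $l^k r$. (The value $m$ is indeed determined, by the first $a_1$ on the leftmost branch; it is $k$ that is not.) Consequently the ``skeleton'' is not forced, the location of the $L_{\neg a_1 \vee \neg a_2}$-check cannot be made deterministic, and your claimed bound $|ACC(\cA,t)| \le 2$ is impossible for \emph{any} automaton accepting $L^{fa}$: by Lemma~\ref{lemma-L-at-least-m-1-ambiguous} applied with $L = L^{fa}$ (equivalently, Lemma~\ref{lemma-at-least-finitely-ambiguous}), $L^{fa}$ is not $(m-1)$-ambiguous for any $m$, so in particular it has no $2$-ambiguous automaton. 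Your final sentence cites Lemma~\ref{lemma-at-least-finitely-ambiguous} as a ``complementary lower bound,'' but a $2$-ambiguous automaton for $L^{fa}$ would flatly contradict that lemma, not complement it.

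The error also cannot be repaired along the lines you sketch. Forcing uniqueness of $k$ --- say, by demanding that the guessed subtree be different from $t_c$, or that the guessed position be the least valid one --- would require the automaton to certify that a subtree over $\{c,a_1,a_2\}$ is \emph{not} equal to $t_c$, i.e., to recognize (a superset behaving like) the complement of a singleton language, which is not finitely ambiguous by Corollary~\ref{corollary-not-finitely-ambiguous-languages}(\ref{corollary-singleton-complement}); such a check cannot be absorbed into a boundedly ambiguous construction. The paper's proof instead embraces the nondeterminism: the automaton guesses where along the spine the $L_{\neg a_1 \vee \neg a_2}$-subtree sits (transition from $q_1$ to $q_2$), checks that subtree with a $2$-ambiguous automaton and all other designated subtrees with deterministic automata. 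On a tree in $L_m$ there are at most $m$ viable guesses, each extending to at most $2$ accepting computations, so $|ACC(\cB,t)| \le 2m$ --- finite for each tree, but necessarily unbounded over the language, which is exactly the point of this section. (Your worry about detecting $l^m$, by contrast, is a non-issue: that step is label-driven, since the spine label switches from $c$ to $a_1$ precisely at $l^m$.)
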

\begin{proof}
On a tree $t\in L_m$  the automaton ``guesses''  a position $i<m$, checks that $t_{\geq l^ir} \in  L_{\neg a_1 \vee \neg a_2}$ (using  a 2-ambiguous automaton),
	checks that $t_{\geq l^jr} =t_c$  for all  $j\neq i\wedge j<m$, and checks that $t_{\geq l^m} =t_{a_1}$ (using deterministic automata).
	Below, a more detailed proof is given.

	First, notice that there are deterministic PTA $\cA_c$, $\cA_{a_1}$, $\cA_{\neg a_1}$ and $\cA_{\neg a_2}$ that accepts languages $\{t_c\}$, $\{t_{a_1}\}$, $L_{\neg a_1}$ and $L_{\neg a_2}$, respectively.
	
	By Lemma~\ref{lemma-ambiguity-of-union-and-intersection}, there is a $2$-ambiguous automaton $\cA_{\neg a_1 \vee \neg a_2}$ that accepts the language $L_{\neg a_1 \vee \neg a_2} := L_{\neg a_1} \cup L_{\neg a_2}$.
	
	We will construct an automaton $\cB := (Q_\cB,\Sigma_\cB, Q_{I_\cB}, \delta_\cB, \mathbb{C}_\cB)$ that accepts $L^{fa}$.
	
	\begin{itemize}
		\item $Q_\cB$ is defined as the union of states of $\cA_{a_1}$, $\cA_c$ and $\cA_{\neg a_1 \vee \neg a_2}$, along with additional states $q_1, q_2$.
		\item $\Sigma_\cB := \{a_1, a_2, c\}$
		\item $Q_{I_\cB} := \{q_1\}$
		\item $\delta_\cB$ will consists of the transitions of $\cA_{a_1}$, $\cA_c$ and $\cA_{\neg a_1 \vee \neg a_2}$, along with additional transitions:
		\begin{itemize}
			\item $(q_1, c, q_1, p) \in \delta_\cB$ for $p$ an initial state in $\cA_c$
			\item $(q_1, c, q_2, p) \in \delta_\cB$ for $p$ an initial state in $\cA_{\neg a_1 \vee \neg a_2}$
			\item $(q_2, c, q_2, p) \in \delta_\cB$ for $p$ an initial state in $\cA_c$
			\item $(q_2, a_1, p, p) \in \delta_\cB$ for $p$ an initial state in $\cA_{a_1}$
		\end{itemize}
		\item $\mathbb{C}_\cB(q_1) := 1$, $\mathbb{C}_\cB(q_2) := 1$, and for other states, the assigned color would be the same as in the automaton the state has originated from ($\cA_{a_1}$, $\cA_c$ or $\cA_{\neg a_1 \vee \neg a_2}$)
	\end{itemize}
	
	It is easy to see that $L(\cB) = L^{fa}$.
	
	Let $t \in L(\cB)$. By definition of $L^{fa}$, there is $m \in \nat$ such that $t \in L_m$.
If $\phi$ is an accepting computation  on $t$, then $\phi$ assigns to
the first $m+2$ nodes on the leftmost branch    the sequence $\underbrace{q_1, \dots, q_1}_\text{$i$ times} \cdot \underbrace{q_2, \dots, q_2}_\text{$m-i+1$ times} \cdot q_{a_1}$ for some $i\in \{1,\dots ,m\}$, where $q_{a_1}$ is the initial state of  $\cA_{a_1}$  (total $m$ possibilities).  $\phi$ assigns  to  $l^j \cdot r$ the initial state of $\cA_c$ if $j<i-1$ or $i-1<j<m$; the initial state of $\cA_{\neg a_1 \vee \neg a_2}$ if $j=i-1$; and the initial state of  $\cA_{a_1}$ if $j\geq m$. Since $\cA_c$  and $\cA_{a_1}$ are deterministic and $\cA_{\neg a_1 \vee \neg a_2}$ is 2-ambiguous,
the number of accepting computations on $t$  is at most  $2m $, hence, finite.
 %
%
%
\end{proof}

\begin{lem} \label{lemma-L-at-least-m-1-ambiguous}
	Let $L$ be a tree language such that $L_m \subseteq L \subseteq L^{fa}$. Then, $L$ is not $m-1$ ambiguous.
\end{lem}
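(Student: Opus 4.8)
The plan is to show that every automaton $\cA$ with $L(\cA)=L$ has at least $m$ accepting computations on a single well-chosen tree; since an $(m-1)$-ambiguous automaton is in particular finitely ambiguous, I may assume $\cA$ is finitely ambiguous (otherwise there is nothing to prove). The witness tree will be $t^\ast$: the tree with $c$ along the leftmost branch down to $l^{m-1}$, with $t_{a_1}$ grafted at $l^m$, and with $t_c$ grafted at every node $l^j r$ for $j<m$; note $t^\ast\in L_m\subseteq L$. For each $j<m$ write $C_j[s]$ for the tree obtained from $t^\ast$ by regrafting an arbitrary tree $s$ at the ``hole'' $l^j r$. The first observation, a routine case analysis on the definition of $L^{fa}$ using $L_m\subseteq L\subseteq L^{fa}$, is that $C_j[s]\in L$ iff $s\in L_{\neg a_1 \vee \neg a_2}$: the copy of $t_{a_1}$ at $l^m$ pins the parameter to $m$, and $l^j r$ is the only right-subtree along the spine that may differ from $t_c$, so it must carry a tree avoiding $a_1$ or avoiding $a_2$.

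From this I extract, for each hole $j$, the set $P_j$ of states that $\cA$ can assign to $l^j r$ in an accepting computation on some $C_j[s]$ with $s\in L_{\neg a_1 \vee \neg a_2}$. Grafting (Lemma~\ref{lem:triv-composition}) together with the previous observation shows that $L(\cA_p)\subseteq L_{\neg a_1 \vee \neg a_2}$ for every $p\in P_j$, and that $\bigcup_{p\in P_j}L(\cA_p)=L_{\neg a_1 \vee \neg a_2}$. Call a hole $j$ \emph{active} in a computation $\phi\in ACC(\cA,t^\ast)$ if $L(\cA_{\phi(l^j r)})\neq\{t_c\}$. A two-graft argument gives the key rigidity: no $\phi$ can have two active holes. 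Indeed, if $j_1\neq j_2$ were both active, one could graft the corresponding non-$t_c$ trees $s_1,s_2\in L_{\neg a_1 \vee \neg a_2}$ at $l^{j_1}r$ and $l^{j_2}r$ into $\phi$, producing an accepting computation on a tree with two non-$t_c$ right-subtrees along the spine; but such a tree is not in $L^{fa}$, contradicting $L(\cA)=L\subseteq L^{fa}$.

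The heart of the argument, and the step I expect to be the main obstacle, is to produce for each $j<m$ an accepting computation on $t^\ast$ whose hole $j$ is active -- equivalently, a state in $P_j$ that accepts both $t_c$ (so it can be used over $t^\ast$ itself) and some non-$t_c$ tree. Suppose no such state existed at hole $j$; then every non-$t_c$ tree of $L_{\neg a_1 \vee \neg a_2}$ would be accepted only from states of $P_j$ that reject $t_c$, so the subautomaton $\cA_{P_j^{\neg t_c}}$ obtained by taking as initial states exactly the members of $P_j$ rejecting $t_c$ would accept precisely $L_{\neg a_1 \vee \neg a_2}\setminus\{t_c\}$. By Corollary~\ref{corollary-A_Q-ambiguity} this subautomaton is finitely ambiguous, whereas $L_{\neg a_1 \vee \neg a_2}\setminus\{t_c\}$ is not: it satisfies the hypotheses of Proposition~\ref{prop:main-new} with $t_0:=t_c$ and $t_1:=t_{a_2}$, since $t_c\notin L_{\neg a_1 \vee \neg a_2}\setminus\{t_c\}$ while $t_c[t_{a_2}/Y]\in L_{\neg a_1}\setminus\{t_c\}$ for every non-empty antichain $Y$. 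This contradiction yields, for each $j$, a state $p_j\in P_j$ with $t_c\in L(\cA_{p_j})$ and $L(\cA_{p_j})\neq\{t_c\}$; completing a context-computation witnessing $p_j\in P_j$ by an accepting run of $\cA_{p_j}$ on $t_c$ gives an accepting computation $\phi_j\in ACC(\cA,t^\ast)$ with hole $j$ active.

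Finally, by the one-active-hole rigidity each $\phi_j$ is active at $j$ and at no other hole, so $\phi_0,\dots,\phi_{m-1}$ are pairwise distinct: for $j\neq j'$ the computation $\phi_{j'}$ is active at $l^{j'}r$ while $\phi_j$ is not, hence they assign different states there. Therefore $|ACC(\cA,t^\ast)|\geq m$, and $\cA$ is not $(m-1)$-ambiguous; as $\cA$ was an arbitrary automaton for $L$, the language $L$ is not $(m-1)$-ambiguous. Besides the crux just described, the points needing care are the clean bookkeeping of $P_j$ through partial ``context'' computations and the verification that regrafting at a hole preserves acceptance, so that the residual language read at $l^j r$ is genuinely $L_{\neg a_1 \vee \neg a_2}$.
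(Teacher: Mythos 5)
Your proof is correct and follows essentially the same route as the paper's: for each hole $l^jr$ ($j<m$) you produce an accepting computation on $t^m$ whose hole-state accepts both $t_c$ and some non-$t_c$ tree~-- existence forced by playing finite ambiguity of $\cA$ against the non-finitely-ambiguous language $L_{\neg a_1 \vee \neg a_2}\setminus\{t_c\}$ (you invoke Proposition~\ref{prop:main-new} directly where the paper uses Corollary~\ref{corollary-not-finitely-ambiguous-languages}(\ref{corollary-singleton-complement}))~-- and you get pairwise distinctness of the $m$ computations by the same double-grafting contradiction with $L \subseteq L^{fa}$. The per-hole sets $P_j$ and the ``active hole'' rigidity claim are only organizational variants of the paper's global $Q'$, $Q_{t_c \wedge \exists a_1}$, $Q_{\neg t_c \wedge \exists a_1}$ decomposition and its fixed witness tree $t_{\exists a_1}$.
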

\begin{proof}
	Let $\cA$ be an automaton with states $Q$ that accepts $L$, and assume $\cA$ is finitely ambiguous.
	Define a set $Q' \subseteq Q$ by $Q' := \{\phi(l^i r) \mid i<m \wedge \exists t \in L: \phi \in ACC(\cA, t) \}$
	and $Q_{\exists a_1} := \{q \in Q' \mid L_{\exists a_1} \cap L(\cA_q) \neq \emptyset \}$, and let
	$Q_{t_c \wedge \exists a_1} := \{q \in Q_{\exists a_1} \mid t_c \in L(\cA_q) \}$ and
	$Q_{\neg t_c \wedge \exists a_1} := Q_{\exists a_1} \setminus Q_{t_c \wedge \exists a_1}$.

	Relying on the fact that $ T^\omega_{\{c,a_1\}} \setminus \{t_c\}$ is not finitely ambiguous (by Corollary~\ref{corollary-not-finitely-ambiguous-languages}~(\ref{corollary-singleton-complement})), we derive the following claim:
	\begin{claim} \label{claim-Q_tc_and_exists_-has-a-tree}
		There is a tree $t_{\exists a_1} \in \big( T^\omega_{\{c,a_1\}} \setminus \{t_c\}\big)\cap \big(L(\cA_{Q_{t_c \wedge \exists a_1}})\setminus L(\cA_{Q_{\neg t_c \wedge \exists a_1}})\big)$.
		\xqed{\blacksquare}
	\end{claim}
	\noindent	Recall that $t^m$ is the tree that is obtained from $t_c$ by grafting $t_{a_1}$ on node $l^m$. For each $i < m$, define $t^m_i$ as the tree that is obtained from $t^m$ by grafting $t_{\exists a_1}$ on node $l^i r$. It is clear that $t^m_i \in L(\cA)$, and therefore there is an accepting computation $\phi_i$ of $\cA$ on $t^m_i$.
	
	$t_{\exists a_1} \in L(\cA_{Q_{t_c \wedge \exists a_1}}) \setminus L(\cA_{Q_{\neg t_c \wedge \exists a_1}})$ and since $t_{\exists a_1} \in \cA_{\phi_i(l^i r)}$ we conclude that $\phi_i(l^i r) \in Q_{t_c \wedge \exists a_1}$ and therefore $t_c \in L(\cA_{\phi_i(l^i r)})$. Let $\phi^c_i \in ACC(\cA_{\phi_i(l^i r)}, t_c)$, and construct a computation $\phi_i'$ from $\phi_i$ by grafting $\phi^c_i$ on $l^i r$.
	This tree that is obtained from $t^m_i$ by grafting $t_c$ on $l^i r$ is the tree $t^m$ and therefore $\phi_i' \in ACC(\cA, t^m)$.
	
	We are going to show that for all $i < j < m$, the computations $\phi_i', \phi_j' \in ACC(\cA, t^m)$ are different. Assume towards a contradiction $\phi_i' = \phi_j'$ and let $\widehat{\phi} := \phi_i'$. Define $p_i := \widehat{\phi}(l^i r)$, $p_j := \widehat{\phi}(l^j r)$, and let $\phi_{p_i} \in ACC(\cA_{p_i}, t_{\exists a_1})$ and $\phi_{p_j} \in ACC(\cA_{p_2}, t_{\exists a_1})$. Construct $t'$ from $t^m$ by grafting $t_{\exists a_1}$ on nodes $l^i r$ and $l^j r$, and construct $\phi'$ from $\widehat{\phi}$ by grafting $\phi_{p_i}$ on $l^i r$ and $\phi_{p_2}$ on $l^j r$. It follows that $\phi'$ is an accepting computation of $\cA$ on $t'$. That is a contradiction, since $t' \notin L^{fa}$ (since $t'_{\geq l^j r}=t'_{\geq l^i r} =t_{\exists a_1} \neq t_c$) and therefore $t' \notin L$ (since $L \subseteq L^{fa}$). We conclude that there are at least $m$ different accepting computations of $\cA$ on $t^m$.
\end{proof}
\begin{rem}
	The language $L_m$ is $2m$ ambiguous but not $m-1$ ambiguous.
 This implies that the hierarchy of ambiguous languages is infinite.
	The point of the more complex construction in Section~\ref{sect:k-ambig} is to show that this hierarchy is populated at every level.
\end{rem}
\begin{lem} \label{lemma-at-least-finitely-ambiguous}
	$L^{fa}$ is not boundedly ambiguous
\end{lem}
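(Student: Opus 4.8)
The plan is to obtain this statement as an immediate consequence of Lemma~\ref{lemma-L-at-least-m-1-ambiguous}, which has already done all the combinatorial work. That lemma asserts that \emph{any} language $L$ sandwiched as $L_m \subseteq L \subseteq L^{fa}$ fails to be $(m-1)$-ambiguous. First I would simply instantiate it with $L := L^{fa}$. The required hypothesis $L_m \subseteq L^{fa} \subseteq L^{fa}$ holds for every $m$ for trivial reasons: the right inclusion is equality, and the left inclusion is immediate from the definition $L^{fa} = \bigcup_{m \in \nat} L_m$, which gives $L_m \subseteq L^{fa}$. Hence Lemma~\ref{lemma-L-at-least-m-1-ambiguous} yields that $L^{fa}$ is not $(m-1)$-ambiguous, and this conclusion is available for every $m \in \nat$.

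Next I would convert this family of facts into non-bounded-ambiguity. Fix an arbitrary $k \in \nat$ and apply the previous step with $m := k+1$; this shows $L^{fa}$ is not $k$-ambiguous. Since $k$ was arbitrary, $L^{fa}$ is not $k$-ambiguous for any $k \in \nat$. By the definition of boundedly ambiguous (a language is boundedly ambiguous iff it is $k$-ambiguous for some $k$), this is exactly the assertion that $L^{fa}$ is not boundedly ambiguous. The one point worth stating carefully is the quantifier structure behind ``$L^{fa}$ is not $k$-ambiguous'': it means that no automaton $\cA$ with $L(\cA) = L^{fa}$ is $k$-ambiguous. This is precisely what Lemma~\ref{lemma-L-at-least-m-1-ambiguous} delivers, because a $k$-ambiguous automaton is in particular finitely ambiguous, so the lemma's witness tree $t^m$ with $m = k+1$ carries at least $k+1 > k$ distinct accepting computations, contradicting $k$-ambiguity.

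There is essentially no remaining obstacle at this level: the genuinely hard content — producing, on the tree $t^m$, $m$ pairwise distinct accepting computations via the grafting argument and the existence of the separating tree $t_{\exists a_1}$ of Claim~\ref{claim-Q_tc_and_exists_-has-a-tree} — was entirely absorbed into Lemma~\ref{lemma-L-at-least-m-1-ambiguous}. The present lemma is therefore a one-line corollary, and together with Lemma~\ref{lemma-accepted-by-finitiely-ambiguous-automata} (which supplies a finitely ambiguous automaton for $L^{fa}$) it establishes that $da(L^{fa})$ equals $\mathit{finite}$, i.e.\ $L^{fa}$ is finitely but not boundedly ambiguous, proving Theorem~\ref{th:hierarchy}(2).
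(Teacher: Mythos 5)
Your proposal is correct and is essentially identical to the paper's own proof, which likewise instantiates Lemma~\ref{lemma-L-at-least-m-1-ambiguous} with $L := L^{fa}$ (using $L_m \subseteq L^{fa}$ for all $m$) to conclude that $L^{fa}$ is not $(m-1)$-ambiguous for every $m$, hence not boundedly ambiguous. Your extra remark on the quantifier structure and on $k$-ambiguous implying finitely ambiguous is a correct, if implicit, reading of how the cited lemma is applied.
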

\begin{proof}
	$\forall m \in \nat: L_m \subseteq L^{fa}$, and therefore from Lemma~\ref{lemma-L-at-least-m-1-ambiguous} it follows that $L^{fa}$ is not $(m-1)$-ambiguous. That is, $L^{fa}$ is not boundedly ambiguous.
\end{proof}

%

\section{Uncountably Ambiguous Languages} \label{sect:uncount}
In this section we introduce a scheme for obtaining uncountably ambiguous languages from languages that are not boundedly  ambiguous. We then use this scheme to obtain natural examples of uncountably  ambiguous tree languages.

\begin{defi}
	Let $L^{\neg ba}$ be an arbitrary regular tree language over alphabet $\Sigma$ that is not boundedly ambiguous, and let $L_0$ be an arbitrary regular tree language over alphabet $\Sigma$ such that $L_0 \cap L^{\neg ba} = \emptyset$.
	Let $c \in \Sigma$ and define a language $\langUncountable$ over alphabet $\Sigma$: $t \in \langUncountable$ iff the following conditions hold:
	\begin{itemize}
		\item $\forall v \in l^*: t(v) = c$
		\item There is an infinite set $I \subseteq \nat$ such that $\forall i \in I: t_{\geq l^i \cdot r} \in L^{\neg ba}$
		and $\forall i \not \in I: t_{\geq l^i \cdot r} \in L_0$.
	\end{itemize}
\end{defi}

\begin{prop} \label{lemma-uncountably-ambiguous-language}
	The degree of ambiguity of $\langUncountable$ is $2^{\aleph_0}$.
\end{prop}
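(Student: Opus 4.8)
The plan is to establish the two bounds $da(\langUncountable)\le 2^{\aleph_0}$ and $da(\langUncountable)>\aleph_0$ separately. The upper bound is immediate from the general fact, recalled in the introduction, that an automaton has at most $2^{\aleph_0}$ accepting computations on any fixed tree. Hence the entire task is to show that \emph{no} automaton recognizing $\langUncountable$ is countably ambiguous. So I would fix an automaton $\cA$ with $L(\cA)=\langUncountable$ and analyze its computations structurally. Each $t\in\langUncountable$ has a spine $l^*$ labeled by $c$ and is determined by its slot subtrees $t_{\ge l^i\cdot r}$; correspondingly, every computation $\phi$ splits into a spine part (the states $\phi(l^i)$) and, at each slot $i$, a sub-computation of $\cA_{\phi(l^i\cdot r)}$ on $t_{\ge l^i\cdot r}$. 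By the grafting Lemma~\ref{lem:triv-composition} these pieces recombine freely, so $|ACC(\cA,t)|$ equals the sum over accepting spine assignments of the product $\prod_i |ACC(\cA_{\phi(l^i\cdot r)},\,t_{\ge l^i\cdot r})|$.

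The engine of the argument is a single notion. Call a state $p$ a \emph{doubling entry} if $p$ occurs as $\phi(l^i\cdot r)$ for infinitely many $i$ along some accepting computation on some $t\in\langUncountable$ (with those slots lying in $L^{\neg ba}$), and there is a tree $s\in L^{\neg ba}$ with $|ACC(\cA_p,s)|\ge 2$. If a doubling entry $p$ exists, I would take the accepting computation witnessing its recurrence and graft the tree $s$ at every slot that $p$ enters. This keeps the tree in $\langUncountable$ (infinitely many $L^{\neg ba}$-slots remain), keeps the spine computation accepting, and, via the product formula, offers at least two independent choices at each of infinitely many slots; hence $\cA$ has $2^{\aleph_0}$ accepting computations and is not countably ambiguous.

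It then remains to produce a doubling entry, and this is precisely where the non-bounded ambiguity of $L^{\neg ba}$ is used, by contradiction. Suppose no doubling entry exists. First, any entry state $p$ that recurs infinitely often on an accepting run over $L^{\neg ba}$-slots must satisfy $L(\cA_p)\subseteq L^{\neg ba}$: otherwise a tree $s'\in L(\cA_p)\setminus L^{\neg ba}$ could be grafted at those infinitely many $p$-slots, yielding by Lemma~\ref{lem:triv-composition} an accepting computation of $\cA$ on a tree having only finitely many $L^{\neg ba}$-slots (or a slot of the wrong type), i.e.\ a tree outside $\langUncountable$ that is nevertheless accepted --- contradicting $L(\cA)=\langUncountable$. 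Second, since these recurring entries are exactly the ones forced by the ``infinitely many $L^{\neg ba}$-slots'' requirement, every $s\in L^{\neg ba}$ is accepted from at least one of them (embed $s$ into a suitable $\langUncountable$-tree), so their languages cover $L^{\neg ba}$. Together these give $\bigcup_p L(\cA_p)=L^{\neg ba}$ for the finite set $P^\ast$ of such entries, i.e.\ $\cA_{P^\ast}$ accepts exactly $L^{\neg ba}$. Finally, the absence of doubling entries means each $\cA_p$ is unambiguous on $L^{\neg ba}$, so $\cA_{P^\ast}$ is $|P^\ast|$-ambiguous; thus $L^{\neg ba}$ would be boundedly ambiguous, contradicting the hypothesis.

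I expect the main obstacle to be the bookkeeping of the ``infinitely often'' parity condition along the spine. Converting ``$p$ occurs infinitely often'' into a usable recurrence requires a pigeonhole argument (finitely many states on an infinite spine) combined with parity pumping, so that one can repeat the relevant spine loop to drive a single entry state into infinitely many slots while keeping the pumped spine branch accepting (its dominant infinitely-occurring color must stay even). The same care is needed in the substitution steps of the contradiction: one must ensure that grafting at the selected slots genuinely reduces the number of $L^{\neg ba}$-slots below infinity for the $L(\cA_p)\subseteq L^{\neg ba}$ step, while leaving the spine computation intact so that acceptance is preserved. Once this parity bookkeeping is in place, the product-formula counting and the reduction to bounded ambiguity of $L^{\neg ba}$ are routine.
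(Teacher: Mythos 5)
Your overall architecture (dichotomy on the existence of a ``doubling entry'', with the doubling entry yielding $2^{\aleph_0}$ computations by independent choices at infinitely many slots) is sound, and that half of your argument coincides with the paper's final step. The genuine gap is in the first claim of your contradiction branch: that every entry state $p$ recurring infinitely often over $L^{\neg ba}$-slots satisfies $L(\cA_p)\subseteq L^{\neg ba}$. Your argument grafts $s'\in L(\cA_p)\setminus L^{\neg ba}$ at the $p$-slots and asserts the result leaves $\langUncountable$; but if $s'\in L_0$, the grafted tree has every slot still in $L^{\neg ba}\cup L_0$, and it retains infinitely many $L^{\neg ba}$-slots unless the $p$-slots happen to be cofinite among the $L^{\neg ba}$-slots of the witnessing run --- which nothing in your pigeonhole guarantees (the automaton may split the $L^{\neg ba}$-slots among several recurring states, e.g.\ sending one state to even-indexed slots and another, with $L_0$-trees in its language, to odd-indexed slots). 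So no contradiction arises in that case, and the best you can actually extract from single-slot grafting is $L(\cA_p)\subseteq L^{\neg ba}\cup L_0$. This contamination by $L_0$ then propagates: your set $P^\ast$ only satisfies $L^{\neg ba}\subseteq L(\cA_{P^\ast})\subseteq L^{\neg ba}\cup L_0$, the ``no doubling entry'' hypothesis only forbids two runs on trees \emph{in} $L^{\neg ba}$ (so $\cA_{P^\ast}$ may still be wildly ambiguous on the $L_0$-trees it accepts), and you cannot strip the $L_0$-part off without intersecting with $\overline{L_0}$, whose degree of ambiguity is uncontrolled. Hence the final step ``$L^{\neg ba}$ is boundedly ambiguous'' does not follow. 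You flagged this as parity/bookkeeping care, but it is not bookkeeping: it needs a new mechanism.

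The paper's proof supplies exactly that mechanism. It builds $L_0$-disjointness into the distinguished state set from the start, taking $Q_{\mathit{unamb}\wedge\neg L_0}$ to consist of slot-states $q$ with $\cA_q$ unambiguous \emph{and} $L(\cA_q)\cap L_0=\emptyset$; for such states, single-slot grafting does give $L(\cA_{Q_{\mathit{unamb}\wedge\neg L_0}})\subseteq L^{\neg ba}$, and Lemma~\ref{lemma-subset-language-of-different-ambiguity} plus Corollary~\ref{corollary-ambiguity-closure} produce one fixed witness $t_{\neg ba}\in L^{\neg ba}$ rejected from all these states. It then works over the single tree $t''=t_c[t_{\neg ba}/A]$ with $A=l^*\cdot r$, and handles the $L_0$-compatible slot-states by the set $B$: grafting $L_0$-trees at all slots of $B$ must keep the result in $\langUncountable$, which forces $A\setminus B$ to be infinite; pigeonhole on $A\setminus B$ then yields a state $q$ that recurs infinitely often, is $L_0$-disjoint, and accepts $t_{\neg ba}$, hence is ambiguous --- i.e.\ your doubling entry. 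If you replace your first claim by this $B$-set argument (equivalently, restrict your $P^\ast$ to $L_0$-disjoint states and prove that such states must carry infinitely many slots of a suitable run), your proof goes through; also note that no parity pumping is needed anywhere, since all grafting is done on slot subtrees of an already-given accepting run, leaving the spine untouched.
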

\begin{proof}
	Let $\cA = (Q,\Sigma, Q_I, \delta, \mathbb{C})$ be a PTA that accepts $\langUncountable$. We will show that $da(\cA) = 2^{\aleph_0}$.
	
        Let $Q' := \{\phi(u) \mid u \in l^* \cdot r$ and $\exists t: \phi \in ACC(\cA, t)\}$, and define $Q_{\mathit{unamb}\wedge \neg L_0} := \{q \in Q' \mid \cA_q$ is unambiguous and $L(\cA_q) \cap L_0 = \emptyset\}$.

	\begin{claim} \label{claim-contained-in-L-neg-ba}
          $L(\cA_{Q_{\mathit{unamb}\wedge \neg L_0}}) \subseteq L^{\neg ba}$.
	\end{claim}
	\begin{claimproof}
          Assume, for the sake of contradiction, that there is a tree $t \in L(\cA_{ Q_{\mathit{unamb}\wedge \neg L_0}})$ such that $t \notin L^{\neg ba}$. By definition of $ Q_{\mathit{unamb}\wedge \neg L_0}$ we conclude that $t \notin L_0$.
		
                Let $q \in Q_{\mathit{unamb}\wedge \neg L_0}$ such that $t \in L(\cA_q)$ and let $\phi \in ACC(\cA_q, t)$. Recall that $q \in Q'$ (since $Q_{\mathit{unamb}\wedge \neg L_0} \subseteq Q'$) and therefore there is a tree $t' \in L(\cA)$, a computation $\phi' \in ACC(\cA, t')$ and a node $u \in l^* \cdot r$ such that $\phi'(u) = q$. By the grafting lemma we conclude that $\phi' [ \phi/u]$ is an accepting computation of $\cA$ on $t'[t/u]$. Therefore, $t'  [t/u] \in L(\cA)$ for $t \notin L^{\neg ba} \cup L_0$~-- a contradiction to the definition of $\cA$.
	\end{claimproof}
	
        Notice that $L(\cA_{Q_{\mathit{unamb}\wedge \neg L_0}})$ is boundedly ambiguous by Corollary~\ref{corollary-ambiguity-closure} (as a finite union of unambiguous languages), and since $L^{\neg ba}$ is not boundedly ambiguous we conclude that $da(L(\cA_{Q_{\mathit{unamb}\wedge \neg L_0}})) \neq da(L^{\neg ba})$. By Claim \ref{claim-contained-in-L-neg-ba} we obtain $L(\cA_{Q_{\mathit{unamb}\wedge \neg L_0}}) \subseteq L^{\neg ba}$, and applying Lemma~\ref{lemma-subset-language-of-different-ambiguity} we conclude that there is a tree $t_{\neg ba} \in L^{\neg ba}$ such that $t_{\neg ba} \notin L(\cA_{Q_{\mathit{unamb}\wedge \neg L_0}})$.
	
	Let $c \in \Sigma$ be as in the definition of $\langUncountable$, and let $t_c$ be a tree where all nodes are labeled by $c$. Let $A := l^* \cdot r$ be an antichain, and define $t'' := t_c [t_{\neg ba}/A]$. By the definition of $\cA$ it is clear that $t'' \in L(\cA)$. Let $\phi'' \in ACC(\cA, t'')$, and let $B := \{u \in A \mid L(\cA_{\phi''(u)}) \cap L_0 \neq \emptyset \}$.
	
	For each $u \in B$ there is a tree $t_u \in L_0$ and a computation $\phi_u \in ACC(\cA_{\phi''(u)}, t_u)$. Therefore, by the grafting lemma, we conclude that the tree $t'''$ that is obtained from $t''$ by grafting $t_u$ on each node $u \in B$ is in $L(\cA)$.
	
	Assume, for the sake of contradiction, that $A \setminus B$ is finite. By definition of $t'''$, for each $i \in \nat$ such that $u := l^i \cdot r \in B$ we have $t'''_{\geq l^i \cdot r} = t_u \in L_0$. Therefore, $|\{i \in \nat \mid t'''_{\geq l^i \cdot r} \in L^{\neg ba}\}| = |\{u \in A \mid t'''_{\geq u} \in L^{\neg ba}\}| = |\{u \in A \setminus B \mid t'''_{\geq u} \in L^{\neg ba}\}| = |A \setminus B| < \aleph_0$, and by definition of $\langUncountable$ we conclude that $t''' \notin \langUncountable$~-- a contradiction to the definition of $\cA$.
	
        $A \setminus B$ is infinite, and therefore there is a state $q$ and an infinite set $\widehat{A} \subseteq A \setminus B$ such that $\phi''(u) = q$ for all $u \in \widehat{A}$. Recall that $\forall u \in \widehat{A}: t''_{\geq u} = t_{\neg ba}$. Notice that for each $u \in \widehat{A}$ we have $u \notin B$, and by definition of $B$ we obtain $L(\cA_{\phi''(u)}) \cap L_0 = L(\cA_q) \cap L_0 = \emptyset$. Since $t_{\neg ba} \notin L(\cA_{Q_{\mathit{unamb}\wedge \neg L_0}})$ we conclude that $q \notin Q_{\mathit{unamb}\wedge \neg L_0}$~-- hence, $\cA_q$ is ambiguous.

        Let $t_{\mathit{amb}} \in L(\cA_q)$ be a tree having at least two accepting computations $\phi_1, \phi_2 \in ACC(\cA_q, t_{\mathit{amb}})$.
        Let $\widehat{t} := t'' [ t_{\mathit{amb}}/{\widehat{A}}]$, and $\widehat{\phi} := \phi [ \phi_1/{\widehat{A}}]$. By the grafting lemma we obtain $\widehat{\phi} \in ACC(\cA, \widehat{t})$. For each $A' \subseteq \widehat{A}$, define a computation $\phi_{A'} := \widehat{\phi} [ \phi_2/{A'} ]$. Notice that $\phi_{A'} \in ACC(\cA, \widehat{t})$ (by the grafting lemma) and that $\forall A_1, A_2 \subseteq \widehat{A}: A_1 \neq A_2 \rightarrow \phi_{A_1} \neq \phi_{A_2}$ (since $\phi_1 \neq \phi_2$). Therefore, $|ACC(\cA, \widehat{t})| \geq |\{A' \mid A' \subseteq \widehat{A}\}| = 2^{\aleph_0}$, and $da(\cA) = 2^{\aleph_0}$, as requested.
\end{proof}

We will now introduce a couple of definitions, and present three  natural examples of infinite tree languages that are not countably ambiguous.

\begin{defi}[Characteristic tree]
	The characteristic tree of $U_1, \dots, U_n \subseteq \directions^*$ is a $\{0,1\}^n$-labeled tree $t[U_1, \dots, U_n]$ such that $t[U_1, \dots, U_n](u) := (b_1, \dots, b_n)$ where $b_i = 1$ iff $u \in U_i$ for each $1 \leq i \leq n$.
\end{defi}

\begin{defi}
	For a set $U \subseteq \directions^*$ we define $U\downarrow$ as the downward closure of $U$.
\end{defi}

\begin{defi}
	A set $X \subseteq \directions^*$ is called \textbf{perfect} if $X \neq \emptyset$ and $\forall u \in X: \exists v_1, v_2 \in X$ such that $v_1$ and $  v_2 $ are incomparable and greater than $u$.  
\end{defi}

\begin{prop} \label{prop-examples-of-uncountably-ambiguous-lanauges}
	The following regular languages are not countably ambiguous:
	\begin{enumerate}
		\item $L_{X \subseteq Y\downarrow} := \{t[X, Y] \mid X \subseteq Y\downarrow\}$~-- ``for each node in $X$ there is a greater or equal node in $Y$.''
                \item $L_{\mathit{no-max}} := \{t[X] \mid X$ has no maximal element$\}$~-- ``for each node in $X$ there is a greater node in $X$.''
	\item $L_{\perf} := \{t[X] \mid X$ is perfect $\}$~-- ``for each node in $X$ there are at least two greater incomparable nodes in $X$.''	
	\end{enumerate}
\end{prop}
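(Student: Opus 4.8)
The plan is to transfer the degree $2^{\aleph_0}$ of the scheme language $\langUncountable$ from Proposition~\ref{lemma-uncountably-ambiguous-language} to each of the three languages. Since no degree of ambiguity exceeds $2^{\aleph_0}$, it suffices to prove $da(L)\ge 2^{\aleph_0}$ for each $L$ in the list. I would use two interchangeable transfer mechanisms: either a Moore-machine reduction via Lemma~\ref{lemma-F-reduction}, or a \emph{deterministic frame}, i.e.\ an unambiguous regular $R$ with $L\cap R=\langUncountable$ for a suitable instantiation of $L_0,L^{\neg ba}$ and of the parameter letter $c$. In the frame case, applying Lemma~\ref{lemma-ambiguity-of-union-and-intersection}(2) to an arbitrary $\cA$ accepting $L$ and a deterministic automaton for $R$ produces an automaton for $\langUncountable$ of no greater ambiguity, whence $2^{\aleph_0}=da(\langUncountable)\le da(\cA)$ for every $\cA$, so $da(L)\ge 2^{\aleph_0}$.

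For item (1) the frame method works cleanly, and I would present it first as the template. Take the parameter letter $c:=(1,0)$ (``in $X$, not in $Y$''), let $L^{\neg ba}:=\{t[X,Y]\mid X=\emptyset,\ Y\ne\emptyset\}$, which is a copy of $L_{\exists a_1}$ and hence not boundedly ambiguous by Corollary~\ref{corollary-not-finitely-ambiguous-languages}(\ref{corollary-exists-ai-not-finitely-ambiguous}), and let $L_0$ be the single all-$(0,0)$ tree. As frame take $R:=\{t[X,Y]\mid X=l^*\text{ and }Y\cap l^*=\emptyset\}$, which is clearly deterministic. The key verification is $L_{X\subseteq Y\downarrow}\cap R=\langUncountable$: on trees in $R$ the only $X$-nodes are the spine nodes $l^i$, whose witnesses can lie only in the subtrees $t_{\ge l^j\cdot r}$ with $j\ge i$, so $X\subseteq Y\downarrow$ holds exactly when infinitely many of those subtrees contain a $Y$-node, which is precisely the ``infinitely many active subtrees'' clause of the scheme. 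Combined with the transfer above this yields $da(L_{X\subseteq Y\downarrow})=2^{\aleph_0}$.

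For items (2) and (3) I would follow the same overall strategy, but the construction is more delicate because ``$X$ has no maximal element'' and ``$X$ is perfect'' are \emph{self-similar}: the subtree hanging below an active position must again satisfy the very same (hard) condition, so one cannot peel off a simple not-boundedly-ambiguous subtree language by a deterministic frame as in (1). I would first record, via Proposition~\ref{prop:main-new}, that each of these languages is already not boundedly ambiguous: for $L_{\mathit{no-max}}$ take $t_0$ to be the tree with a single $a_1$ at the root (so $t_0\notin L_{\mathit{no-max}}$) and $t_1$ a tree carrying an infinite $a_1$-chain, so that $t_0[t_1/Y]\in L_{\mathit{no-max}}$ for every non-empty antichain $Y$; an analogous choice works for $L_{\perf}$ with $t_1$ a tree whose marked set is perfect. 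This furnishes a legitimate $L^{\neg ba}$ (the non-empty part of the language itself), together with $L_0:=\{t_c\}$, for which one checks $\langUncountable\subseteq L$.

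The main obstacle is then to push the degree $2^{\aleph_0}$ from $\langUncountable$ up to $L$ in the self-similar case. Here I would re-run the proof of Proposition~\ref{lemma-uncountably-ambiguous-language} directly on an automaton $\cA$ for $L$: the step ``$L(\cA_{Q_{\mathit{unamb}\wedge \neg L_0}})\subseteq L^{\neg ba}$'' transfers essentially verbatim, since grafting a subtree lying outside $L$ at a node of $l^*\cdot r$ produces a tree outside $L$ (a maximal $a_1$, respectively a non-branching marked node, survives the grafting). The truly delicate point—and the crux of the whole argument—is the step asserting that the ambiguous state recurs along an \emph{infinite} antichain (``$A\setminus B$ is infinite''). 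For the generic scheme this is handed to us by the explicit ``infinitely many active subtrees'' clause, which neither $L_{\mathit{no-max}}$ nor $L_{\perf}$ contains; instead it must be extracted from the combinatorics of the condition itself, ruling out the possibility that all but finitely many spine positions collapse to the empty subtree $t_c$. Once this infinitude is secured, grafting a fixed two-computation gadget $t_{\mathit{amb}}$ at the infinitely many antichain nodes and switching independently between its two accepting computations produces $2^{\aleph_0}$ accepting computations, exactly as in Proposition~\ref{lemma-uncountably-ambiguous-language}, and Lemma~\ref{lemma-subset-language-of-different-ambiguity} is used as there to locate $t_{\mathit{amb}}$.
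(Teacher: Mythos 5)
Your item (1) is correct and is essentially the paper's own proof: your frame $R$ is exactly the paper's $L_{\mathit{left}}$, your parameters $L^{\neg ba}=\{t[X,Y]\mid X=\emptyset,\ Y\neq\emptyset\}$, $L_0=\{t[\emptyset,\emptyset]\}$ are the paper's, and the transfer via Lemma~\ref{lemma-ambiguity-of-union-and-intersection}(2) is the paper's inequality $da(L\cap R)\le da(L)\cdot da(R)$. Your preliminary step for items (2) and (3) is also correct and is a genuinely different route: you get ``not finitely ambiguous'' directly from Proposition~\ref{prop:main-new} (single marked root as $t_0$, an infinite chain, respectively a perfect set, as $t_1$), which is arguably cleaner than the paper's Moore-machine reductions in Lemma~\ref{lemma-no-max-not-finitely-amb} and Claim~\ref{claim-L-perf-not-finitely-amb}.

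However, for items (2) and (3) your main argument has a genuine gap at exactly the step you call the crux, and it cannot be closed with your parameters. With $L_0:=\{t_c\}$, the statement you need~-- that all but finitely many spine positions cannot collapse to the empty subtree~-- is simply false as a membership constraint: a finite (even empty) union of sets with no maximal element again has no maximal element, and a finite nonempty union of perfect sets is perfect. Hence the tree $t'''$ obtained in your re-run of Proposition~\ref{lemma-uncountably-ambiguous-language} by grafting members of $L_0$ at the positions of $B$ still lies in $L_{\mathit{no-max}}$ (respectively $L_{\perf}$) when $A\setminus B$ is finite, no contradiction is available, and the infinitude of $A\setminus B$ cannot be ``extracted from the combinatorics.'' (For $L_{\mathit{no-max}}$ you could repair the direct approach by taking $L_0:=\emptyset$, which makes $B=\emptyset$ trivially; but for $L_{\perf}$ that choice breaks the other step, the analogue of Claim~\ref{claim-contained-in-L-neg-ba}, because grafting $t[\emptyset]$ at a node of $l^*\cdot r$ can preserve perfectness.) Moreover, your reason for abandoning the frame method is mistaken: the scheme places no requirement that $L^{\neg ba}$ be simpler than the target language, and self-similarity is precisely what makes the frame work. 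The paper takes $L^{\neg ba}:=L$ itself: it checks the equalities $L_{\mathit{no-max}}\cap\{t[X]\mid X\cap l^*=\emptyset\}=\langUncountable$ with $L_0:=\emptyset$ (so every spine subtree must be active and infinitude is automatic), and $L_{\perf}\cap\{t[X]\mid l^*\subseteq X\}=\langUncountable$ with $L_0:=\{t[\emptyset]\}$ (the fully marked spine forces infinitely many active subtrees, since each spine node needs incomparable witnesses above it). Proposition~\ref{lemma-uncountably-ambiguous-language} is then applied as a black box, and the product inequality of Lemma~\ref{lemma-ambiguity-of-union-and-intersection} transfers the uncountable degree back to $L$, with no need to re-prove anything about the automaton for $L$.
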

In the rest of this section we will prove Proposition \ref{prop-examples-of-uncountably-ambiguous-lanauges}.

\begin{proof}[Proof of Proposition \ref{prop-examples-of-uncountably-ambiguous-lanauges}(1)]
  Let $L_{\mathit{left}} := \{t[X,Y] \mid X = l^*$ and $Y \cap l^* = \emptyset\}$. It is easy to see that $L_{\mathit{left}}$ can be accepted by a deterministic PTA, and therefore $da(L_{\mathit{left}}) = 1$.
	
        By Lemma~\ref{lemma-ambiguity-of-union-and-intersection} we conclude that $da(L_{X \subseteq Y\downarrow} \cap L_{\mathit{left}}) \leq da(L_{X \subseteq Y\downarrow}) \cdot da(L_{\mathit{left}}) = da(L_{X \subseteq Y\downarrow})$. We will show that $L_{X \subseteq Y\downarrow} \cap L_{\mathit{left}}$ is not countably ambiguous. By the above inequality, this implies that $L_{X \subseteq Y\downarrow}$ is not countably ambiguous.
		
	\begin{claim} \label{claim-downward-closed-language-scheme}
		Let $L_{X = \emptyset, Y \neq \emptyset} := \{t[X,Y] \mid X = \emptyset$ and $Y \neq \emptyset\}$.
                Then $t' \in L_{X \subseteq Y\downarrow} \cap L_{\mathit{left}}$ iff the following conditions hold:
		\begin{enumerate}
			\item $\forall u \in l^*: t'(u) = (1,0)$
			\item There is an infinite set $I \subseteq \nat$ such that:
			\begin{enumerate}
				\item If $i \in I$ then $t'_{\geq l^i \cdot r} \in L_{X = \emptyset, Y \neq \emptyset}$
				\item If $i \notin I$ then $t'_{\geq l^i \cdot r} \in \{t[\emptyset, \emptyset]\}$
			\end{enumerate}
		\end{enumerate}
	\end{claim}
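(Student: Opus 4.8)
The plan is to prove the stated equivalence by decomposing the full-binary tree into its leftmost branch together with the subtrees hanging off it, and then translating each membership condition into statements about that decomposition. First I would fix the notation $X := \{u \mid t'(u) = (1, \cdot)\}$ and $Y := \{u \mid t'(u) = (\cdot, 1)\}$, so that $t' = t[X,Y]$, and record the structural fact that
\[
\{l,r\}^* = l^* \;\sqcup\; \bigsqcup_{i \in \nat} \{l^i r w \mid w \in \{l,r\}^*\},
\]
i.e.\ every node lies either on the leftmost branch $l^*$ or in exactly one of the subtrees rooted at $l^i \cdot r$. This lets me analyze the labeling branch-by-branch and subtree-by-subtree.

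Next I would translate $t' \in L_{\mathit{left}}$, which by definition says $X = l^*$ and $Y \cap l^* = \emptyset$. Using the decomposition, this is equivalent to the conjunction of: (a) $t'(u) = (1,0)$ for every $u \in l^*$ --- first coordinate $1$ and second coordinate $0$ along the leftmost branch --- and (b) every subtree $t'_{\geq l^i r}$ has empty $X$-part. Clause (a) is exactly the right-hand condition~1, and clause (b) is the part of condition~2 requiring every subtree (both for $i \in I$ and for $i \notin I$) to have empty first coordinate, matching the definitions of $L_{X = \emptyset, Y \neq \emptyset}$ and $t[\emptyset,\emptyset]$.

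Then, assuming the conditions just extracted, I would translate $t' \in L_{X \subseteq Y\downarrow}$, i.e.\ $X \subseteq Y\downarrow$. Since $X = l^*$, this says that for every $i$ there is $y \in Y$ with $l^i \leq y$. The key geometric observation is that, because $Y \cap l^* = \emptyset$, any such witness $y$ must lie off the leftmost branch, and the descendants of $l^i$ that lie off the branch are exactly the nodes of the subtrees rooted at $l^j r$ with $j \geq i$. Hence $X \subseteq Y\downarrow$ is equivalent to: for every $i$ there is $j \geq i$ such that $t'_{\geq l^j r}$ carries a $Y$-node. Setting $I := \{j \mid t'_{\geq l^j r}$ has nonempty $Y$-part$\}$, this unbounded-witness condition is precisely the assertion that $I$ is infinite, and the dichotomy ``$Y$-part nonempty for $i \in I$, both parts empty for $i \notin I$'' reproduces the remaining content of condition~2. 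Assembling the three translations yields the biconditional in both directions at once.

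I expect the only genuinely non-routine step to be this last geometric translation of the downward-closure condition: one must be careful that $Y\downarrow$ is the set of \emph{ancestors} of $Y$, so that $l^i \in Y\downarrow$ is witnessed by a descendant of $l^i$ in $Y$, and one must verify that such witnesses live exactly in the subtrees $t'_{\geq l^j r}$ with $j \geq i$ in order to collapse the family of requirements (one per $i$) into the single statement that $I$ is unbounded, hence infinite. The remaining equivalences are direct bookkeeping off the disjoint decomposition.
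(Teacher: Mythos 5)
Your proof is correct and takes essentially the same approach as the paper's: both decompose the tree along the leftmost branch $l^*$, identify $I$ as the set of indices whose subtrees $t'_{\geq l^i \cdot r}$ carry a $Y$-node, and reduce $X \subseteq Y\downarrow$ (given the $L_{\mathit{left}}$ constraints) to the unboundedness, hence infiniteness, of $I$. The only difference is presentational: you argue by a direct chain of equivalences, while the paper establishes each direction by contradiction.
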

	\begin{claimproof}
          $\Rightarrow$: Let $t' \in L_{X \subseteq Y\downarrow} \cap L_{\mathit{left}}$. By definition of $L_{\mathit{left}}$ it is clear that the condition (1) holds, and that for each $i \in \nat: t'_{\geq v^i \cdot l} \in L_{X = \emptyset, Y \neq \emptyset}$ or $t'_{\geq v^i \cdot l} = t[\emptyset, \emptyset]$. Assume, for the sake of contradiction, that the set $\{i \in \nat \mid t'_{\geq v^i \cdot l} \in L_{X = \emptyset, Y \neq \emptyset}\}$ is finite. Therefore, by the second condition, there is an index $k \in \nat$ such that $\forall i \geq k: t'_{\geq v^i \cdot l} = t[\emptyset, \emptyset]$. Let $u := l^k$. By the definition of $L_{\mathit{left}}$ we have $u \in X$, and for each $v \geq u$ we have either $t'(v) = (1,0)$ if $v \in l^*$, or $t'(v) = (0,0)$ otherwise. Hence, $\forall v \geq u: v \notin Y$, in contradiction to $t' \in L_{X \subseteq Y\downarrow}$.
		
                $\Leftarrow$: Assume that the conditions hold for $t'$. It is easy to see that $t' \in L_{\mathit{left}}$. We will show that $t' \in L_{X \subseteq Y\downarrow}$. Assume, for the sake of contradiction, that there is a node $u \in X$ such that $v \notin Y$ for each node $v \geq u$. Since all nodes in $X$ are in $l^*$ we conclude that there is $i \in \nat$ such that $u = l^i$. Notice that the set $I \subseteq \nat$ is infinite, and therefore there is $j > i$ such that $t'_{\geq l^j \cdot r} \in L_{X = \emptyset, Y \neq \emptyset}$. Therefore, there is a node $v \geq l^j \cdot r > l^i = u$ such that $v \in Y$~-- a contradiction.
	\end{claimproof}
	
	Observe that the language $L_{X = \emptyset, Y \neq \emptyset} := \{t[X,Y] \mid X = \emptyset$ and $Y \neq \emptyset\}$ can be considered as a tree language over alphabet $\{0\} \times \{0,1\}$, and that $L_{X = \emptyset, Y \neq \emptyset} = T^\omega_{\{0\} \times \{0,1\}} \setminus \{t[\emptyset, \emptyset]\}$. Therefore, by Corollary~\ref{corollary-not-finitely-ambiguous-languages}(\ref{corollary-singleton-complement}) we conclude that $L_{X = \emptyset, Y \neq \emptyset}$ is not finitely ambiguous.
	
        Notice that by Claim \ref{claim-downward-closed-language-scheme} we obtain $L_{X \subseteq Y\downarrow} \cap L_{\mathit{left}} = \langUncountable$, for $L_0 = \{t[\emptyset, \emptyset]\}$ and $L^{\neg ba} = L_{X = \emptyset, Y \neq \emptyset}$. Therefore, applying Proposition \ref{lemma-uncountably-ambiguous-language} we conclude that $L_{X \subseteq Y\downarrow} \cap L_{\mathit{left}}$ is not countably ambiguous.
\end{proof}

To prove Proposition \ref{prop-examples-of-uncountably-ambiguous-lanauges}(2), we will first prove the following lemma:

\begin{lem} \label{lemma-no-max-not-finitely-amb}
  $L_{\mathit{no-max}}$ is not finitely ambiguous.
\end{lem}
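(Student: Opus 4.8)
The goal is to show that $L_{\mathit{no\text{-}max}}$, the language of characteristic trees $t[X]$ where $X \subseteq \{l,r\}^*$ has no maximal element, is not finitely ambiguous. The natural strategy is to invoke Proposition~\ref{prop:main-new}, which reduces the task to exhibiting regular trees $t_0$ and $t_1$ together with the defining property: $t_0 \notin L_{\mathit{no\text{-}max}}$, yet $t_0[t_1/Y] \in L_{\mathit{no\text{-}max}}$ for \emph{every} non-empty antichain $Y$.

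The plan is to take $t_0 := t[\emptyset]$, the characteristic tree of the empty set (all nodes labeled $0$), and $t_1 := t[\{l,r\}^*]$, the characteristic tree of the full set (all nodes labeled $1$). Both are clearly regular. First I would check that $t_0 \notin L_{\mathit{no\text{-}max}}$: here $X = \emptyset$, and the condition ``$X$ has no maximal element'' must be read carefully. The subtlety is whether $\emptyset$ vacuously has no maximal element; I would need the definition to be such that $t[\emptyset]$ is \emph{excluded} (e.g.\ requiring $X$ to witness the no-maximal property in a way that fails for the empty set), or else I would instead choose $t_0$ to be a characteristic tree of a non-empty set with a maximal element, such as $X_0 = \{\epsilon\}$, so that $t_0(\epsilon) = 1$ and all other nodes are $0$; then $\epsilon$ is a maximal element of $X_0$, so $t_0 \notin L_{\mathit{no\text{-}max}}$.

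Next I would verify the grafting condition. Fix any non-empty antichain $Y$. Grafting $t_1 = t[\{l,r\}^*]$ at every node of $Y$ produces a tree $t_0[t_1/Y]$ whose associated set $X$ equals $(X_0 \setminus \{v : \exists y \in Y,\, v \geq y\}) \cup \{v : \exists y \in Y,\, v \geq y\}$, i.e.\ $X$ contains every node at or below some $y \in Y$, together with whatever portion of $X_0$ lies strictly off the grafted cones. The key point is that inside each grafted cone $\{v : v \geq y\}$, \emph{every} node is in $X$, so for any node $u \in X$ with $u \geq y$ we have $u \cdot l, u \cdot r \in X$ as well, giving strictly greater elements of $X$ above $u$; hence no node inside a cone is maximal. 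For the finitely many nodes of $X_0$ lying outside all cones (with my choice $X_0 = \{\epsilon\}$, at most the single node $\epsilon$), I would check that each such node still has a strictly greater node in $X$: since $Y$ is non-empty, pick any $y \in Y$; as $\epsilon \leq y$ and $y \in X$ with arbitrarily high descendants in $X$, the node $\epsilon$ is not maximal either. Thus $X$ has no maximal element and $t_0[t_1/Y] \in L_{\mathit{no\text{-}max}}$.

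Having established both hypotheses, Proposition~\ref{prop:main-new} immediately yields that $L_{\mathit{no\text{-}max}}$ is not finitely ambiguous, completing the proof. The main obstacle I anticipate is purely definitional: pinning down exactly how the empty set and isolated maximal elements behave under the ``no maximal element'' predicate, so that $t_0 \notin L_{\mathit{no\text{-}max}}$ holds while the grafted trees all land inside the language. Choosing $t_0$ to have a single, clearly maximal labeled node sidesteps any ambiguity about the empty case and makes both membership checks routine; the regularity of $t_0$ and $t_1$ and the antichain-grafting bookkeeping are then entirely straightforward.
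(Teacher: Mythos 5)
Your proof is correct, but it takes a genuinely different route from the paper's. You apply Proposition~\ref{prop:main-new} directly, with $t_0 := t[\{\epsilon\}]$ and $t_1 := t[\{l,r\}^*]$: both are regular, $t_0 \notin L_{\mathit{no\text{-}max}}$ since $\epsilon$ is maximal in $\{\epsilon\}$, and for any non-empty antichain $Y$ the grafted tree's set is $\{\epsilon\}$ together with the full cones above the nodes of $Y$, which indeed has no maximal element; your pivot away from $t_0 = t[\emptyset]$ is essential, since under the paper's definition $t[\emptyset]$ lies \emph{inside} $L_{\mathit{no\text{-}max}}$ vacuously. The paper instead argues at the automaton level: given a PTA $\cA$ for $L_{\mathit{no\text{-}max}}$, it isolates a set of states $Q'$ whose language is sandwiched between $L_{\mathit{no\text{-}max}} \setminus \{t[\emptyset]\}$ and $T^\omega_\Sigma \setminus \{t[\emptyset]\}$, then gives a Moore-machine reduction (the ancestor-closure map $F$) from $T^\omega_\Sigma \setminus \{t[\emptyset]\}$ to $L(\cA_{Q'})$, and concludes via Corollary~\ref{corollary-not-finitely-ambiguous-languages}(\ref{corollary-singleton-complement}), Lemma~\ref{lemma-F-reduction} and Corollary~\ref{corollary-A_Q-ambiguity}. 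The detour through $Q'$ exists precisely because the naive reduction from $T^\omega_\Sigma \setminus \{t[\emptyset]\}$ to $L_{\mathit{no\text{-}max}}$ itself breaks at the single tree $t[\emptyset]$ --- the same subtlety you flagged --- and removing one tree from a language does not obviously preserve its ambiguity degree. Your argument sidesteps this entirely and is shorter; note that both proofs ultimately rest on the same engine, since Corollary~\ref{corollary-not-finitely-ambiguous-languages} is itself a consequence of Proposition~\ref{prop:main-new}, which you invoke directly.
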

\begin{proof}
  Let $\cA = (Q, \Sigma, Q_I, \delta, \mathbb{C})$ be a PTA that accepts $L_{\mathit{no-max}}$. Let $Q' := \{q \in Q \mid \exists q_i \in Q_I \exists q' \in Q: (q_i, 1, q, q') \in \delta$ and $t[\emptyset] \in L(\cA_{q'})\}$.
			
	\begin{claim} \label{claim-no-max-properties}
		Define $L_{\neg \emptyset} := T^\omega_\Sigma \setminus \{t[\emptyset]\}$. Then:
		\begin{enumerate}
                  \item $L_{\mathit{no-max}} \setminus \{t[\emptyset]\} \subseteq L(\cA_{Q'})$
			\item $L(\cA_{Q'}) \subseteq L_{\neg \emptyset}$
		\end{enumerate}
	\end{claim}
	\begin{claimproof}
          (1) Let $t' \in L_{\mathit{no-max}} \setminus \{t[\emptyset]\}$, and let $t_\epsilon := t[\{\epsilon\}]$ (that is, $t_\epsilon(\epsilon) := 1$, and $\forall u \neq \epsilon: t_\epsilon(u) := 0$).		
                Let $t'' := t_\epsilon [t'/l ]$. By the definition of $L_{\mathit{no-max}}$ we obtain $t'' \in L_{\mathit{no-max}}$. Therefore, there is a computation $\phi \in ACC(\cA, t'')$ such that $\phi(l) \in Q'$ and $t' \in L(\cA_{\phi(l)})$, as requested.		
		
                (2) Assume, for the sake of contradiction, that $t[\emptyset] \in L(\cA_{Q'})$. Then there is a transition $(q_i, 1, q_1, q_2) \in \delta$ from an initial state $q_i$ such that $t[\emptyset] \in L(\cA_{q_1})$ and $t[\emptyset] \in L(\cA_{q_2})$. Therefore, we conclude that $t_\epsilon := t[\{\epsilon\}]$ is accepted by $\cA$~-- a contradiction to the definition of~$L_{\mathit{no-max}}$.
	\end{claimproof}
	
	Let $\Sigma:=\{0,1\}$. Define a function $F: \Sigma^* \rightarrow \Sigma$ such that \[F(\sigma_1, \dots, \sigma_m) := \begin{cases}
		1 & \exists 1 \leq i \leq m: \sigma_i = 1\\
		0 & \text{otherwise}
	\end{cases}\]
	
	It is easy to see that $F$ is definable by a Moore machine. We show that $F$ reduces $L_{\neg \emptyset}$ to $L(\cA_{Q'})$.

  Notice that $\forall t' \in T^\omega_\Sigma: t' \in L_{\neg \emptyset} \rightarrow \widehat{F}(t') \in L_{\mathit{no-max}} \setminus \{t[\emptyset]\}$. Since $L_{\mathit{no-max}} \setminus \{t[\emptyset]\} \subseteq L(\cA_{Q'})$ (by Claim \ref{claim-no-max-properties}(1)) we conclude that $\forall t' \in T^\omega_\Sigma: t' \in L_{\neg \emptyset} \rightarrow \widehat{F}(t') \in L(\cA_{Q'})$. Conversely, $\forall t' \in T^\omega_\Sigma: \widehat{F}(t') \in L_{\neg \emptyset} \rightarrow t' \in L_{\neg \emptyset}$, and since $L(\cA_{Q'}) \subseteq L_{\neg \emptyset}$ (by Claim \ref{claim-no-max-properties}(2)) we obtain $\forall t' \in T^\omega_\Sigma: \widehat{F}(t') \in L(\cA_{Q'}) \rightarrow t' \in L_{\neg \emptyset}$.
	
	Therefore, by Lemma~\ref{lemma-F-reduction}, we conclude that $da(L(\cA_{Q'})) \geq da(L_{\neg \emptyset})$. Notice that $L_{\neg \emptyset} = T^\omega_\Sigma \setminus \{t[\emptyset]\}$ and by Corollary~\ref{corollary-not-finitely-ambiguous-languages}(\ref{corollary-singleton-complement}) we obtain $da(L_{\neg \emptyset}) \geq \aleph_0$. Hence, $\cA_{Q'}$ is not finitely ambiguous, and by Corollary~\ref{corollary-A_Q-ambiguity} we conclude that $da(\cA) \geq \aleph_0$.
\end{proof}

\begin{proof}[Proof of Proposition \ref{prop-examples-of-uncountably-ambiguous-lanauges}(2)]
	Let $L_{l^* \cap X = \emptyset} := \{t[X] \mid X \cap l^* = \emptyset\}$. It is easy to construct a deterministic PTA that accepts $L_{l^* \cap X = \emptyset}$, and therefore $da(L_{l^* \cap X = \emptyset}) = 1$.
	
	By Lemma~\ref{lemma-ambiguity-of-union-and-intersection} we conclude that
        $da(L_{\mathit{no-max}} \cap L_{l^* \cap X = \emptyset}) \leq da(L_{\mathit{no-max}}) \cdot da(L_{l^* \cap X = \emptyset}) = da(L_{\mathit{no-max}})$. We will show that $da(L_{\mathit{no-max}} \cap L_{l^* \cap X = \emptyset}) = 2^{\aleph_0}$, and the lemma will follow.	
	
        Notice that $t' \in L_{\mathit{no-max}} \cap L_{l^* \cap X = \emptyset}$ iff the following hold:
	\begin{itemize}
		\item $\forall u \in l^*: t(u) = 0$
                \item $\forall u \in l^* \cdot r: t'_{\geq u} \in L_{\mathit{no-max}}$
	\end{itemize}
	
        It is easy to see that $L_{\mathit{no-max}} \cap L_{l^* \cap X = \emptyset} = \langUncountable$ for $L^{\neg ba} := L_{\mathit{no-max}}$ (which is not boundedly ambiguous, by Lemma~\ref{lemma-no-max-not-finitely-amb}) and $L_0 := \emptyset$. Therefore, by Proposition \ref{lemma-uncountably-ambiguous-language} we conclude that $da(L_{\mathit{no-max}} \cap L_{l^* \cap X = \emptyset}) = 2^{\aleph_0}$, as requested.
\end{proof}

\begin{proof}[Proof of Proposition \ref{prop-examples-of-uncountably-ambiguous-lanauges}(3)]
  Let $L_{\mathit{contains-}l^*} := \{t[X] \mid l^* \subseteq X \}$. It is easy to see that $L_{\mathit{contains-}l^*}$ can be accepted by a deterministic PTA, and therefore $da(L_{\mathit{contains-}l^*}) = 1$.
        Look at the language $L_{\perf} \cap L_{\mathit{contains-}l^*}$. By Lemma~\ref{lemma-ambiguity-of-union-and-intersection} we obtain $da(L_{\perf} \cap L_{\mathit{contains-}l^*}) \leq da(L_{\perf}) \cdot da(L_{\mathit{contains-}l^*}) = da(L_{\perf})$. We will show that $L_{\perf} \cap L_{\mathit{contains-}l^*}$ is not countably ambiguous. By the above inequality, this implies that $da(L_{\perf}) = 2^{\aleph_0}$.

	\begin{claim} \label{claim-L-perf-not-finitely-amb}
		$L_{\perf}$ is not finitely ambiguous.
	\end{claim}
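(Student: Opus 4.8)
The plan is to apply Proposition~\ref{prop:main-new} directly, with a particularly simple choice of the two regular witness trees. I would take $t_0 := t[\emptyset]$, the characteristic tree in which every node is labeled $0$, and $t_1 := t[\directions^*]$, the tree in which every node is labeled $1$. Both trees are regular (each has, up to isomorphism, a single subtree), and $L_{\perf}$ is a regular language, so the regularity hypotheses of Proposition~\ref{prop:main-new} are met.

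First I would check that $t_0 \notin L_{\perf}$: the set encoded by $t_0$ is $\emptyset$, which by definition is not perfect, so $t_0 \notin L_{\perf}$.

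The key verification is that $t_0[t_1/Y] \in L_{\perf}$ for every non-empty antichain $Y$. By the definition of grafting, the tree $t_0[t_1/Y]$ labels a node $u$ by $1$ precisely when $u \geq y$ for some $y \in Y$, and by $0$ otherwise; hence the set it encodes is the upward closure $X := \{u \mid \exists y \in Y:\ y \leq u\}$. I would then argue that $X$ is perfect: it is non-empty since $Y$ is, and for any $u \in X$ (with witness $y \leq u$), the two children $u \cdot l$ and $u \cdot r$ are incomparable, strictly greater than $u$, and both lie in $X$ (since $y \leq u < u \cdot l$ and $y \leq u < u \cdot r$). Thus every node of $X$ has two incomparable strictly-greater nodes in $X$, so $X$ is perfect and $t_0[t_1/Y] \in L_{\perf}$.

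With both hypotheses established, Proposition~\ref{prop:main-new} immediately yields that $L_{\perf}$ is not finitely ambiguous. There is essentially no obstacle beyond the (routine) perfectness check; the only point requiring a moment's care is confirming that the set encoded by $t_0[t_1/Y]$ is exactly the upward closure of $Y$, which follows directly from the grafting definition.
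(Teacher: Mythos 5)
Your proposal is correct, but it takes a different route from the paper. The paper proves this claim by a Moore-machine reduction: it defines $F(\sigma_1,\dots,\sigma_m) = 1$ iff some $\sigma_i = 1$ (so that $\widehat{F}(t')$ encodes the upward closure of the set encoded by $t'$), observes that $t' \in T^\omega_\Sigma \setminus \{t[\emptyset]\}$ iff $\widehat{F}(t') \in L_{\perf}$, and then invokes Lemma~\ref{lemma-F-reduction} together with Corollary~\ref{corollary-not-finitely-ambiguous-languages}(\ref{corollary-singleton-complement}) (the complement of a singleton regular language is not finitely ambiguous). You instead apply Proposition~\ref{prop:main-new} directly, with $t_0 = t[\emptyset]$ and $t_1 = t[\directions^*]$, verifying that $t_0[t_1/Y]$ encodes the upward closure of the antichain $Y$ and that such upward closures are perfect. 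The two arguments rest on the same underlying observation~-- upward closures of non-empty sets are perfect~-- but package it differently: your version is shorter and self-contained, needing only the hypotheses of Proposition~\ref{prop:main-new} (note that the corollary the paper cites is itself derived from that proposition, so you cut out a layer of indirection); the paper's version, by going through the reduction lemma, yields the slightly stronger quantitative statement $da(L_{\perf}) \geq da(T^\omega_\Sigma \setminus \{t[\emptyset]\})$, which compares ambiguity degrees and is the pattern reused elsewhere in Section~\ref{sect:uncount} (e.g., in Lemma~\ref{lemma-no-max-not-finitely-amb}). All the checks in your argument are valid: $t[\emptyset]$ and $t[\directions^*]$ are regular, the empty set fails perfectness by definition, and since $Y$ is a non-empty antichain every node of its upward closure has its two children as incomparable strictly greater witnesses.
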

	\begin{claimproof}
		Define a function $F: \Sigma^* \rightarrow \Sigma$ such that $F(\sigma_1, \dots, \sigma_m) := \begin{cases}
		1 & \exists 1 \leq i \leq m: \sigma_i = 1\\
0 & \text{otherwise.}
		\end{cases}$. It is easy to see that $F$ is definable by a Moore machine, and that $\forall t' \in T^\omega_\Sigma: t' \in T^\omega_\Sigma \setminus \{t[\emptyset]\} \leftrightarrow \widehat{F}(t') \in L_{\perf}$. Notice that $T^\omega_\Sigma \setminus \{t[\emptyset]\}$ is not finitely ambiguous (by Corollary~\ref{corollary-not-finitely-ambiguous-languages}(\ref{corollary-singleton-complement})), and therefore by Lemma~\ref{lemma-F-reduction} we conclude that $L_{\perf}$ is not finitely ambiguous.
	\end{claimproof}

	\begin{claim}
          $t' \in L_{\perf} \cap L_{\mathit{contains-}l^*}$ iff the following conditions hold:
		\begin{enumerate}
			\item $\forall u \in l^*: t'(u) = 1$
			\item There is an infinite set $I \subseteq \nat$ such that $\forall i \in I: t'_{\geq l^i \cdot r} \in L_{\perf}$ and $\forall i \not \in I: t'_{\geq l^i \cdot r} \in \{t[\emptyset]\}$.
		\end{enumerate}	
	\end{claim}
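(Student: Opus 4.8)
The plan is to prove this purely combinatorial characterization by the same two-directional argument used for Claim~\ref{claim-downward-closed-language-scheme}, so that afterwards we may invoke Proposition~\ref{lemma-uncountably-ambiguous-language} with $L_0 = \{t[\emptyset]\}$ and $L^{\neg ba} = L_{\perf}$ (which is not boundedly ambiguous by Claim~\ref{claim-L-perf-not-finitely-amb}) to obtain $L_{\perf} \cap L_{\mathit{contains-}l^*} = \langUncountable$. Write $X := \{u \in \{l,r\}^* \mid t'(u) = 1\}$, so that $t' = t[X]$, and for each $i \in \nat$ put $X_i := \{w \in \{l,r\}^* \mid l^i r w \in X\}$, so that $t'_{\geq l^i \cdot r} = t[X_i]$. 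The whole argument rests on the observation that $X \setminus l^*$ is the disjoint union, over $i \in \nat$, of the nodes $l^i r w$ with $w \in X_i$; in other words, every node of $X$ either lies on the spine $l^*$ or branches off it exactly once, after which it lives inside a unique $X_i$.

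For the $\Rightarrow$ direction, assume $t' \in L_{\perf} \cap L_{\mathit{contains-}l^*}$. Condition (1) is immediate from $l^* \subseteq X$. For condition (2) I would first show that each non-empty $X_i$ is perfect: given $w \in X_i$ we have $l^i r w \in X$, and perfectness of $X$ yields incomparable $v_1, v_2 \in X$ above $l^i r w$ --- necessarily \emph{strictly} above, since incomparable nodes cannot both coincide with $l^i r w$ --- hence $v_1 = l^i r w z_1$ and $v_2 = l^i r w z_2$ with $w z_1, w z_2 \in X_i$ incomparable and strictly above $w$. Thus $I := \{i \mid X_i \neq \emptyset\}$ satisfies the dichotomy of (2), and it remains to prove $I$ infinite. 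I would argue this by contradiction: if $I \subseteq \{0,\dots,N\}$, consider the spine node $l^{N+1} \in X$; every element of $X$ strictly below $l^{N+1}$ is either a further spine node $l^{N+1+k}$ or a node $l^{N+1+k} r w$ with $w \in X_{N+1+k} = \emptyset$, the latter being impossible, so all such elements lie on the linearly ordered spine and are pairwise comparable --- contradicting perfectness at $l^{N+1}$.

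For the $\Leftarrow$ direction, assume (1) and (2). Then $l^* \subseteq X$ gives $t' \in L_{\mathit{contains-}l^*}$ and also $X \neq \emptyset$ (since $\epsilon \in X$). To show $X$ is perfect I would take an arbitrary $u \in X$ and split on whether $u \in l^*$. If $u = l^i$, the infinitude of $I$ supplies two \emph{distinct} indices $j_1, j_2 \in I$ with $j_1, j_2 \geq i$; picking any $w_k \in X_{j_k}$ gives $l^{j_1} r w_1, l^{j_2} r w_2 \in X$, which are incomparable (they diverge at the first spine position at which the indices differ) and both strictly above $u$. If $u \notin l^*$, write $u = l^i r w$ for the unique $i$ at which $u$ leaves the spine; then $w \in X_i$, so $X_i \neq \emptyset$, whence $i \in I$ and $X_i$ is perfect, and the witnesses $w z_1, w z_2$ for $w$ inside $X_i$ lift to incomparable nodes $l^i r w z_1, l^i r w z_2 \in X$ strictly above $u$.

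I expect the only genuine obstacle to be the infinitude of $I$ in the forward direction; the key idea there is that the spine $l^*$ is \emph{linearly ordered}, so a spine node all of whose strict $X$-descendants lie again on the spine cannot possess two incomparable successors in $X$, which is exactly what perfectness demands. Everything else is the routine lifting and lowering of incomparable witnesses between $X$ and the $X_i$, already exemplified by Claim~\ref{claim-downward-closed-language-scheme}.
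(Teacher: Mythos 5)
Your proof is correct and follows essentially the same route as the paper's: condition (1) from $L_{\mathit{contains\text{-}}l^*}$, the dichotomy that each subtree $t'_{\geq l^i r}$ is perfect or empty, infinitude of $I$ by contradiction via the linearly ordered spine, and the backward direction by the same case split on whether $u$ lies on the spine. The only (harmless) variations are that you spell out the lifting argument behind the paper's ``Notice that'' dichotomy, and for spine nodes in the $\Leftarrow$ direction you draw incomparable witnesses from two distinct perfect subtrees rather than from within a single one as the paper does.
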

	\begin{claimproof}
          $\Rightarrow$: Let $t' \in L_{\perf} \cap L_{\mathit{contains-}l^*}$. By definition of $L_{\mathit{contains-}l^*}$ it is clear that condition (1) holds for $t'$. Notice that $\forall i \in \nat: t'_{\geq l^i \cdot r} \in L_{\perf}$ or $t'_{\geq l^i \cdot r} = t[\emptyset]$. Assume, for the sake of contradiction, that $\{i \in \nat \mid t'_{\geq l^i \cdot r} \in L_{\perf}\}$ is finite. Therefore, there is $k \in \nat$ such that $\forall i \geq k: t'_{\geq l^i \cdot r} = t[\emptyset]$. Let $u := l^k$, and notice that $t'(u) = 1$, and $\forall v > u: t'(v) = 1 \leftrightarrow v \in l^*$. Hence, each pair of $1$-labeled nodes that are greater than $u$ are comparable~-- a contradiction to the definition of $L_{\perf}$.
		
                $\Leftarrow$: Let $t'$ such that the conditions hold. By the first condition it is clear that $t' \in L_{\mathit{contains-}l^*}$. We will prove that $t' \in L_{\perf}$, and the claim will follow. First, notice that $t'(\epsilon) = 1$, and therefore $t' \neq t[\emptyset]$. Let $u$ be a node such that $t'(u) = 1$. If $u \in l^*$ then by the second condition, there is a node $v \in l^* \cdot r$ such that $v > u$ and $t_{\geq v} \in L_{\perf}$. Therefore, there are two nodes $w_1, w_2 > v > u$ such that $w_1 \perp w_2$ and $t'(w_1) = t'(w_2) = 1$. Otherwise ($u \notin l^*$), there is a node $v \in l^* \cdot r$, such that $u > v$ and $t_{\geq v} \in L_{\perf}$, and by definition of $L_{\perf}$ we conclude that there are two nodes $w_1, w_2 > u$ such that $w_1 \perp w_2$ and $t'(w_1) = t'(w_2) = 1$~-- hence, $t' \in L_{\perf}$.		
	\end{claimproof}

        It is easy to see that $L_{\perf} \cap L_{\mathit{contains-}l^*} = \langUncountable$ for $L^{\neg ba} := L_{\perf}$ (which is not boundedly ambiguous, by Claim \ref{claim-L-perf-not-finitely-amb}) and $L_0 := \{t[\emptyset]\}$. Therefore, by Proposition \ref{lemma-uncountably-ambiguous-language} we conclude that $L_{\perf} \cap L_{\mathit{contains-}l^*} = 2^{\aleph_0}$, as requested.
\end{proof}
Observe that our proof shows that $L_{\perf\wedge \min} := \{t[X] \mid X$ is perfect and has the $\leq$-minimal element$\}$ is also uncountably ambiguous.
We conclude with an instructive  example of an unambiguous language which is similar to $L_{\perf\wedge \min}$.
Let $X \subseteq \{l,r\}^*$ be a set of nodes. We say that $u \in X$  is a  {$X$-successor} of $v$ if $u > v$ and there is no node $w \in X$ such that $v < w < u$.
	We call $X$ a  {full-binary subset-tree} if $X$ has a minimal node, and each node in $X$   has two $X$-successors.
	
Note that if $X$ is a full-binary subset tree then $X$ is perfect and has the $\leq$-minimal element.
However the language $L_{\mathit{binary}}:=\{t[X]\mid 	 X$ is a full-binary subset tree$\}$ is unambiguous.

\section{Countable Languages are Unambiguous} \label{sect:countable-langauges-ambiguity}
%
In this section we prove the following Proposition:
\begin{prop} \label{prop-countable-tree-languages-unambiguous}
	Each regular countable tree language is unambiguous
\end{prop}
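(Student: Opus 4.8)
The plan is to invoke Niwiński's representation of countable regular tree languages \cite{Damian} and to argue by induction on the ordinal rank it assigns to $L$. The unifying device throughout is the following unambiguity template: an automaton that first \emph{guesses} a finite amount of ``routing'' information at (or near) the root and then verifies the guess either deterministically or by an already-unambiguous subautomaton is itself unambiguous, \emph{provided} that on every input at most one guess is ever verifiable. I would first record the base case in this form. By the representation, the rank-$0$ pieces of a countable regular language are finite sets $\{t_1,\dots,t_k\}$ of distinct regular trees; here the automaton guesses an index $i$ at the root and then checks equality to $t_i$ deterministically (tracking the Moore machine of $t_i$ and declaring a branch rejecting, via an odd-colored sink, as soon as the input disagrees with $t_i$). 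Because the $t_i$ are pairwise distinct, on any input at most one index is verifiable, so there is at most one accepting computation and the language is unambiguous.

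For the inductive step I would use the structure that countability forces. By Niwiński's representation, a countable regular $L$ is built from finitely many lower-rank countable languages by a guarded substitution scheme whose distinguished \emph{branching} nodes---the nodes at which the tree may genuinely choose among several continuations---form a thin, regular subtree $S$ that contains no perfect set; off $S$ the subtrees are fixed regular trees from a finite palette, and at each node of $S$ the tree either continues with one such fixed regular tree or enters one of finitely many pairwise-disjoint lower-rank sublanguages. The absence of a perfect subset of $S$ is exactly what countability buys us (a perfect $S$ would already force $2^{\aleph_0}$ trees, as in Section~\ref{sect:uncount}), and it guarantees that the decomposition of each $t\in L$ into ``skeleton $S$ plus grafted pieces'' is \emph{unique} and that $S$ is deterministically trackable. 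I would then assemble the automaton for $L$ in the template above: it guesses the skeleton together with, at each node of $S$, the finite datum recording which fixed subtree or which sublanguage occurs there, verifies the skeleton and the fixed subtrees deterministically, and hands each sublanguage obligation to its inductively-obtained unambiguous subautomaton (plugged in via the grafting Lemma~\ref{lem:triv-composition}). Uniqueness of the decomposition, pairwise disjointness of the sublanguages, and unambiguity of each subautomaton together ensure that at most one global guess is verifiable and that it has at most one verifying computation; hence $da(L)=1$.

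The main obstacle is precisely the gap between ambiguity bounds and genuine unambiguity. Lemma~\ref{lemma-ambiguity-of-union-and-intersection} and Corollary~\ref{corollary-ambiguity-closure} only give \emph{bounded} ambiguity for unions, so the proposition cannot follow from naive closure arguments---even two deterministic automata, unioned carelessly, yield a $2$-ambiguous automaton. All the force must therefore come from countability, which has to be shown to make the decomposition of each tree unique, so that the many guesses the automaton appears to make collapse to a single verifiable one on each input. Pinning down the ``thin, non-perfect, deterministically trackable'' skeleton and the uniqueness of the decomposition through Niwiński's representation, and checking that the disjointness of the sublanguages is itself effectively decidable so that no accepting run is silently duplicated, is where the real work lies; the matching lower bound of Corollary~\ref{cor-countable-complement-not-finitely-ambiguous} indicates why this delicate uniqueness is indispensable.
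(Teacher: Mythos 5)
Your high-level instinct is right: Niwi\'nski's representation plus a uniqueness-of-decomposition argument plus an automaton that couples an unambiguous finite-tree part with deterministic verification of fixed regular trees is exactly the skeleton of the paper's proof. But your proposal has two genuine gaps. First, you attribute to Niwi\'nski's theorem an inductive, rank-stratified structure that it does not provide. The theorem (as stated in the paper, Theorem \ref{theorem-countable-language-representation}) gives a single \emph{flat} decomposition: a finite set of regular trees $t_1,\dots,t_n$ and a regular language $M$ of \emph{finite} trees with $L = M[t_1/x_1,\dots,t_n/x_n]$, where additionally every branch of every $t\in L$ eventually enters some $t_i$. There are no ``lower-rank countable sublanguages,'' no ordinal rank to induct on, and no ``thin non-perfect skeleton'' to isolate; the recursion you envision would itself be a nontrivial structural theorem that you would have to prove. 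Because the decomposition is flat, no induction is needed at all: the paper builds one automaton directly, consisting of an unambiguous FTA for $M$ (every regular finite-tree language is unambiguous) glued at its $x_i$-leaves to deterministic automata for the $t_i$ obtained from their Moore machines.

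Second, and more seriously, the uniqueness of the decomposition---which you correctly identify as carrying all the force---does not follow from countability by itself, and it is not automatic for the $M$ the theorem hands you: distinct $\tau,\tau'\in M$ may well represent the same infinite tree. The paper closes this gap with a concrete canonicalization (Lemma \ref{lemma-unique-M}): for each $t\in L$ define $g(t)$ by relabeling every node $v$ with $t_{\geq v}=t_i$ by $x_i$ and deleting all descendants of such nodes; $g(t)$ is finite by K\"onig's lemma together with condition (1) of the theorem, the replacement language $M:=\{g(t)\mid t\in L\}$ is MSO-definable (hence regular), and by construction each $t\in L$ has exactly one preimage in $M$. Your proposal gestures at uniqueness via a perfect-set argument but never supplies such a construction, and you concede this is ``where the real work lies.'' Without it, the final step of your argument---that at most one global guess is verifiable---is unsupported, since the automaton's guesses correspond precisely to elements of $M$ representing the input tree, and nothing yet prevents there being several.
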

This section is self-contained and lacks  technical connections to the previous sections.
It is instructive to compare the above Proposition with    Corollary~\ref{corollary-not-finitely-ambiguous-languages}(1)
which states that the complement of every countable tree language is not finitely ambiguous.

To prove Proposition \ref{prop-countable-tree-languages-unambiguous} we first recall finite tree automata (Subsection~\ref{subsection-finite-tree-automata}). Then, we present Niwi\'nski's Representation for Countable Languages (Subsection~\ref{subsection-countable-languages-representation}). Finally, the proof of Proposition \ref{prop-countable-tree-languages-unambiguous} is given (Subsection~\ref{subsection-proof-for-countable-languages-ambiguity}).

\subsection{Finite Trees and Finite Tree Automata} \label{subsection-finite-tree-automata}
~

\paragraph{Finite Trees}
A finite tree is a finite set $U \subseteq \directions^*$ that is closed under prefix relation. $U$ is called a finite \textbf{binary} tree if $\forall u \in U: u \cdot l \in U \leftrightarrow u \cdot r \in U$.

\paragraph{Finite $\Sigma$-labeled Binary Trees}
Let $\Sigma$ be partitioned into two sets: $\Sigma_2$~-- labels of internal nodes, and $\Sigma_0$~-- labels of leaves. A finite $\Sigma$-labeled binary tree is a function $t_U : U \rightarrow \Sigma$, where $U \subseteq \{l,r\}^*$ is a finite binary tree, $t_U(v)\in \Sigma_0$ if $v$ is a leaf, and $t_U(v)\in \Sigma_2$ if $v$ has children.

When it is clear from the context, we will use ``finite tree'' or ``labeled finite tree'' for ``$\Sigma$-labeled finite binary tree''.

\paragraph{Finite Tree Automata (FTA)}
An automaton over $\Sigma$-labeled finite trees is a tuple $\cB = (Q, \Sigma, Q_I, \delta)$, where $Q$ is a finite set of states, $\Sigma = \Sigma_0 \cup \Sigma_2$ is an alphabet, $Q_I$ is a set of initial states, and $\delta \subseteq (Q \times \Sigma_0) \cup (Q \times \Sigma_2 \times Q \times Q)$ is a  set of transitions.

An accepting computation of $\cB$ on a finite tree $t_U$ is a function $\phi: U \rightarrow Q$, such that $\phi(\epsilon) \in Q_I$, and for each node $u \in U$, if $u$ is not a leaf then $(\phi(u), t_U(u), \phi(u \cdot l), \phi(u \cdot r)) \in \delta$, and otherwise $(\phi(u), t_U(u)) \in \delta$.

The language of a FTA $\cB$ is the set of finite trees $t$ such that $\cB$ has an accepting computation on $t$.
A finite tree language is regular iff it is accepted by a FTA. It is well-known that every regular finite tree language is unambiguous (i.e., for every finite tree language there is an unambiguous automaton that accepts it).

\subsection{Niwi\'nski's Representation for Countable Languages} \label{subsection-countable-languages-representation}

\begin{defi}
	Define $\fintree$ as the set of finite trees over alphabet $\Sigma \cup \{x_1, \dots, x_n\}$ where the internal nodes are $\Sigma$-labeled, and the leaves are $\{x_1, \dots, x_n\}$-labeled.
	
	Let $\tau \in \fintree$ be a finite tree, and let $t_1, \dots, t_n \in T^\omega_\Sigma$ be infinite binary trees over alphabet $\Sigma$. We define $\tau[t_1/x_1,\dots ,t_n/x_n]$ as the infinite tree that is obtained from $\tau$ by grafting $t_i$ on leaves labeled by $x_i$.
	
	For a set $M \subseteq \fintree$, we define $M[t_1/x_1,\dots ,t_n/x_n] := \underset{\tau \in M}{\bigcup} \tau[t_1/x_1,\dots ,t_n/x_n]$.
\end{defi}

\begin{thm}[D. Niwi\'nski \cite{Damian}] \label{theorem-countable-language-representation}
	Let $L$ be a countable regular tree language over alphabet $\Sigma$. Then there is a finite set of trees $\{t_1, \dots, t_n\}$ such that the following hold:
	\begin{enumerate}
		\item For each tree $t \in L$ and a tree branch $\pi$, there is a node $v \in \pi$ and a number $1 \leq i \leq n$ such that $t_{\geq v} = t_i$.
		
		\item There is a regular finite tree language $M \subseteq \fintree$ s.t. $L = M[t_1/x_1,\dots ,t_n/x_n]$.
	\end{enumerate}
\end{thm}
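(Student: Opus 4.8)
The plan is to fix a parity tree automaton $\cA=(Q,\Sigma,Q_I,\delta,\mathbb{C})$ with only useful states and $L(\cA)=L$, to read off the finite set $\{t_1,\dots,t_n\}$ from the ``rigid'' states of $\cA$, and to extract $M$ from the finite cores of the trees in $L$. First I would record that every subtree language is countable: if $q$ is useful there are $t\in L$, $\phi\in ACC(\cA,t)$ and $v$ with $\phi(v)=q$, and by the Grafting Lemma (Lemma~\ref{lem:triv-composition}) the map $s\mapsto t[s/v]$ embeds $L(\cA_q)$ into $L$, so $|L(\cA_q)|\le|L|\le\aleph_0$. Call a state $q$ \emph{rigid} if $L(\cA_q)$ is a singleton $\{s_q\}$; by Rabin's basis theorem each such $s_q$ is a regular tree, and there are at most $|Q|$ of them. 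These $s_q$ are the candidates $t_1,\dots,t_n$.

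The engine of the proof is an uncountability criterion. I would call a useful state $q$ \emph{duplicating} if it admits a finite partial run of $\cA_q$ that is rooted at $q$, respects $\delta$, and has two incomparable frontier nodes again in state $q$, the path to one of which carries an even maximal colour. If such a $q$ exists then $|L|=2^{\aleph_0}$: iterating the partial run at the two $q$-nodes produces a perfect set $X$ of $q$-positions, at each node of which one may either keep duplicating or graft a fixed accepting tree of $\cA_q$; the even-colour loop guarantees that the perpetually duplicating branches are accepting, and distinct choice functions on $X$ yield distinct members of $L$. This is exactly the mechanism behind the uncountably ambiguous language $L_{\perf}$ of Proposition~\ref{prop-examples-of-uncountably-ambiguous-lanauges}. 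Since $L$ is countable, no useful state of $\cA$ is duplicating.

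Next I would turn the absence of duplicating states into condition~(1). Fix $t\in L$ and choose an accepting run $\phi$; along any branch $\pi$ the parity condition yields an even-colour state that recurs on $\pi$ through even-colour loops. Non-duplication then forbids such a recurring good-loop state from also reproducing at two incomparable nodes, so the run cannot branch cofinally: for every state $q$ the $q$-nodes lying below a common $q$-node form a chain. Using this thinness together with the finiteness of $Q$ and K\"onig's Lemma, I would argue that along every branch $\phi$ freezes after finitely many steps, i.e.\ reaches a node $v$ from which the run is forced and $t_{\geq v}$ equals one of the rigid trees $s_q$. Taking $\{t_1,\dots,t_n\}$ to be the finite set of these $s_q$ gives condition~(1): every branch of every $t\in L$ carries a node $v$ with $t_{\geq v}=t_i$.

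Condition~(2) and regularity of $M$ are then routine. Given (1), for each $t\in L$ let $Y_t$ be the antichain of $\le$-minimal nodes $v$ with $t_{\geq v}\in\{t_1,\dots,t_n\}$; since every branch meets $Y_t$, the set of nodes not lying strictly below $Y_t$ is finitely branching with no infinite branch, hence finite by K\"onig's Lemma. Labelling each $v\in Y_t$ by $x_i$ when $t_{\geq v}=t_i$ yields a finite tree $\tau_t\in\fintree$, and setting $M:=\{\tau_t\mid t\in L\}$ gives $L=M[t_1/x_1,\dots,t_n/x_n]$ by construction. To see that $M$ is regular I would build a finite tree automaton that simulates $\cA$ on the core and, at an $x_i$-leaf reached in state $q$, accepts precisely when $t_i\in L(\cA_q)$ (a fixed finite table, each $t_i$ being regular); correctness follows from grafting. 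The main obstacle is the third paragraph: converting ``no duplicating state'' into the \emph{finiteness} of the set of frozen regular subtrees and the claim that every branch freezes. Controlling the parity colours inside the iterated context, so that the perpetually branching paths of the engine are genuinely accepting, and ensuring that nested, branch-local variability cannot escape the finitely many rigid trees, are the delicate points, and are where a careful choice of $\cA$ and a Cantor--Bendixson-style induction on subtree languages enter.
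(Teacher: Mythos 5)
You should know at the outset that the paper does not prove this statement: Theorem \ref{theorem-countable-language-representation} is imported from Niwi\'nski \cite{Damian} (only the uniqueness refinement, Lemma \ref{lemma-unique-M}, is proved in the paper), so your attempt has to stand on its own — and its engine is false as stated. Your notion of a \emph{duplicating} state is satisfied by rigid states: for $L=\{t_c\}$, accepted by the one-state deterministic PTA with single state $q$, colour $0$ and transition $(q,c,q,q)$, the depth-one partial run has two incomparable frontier occurrences of $q$ with even maximal colour on both paths, so $q$ is useful and duplicating while $L$ is a singleton. Hence ``since $L$ is countable, no useful state of $\cA$ is duplicating'' is wrong, and the chain claim of your third paragraph (``for every state $q$ the $q$-nodes lying below a common $q$-node form a chain'') fails for that same automaton. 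The criterion must be coupled with non-rigidity, $|L(\cA_q)|\ge 2$: in the singleton example every ``choice'' at a $q$-position grafts the same tree, so your distinct choice functions do not produce distinct members of $L$. Separately, even under non-rigidity the perfect-set construction is unsound as described: only the path to one of the two $q$-frontier nodes is assumed to have even maximal colour, so a perpetually duplicating branch that passes through the other $q$-node infinitely often may be rejecting, and is not ``guaranteed accepting'' by the even loop. The repairable version iterates the context along the even path as a spine and varies the off-spine $q$-occurrences between two fixed distinct trees $s\neq s'\in L(\cA_q)$, which does yield $2^{\aleph_0}$ trees in $L(\cA_q)$ and, via your grafting embedding, in $L$.

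Even with the engine repaired, the heart of the theorem — your third paragraph — remains unproven, as you yourself concede. From ``no non-rigid state admits an even-path duplication'' it does not follow by the stated reasoning that every branch of every $t\in L$ reaches a node whose subtree is one of finitely many rigid trees: the good-loop state recurring on a branch may itself be non-rigid. For instance, for the countable regular language of trees labelled $c$ everywhere except possibly one $a$ on the leftmost branch, the spine state $q$ has a good loop and $L(\cA_q)$ is infinite, yet condition (1) holds because the \emph{subtrees}, not the run's states, stabilise to $t_c$; bridging from states to subtrees, and showing that finitely many rigid trees suffice uniformly over $L$, is exactly the content of Niwi\'nski's argument, and your closing appeal to ``a careful choice of $\cA$ and a Cantor--Bendixson-style induction'' is a placeholder for it, not a proof. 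On the positive side, your first and fourth paragraphs are essentially correct: the countability of each $L(\cA_q)$ via the grafting embedding, the regularity of the rigid trees via Rabin's basis theorem, and — granted condition (1) — condition (2) follows cleanly by taking $M:=\{\tau\in\fintree\mid \tau[t_1/x_1,\dots,t_n/x_n]\in L\}$, which is precisely what your finite tree automaton with the finite table ``$t_i\in L(\cA_q)$'' recognises; no minimal cores or uniqueness are needed there, the latter being handled separately by the paper in Lemma \ref{lemma-unique-M}.
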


The following lemma strengthen item (2) of Theorem \ref{theorem-countable-language-representation} by adding another condition on $M$, implying a unique representation of each tree in $L$:

\begin{lem} \label{lemma-unique-M}
	Let $L$ be a countable regular tree language over alphabet $\Sigma$, and let $\{t_1, \dots, t_n\}$ be a finite set of trees as in Theorem \ref{theorem-countable-language-representation}. Then there is a regular finite trees language $M \subseteq \fintree$ such that $L = M[t_1/x_1,\dots ,t_n/x_n]$, and for each $t \in L$ there is a \textbf{unique} finite tree $\tau \in M$ such that $t = \tau[t_1/x_1,\dots ,t_n/x_n]$.
\end{lem}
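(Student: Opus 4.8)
The plan is to replace the set $M$ from Theorem~\ref{theorem-countable-language-representation} by the set of \emph{canonical} finite-tree representations, cut at the highest possible places, and then to prove that this canonical set is still regular. For $t\in L$, call a node $v$ a \emph{match} if $t_{\geq v}=t_i$ for some $i$; since the $t_i$ are pairwise distinct (being the elements of a finite set) this $i$ is unique. By condition~(1) of Theorem~\ref{theorem-countable-language-representation} every branch of $t$ contains a match, so the set $W(t)$ of $\leq$-minimal matches is an antichain meeting every branch, and by K\"onig's lemma the nodes lying strictly above $W(t)$ form a finite tree. I would let $\tau_t\in\fintree$ be this finite tree, labelling each internal node $u$ by $t(u)\in\Sigma$ and each leaf $v\in W(t)$ by $x_i$ for the unique $i$ with $t_{\geq v}=t_i$; then $\tau_t[t_1/x_1,\dots,t_n/x_n]=t$. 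Setting $M':=\{\tau_t\mid t\in L\}$ makes $L=M'[t_1/x_1,\dots,t_n/x_n]$ immediate.

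Uniqueness is then essentially built in. I claim $M'$ coincides with the set of $\tau\in\fintree$ such that (i)~$\tau[t_1/x_1,\dots,t_n/x_n]\in L$, and (ii)~for every internal node $u$ of $\tau$ and every $j$ one has $\tau_{\geq u}[t_1/x_1,\dots,t_n/x_n]\neq t_j$. Indeed, every leaf of such a $\tau$ is automatically a match of $t:=\tau[t_1/x_1,\dots,t_n/x_n]$, whereas (ii) forbids any internal node from being a match; since the leaves of a finite tree meet every branch, they are exactly the minimal matches $W(t)$, so $\tau=\tau_t$. This shows both that the displayed set equals $M'$ and that $\tau_t$ is the only element of $M'$ mapping to $t$, which is the required uniqueness.

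It remains to prove that this description of $M'$ is recognised by an FTA, which is the heart of the argument. The first condition defines $M_0:=\{\tau\mid\tau[t_1/x_1,\dots,t_n/x_n]\in L\}$; starting from a PTA $\cA$ accepting $L$, I build an FTA that runs $\cA$ on the finite part of $\tau$ and accepts a leaf labelled $x_i$ reached in state $q$ exactly when $t_i\in L(\cA_q)$. This is legitimate because $\tau$ is finite, so every infinite branch of $\tau[t_1/x_1,\dots,t_n/x_n]$ descends into a grafted $t_i$, and the parity condition on that branch depends only on $\cA_q$ and the fixed tree $t_i$. For the second condition, fix $j$ and observe that $N_j:=\{\tau'\mid\tau'[t_1/x_1,\dots,t_n/x_n]=t_j\}$ is regular: since $t_j$ is regular it has finitely many distinct subtrees, and a top-down FTA whose states are these subtrees can check, descending through $\tau'$, that the $\Sigma$-labels agree with $t_j$ and that a leaf $x_i$ is reached precisely where the tracked subtree of $t_j$ equals $t_i$. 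Finally, the set of $\tau$ having some internal node whose subtree lies in $\bigcup_j N_j$ is regular (finite-tree languages are closed under the ``occurs as a subtree'' operation, and whether a node is internal is visible from its label, internal nodes carrying labels in $\Sigma$ and leaves labels in $\{x_1,\dots,x_n\}$); its complement is therefore regular, and $M'$, being the intersection of this complement with $M_0$, is regular.

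The main obstacle is this regularity argument, in particular recognising $N_j$ and then the reducedness condition~(ii), together with the care needed to verify that cutting at the minimal matches really produces a \emph{finite} tree (this is exactly where condition~(1) of Theorem~\ref{theorem-countable-language-representation} and K\"onig's lemma enter) and that distinctness of the $t_i$ makes the leaf labels, and hence $\tau_t$, well defined.
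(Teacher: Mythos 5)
Your proposal is correct and follows essentially the same route as the paper: your canonical tree $\tau_t$, obtained by cutting $t$ at its $\leq$-minimal match nodes, is exactly the paper's map $g(t)$, its finiteness is proved by the same combination of condition~(1) of Theorem~\ref{theorem-countable-language-representation} and K\"onig's lemma, and your characterization (i)--(ii) of $M'$ is precisely the paper's characterization of the image of $g$, which is what yields uniqueness. The only divergence is in the last step: where the paper simply remarks that both conditions are MSO-expressible and hence $M$ is regular, you implement that step with explicit finite tree automata (a product with the PTA for $L$ handling condition~(i), and subtree-tracking automata for the regular trees $t_j$ handling condition~(ii)), which is a more detailed but equivalent justification of the same fact.
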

\begin{proof}
	For each tree $t \in L$, let $g(t)$ be the tree that is obtained from $t$ by changing the label of each node $v \in \{l,r\}^*$ where $t_{\geq v} = t_i$ to $x_i$, and removing all descendants of $\{x_1, \dots, x_n\}$-labeled node.
	
	\begin{claim}
		$g(t)$ is finite for all $t \in L$.
	\end{claim}
	\begin{claimproof}
		Assume, for the sake of contradiction, that there is $t \in L$ such that the set of nodes $U \subseteq \{l,r\}^*$ of $g(t)$ is infinite. The number of children of each node in $U$ is bounded by 2, and therefore, by K\"onig's Lemma, there is a tree branch $\pi$ such that $\forall v \in \pi: v \in U$. Therefore, by definition of $g(t)$, we conclude that $t_{\geq v} \neq t_i$ for each $v \in \pi$ and $1 \leq i \leq n$~-- a contradiction to item (1) of Theorem \ref{theorem-countable-language-representation}.
	\end{claimproof}
	
	Notice that for each $t \in L$ we obtain $g(t)[t_1/x_1,\dots ,t_n/x_n] = t$, and therefore $g$ is injective. Hence, $L = M[t_1/x_1,\dots ,t_n/x_n]$ where $M := \{g(t) \mid t \in L\}$. We will show that $M$ is a regular language of finite trees.
	
	It is easy to see that for each $t \in L$ and finite tree $\tau \in \fintree$, $\tau = g(t)$ iff the following conditions hold:
	\begin{itemize}
		\item $t = \tau[t_1/x_1,\dots ,t_n/x_n]$
		\item $t_{\geq v} \neq t_i$ for each node $v$ in $\tau$ that is not a leaf, and for each $1 \leq i \leq n$.
	\end{itemize}
	
	Since both conditions could be formulated in MSO, we conclude that $M$ is MSO-definable, and therefore regular.
\end{proof}

\subsection{Proof of Proposition \ref{prop-countable-tree-languages-unambiguous}} \label{subsection-proof-for-countable-languages-ambiguity}
Let $L$ be a countable regular tree language over alphabet~$\Sigma$. We will show that $L$ can be accepted by an unambiguous PTA.

By Lemma~\ref{lemma-unique-M}, there is a regular finite tree language $M \subseteq \fintree$ and regular infinite trees $t_1, \dots, t_n$ such that $L = M[t_1/x_1,\dots ,t_n/x_n]$. Additionally, for each $t \in L$ there is a unique $\tau \in M$ such that $t = \tau[t_1/x_1,\dots ,t_n/x_n]$.

Each infinite tree $t_i: \{l,r\}^* \rightarrow \Sigma$ is regular, and therefore, by Fact \ref{fact:sect3}, is   definable by a Moore machine $M_i = (\{l,r\}, \Sigma, Q_i, q_I^i , \delta^M_i, out^M_i)$. Let $\cA_i := (Q_i,\Sigma, q_I^i, \delta_i, F_i)$ where $F_i := Q_i$, and $(q,a,q_1,q_2) \in \delta_i$ iff $q_1 = \delta(q, l)$, $q_2 = \delta(q, r)$ and $a = out^M_i(q)$. It is easy to verify that $\cA_i$ is unambiguous, and $L(\cA_i) = \{t_i\}$.
$M$ is regular and therefore can be accepted by an unambiguous FTA $\cB = (Q_\cB,\Sigma \cup \{x_1, \dots, x_n\}, q^\cB_I, \delta_\cB)$.

We use these automata to construct a PTA $\cA := (Q,\Sigma, Q_I, \delta, \mathbb{C})$, by:
\begin{itemize}
	\item $Q := \cup_{1 \leq i \leq n} Q_i \cup Q_\cB$
	\item $q_I^i := \{q^\cB_I\} \cup \{q_I^i \mid (q^\cB_I, x_i) \in \delta_\cB \}$
	\item $\delta$ is the union of the following:
	\begin{itemize}
		\item $\{(q, a, q_1, q_2) \in \delta_\cB \mid a \in \Sigma\}$ (all transitions of $\cB$ on inner nodes)
		\item $\cup_{1 \leq i \leq n} \delta_i$
		\item $\{(q, a, q_I^i, q_I^j) \mid \exists (q, a, q_1, q_2) \in \delta_\cB: (q_1, x_i) \in \delta_\cB$ and $ (q_2, x_j) \in \delta_\cB \}$
		\item $\{(q, a, q_1, q_I^j) \mid \exists (q, a, q_1, q_2) \in \delta_\cB: (q_2, x_j) \in \delta_\cB \}$
		\item $\{(q, a, q_I^i, q_2) \mid \exists (q, a, q_1, q_2) \in \delta_\cB: (q_1, x_i) \in \delta_\cB \}$
	\end{itemize}
	\item $\mathbb{C}(q) := \begin{cases}
	\mathbb{C}_i(q) & \exists i: q \in Q_i\\
	1 & \text{otherwise}
	\end{cases}$
\end{itemize}

It is easy to see that $L(\cA) = M[t_1/x_1,\dots ,t_n/x_n] = L$.

We will show that $\cA$ is unambiguous.
For each accepting computation $\phi \in ACC(\cA, t)$, define a set of nodes $U_\phi := \{u \in \{l,r\}^* \mid \forall v <u: \phi(v) \in Q_\cB \}$. It is easy to see that $U_\phi$ is downward closed. Assume towards contradiction that $U_\phi$ is infinite~-- by K\"{o}nig Lemma, $U_\phi$ contains an infinite tree branch $\pi$. By definition of $U_\phi$ all states in $\phi(\pi)$ are in $Q_\cB$, and therefore colored by $1$. That is a contradiction to $\phi$ being an accepting computation.

Define a labeled finite tree $t_\phi: U_\phi \rightarrow \Sigma \cup \{x_1, \dots, x_n\}$ by:

$t_\phi := \begin{cases}
x_i & \exists i: \phi(u) = q_I^i \\
t(u) & \text{otherwise}
\end{cases}$

By definition of $t_\phi$ we obtain $t = t_\phi[t_1/x_1,\dots ,t_n/x_n]$, and by definition of $\cB$ we conclude that $t_\phi \in M$.

Assume, for the sake of contradiction, that $\cA$ is ambiguous. Therefore, there is a tree $t \in L$ and two distinct accepting computations $\phi_1, \phi_2 \in ACC(\cA, t)$. $\cA_i$ is deterministic for each $1 \leq i \leq n$, and therefore $\phi_1 \neq \phi_2$ iff $t_{\phi_1} \neq t_{\phi_2}$. We conclude that $t_{\phi_1}[t_1/x_1,\dots ,t_n/x_n] = t_{\phi_2}[t_1/x_1,\dots ,t_n/x_n]$ for $t_{\phi_1}, t_{\phi_2} \in M$~-- a contradiction to the uniqueness property of $M$.

\section{Conclusion and Open Questions}\label{sect:concl}
We proved that the ambiguity hierarchy is strict for regular languages over infinite trees.

For each level of the ambiguity hierarchy we provided   a language which occupies this level. It is not difficult to see
that all these languages  are  definable by MSO-formulas without  the second-order quantifiers (formulas of the first-order fragment of MSO).
Concerning the topological complexity, Olivier Finkel \cite{OF20} observed that  these languages have low topological complexity: $L_{\neg a_1 \vee \dots \vee \neg a_k} $ are closed languages;
 $L^{fa}$  and $L_{\exists a_1} $ are  countable unions of closed sets, i.e., a $\Sigma^0_2$-sets; the uncountably ambiguous language  $L_{\mathit{no-max}} $ is
    $\Pi^0_2$. On the other hand,   Skrzypczak \cite{Skrz18} proved that unambiguous languages climb up the whole index hierarchy  and  are topologically as complex as arbitrary regular  tree languages.


A natural question is whether the ambiguity degree is decidable. However, this is not a trivial matter. In \cite{DBLP:conf/csl/BilkowskiS13}
some partial solutions for variants of the problem whether a given language is unambiguous are provided.
We proved that countable regular languages are unambiguous. Since it is decidable whether a language is countable \cite{Damian},
this provides  a decidable sufficient condition for a langauge to be unambiguous. 

A less ambitious task is to develop techniques for computing degrees of ambiguity and  compute the degree of ambiguity of some natural languages.
Let $\Sigma_1:=\{c,a_1\}$ and
$L_{\exists^\infty a_1} := \{t \in T_{\Sigma_1}^\omega \mid$ there are infinitely many  $a_1$-labeled nodes in $t \}$.
$L_{\exists^\omega a_1} := \{t \in T_{\Sigma_1}^\omega \mid$ there is a branch with infinitely many  $a_1$-labeled nodes in $t \}$.
$L_{a_1-\infty\mbox{antichain} } := \{t \in T_{\Sigma_1}^\omega \mid$ the set of  $a_1$-labeled nodes in $t$ contain  an infinite    antichain$ \}$.
All  these  languages are regular. There are (Moore) reductions from $L_{\exists a_1}$ to these languages, hence they    are not  finitely ambiguous.  We believe that their ambiguity degree is uncountable, but we were unable to prove this.

We provided sufficient conditions for a language to be not finitely ambiguous and for a language to have uncountable degree of ambiguity.
In particular, we proved that the degree of ambiguity of the complement of a countable regular language is $\aleph_0$ or $2^{\aleph_0}$,
and provided natural examples of such languages with countable degree of ambiguity. We proved that the degree of ambiguity of the complement of a finite  regular language is $\aleph_0$.
Yet, it is open whether the degree of ambiguity of the complement of countable regular languages is $\aleph_0$.

%
\section*{Acknowledgments}
We would like to thank anonymous referees for their helpful suggestions.

\bibliographystyle{alpha}
\bibliography{ref}

\begin{thebibliography}{CLNW10}

\bibitem[Arn83]{omegaLanguagesNonAmbiguous}
Andr{\'e} Arnold.
\newblock Rational omega-languages are non-ambiguous.
\newblock {\em Theor. Comput. Sci.}, 26:221--223, 09 1983.

\bibitem[BKR10]{barany2010expressing}
Vince B{\'a}r{\'a}ny, {\L}ukasz Kaiser, and Alex Rabinovich.
\newblock Expressing cardinality quantifiers in monadic second-order logic over
  trees.
\newblock {\em Fundamenta Informaticae}, 100(1-4):1--17, 2010.

\bibitem[BS13]{DBLP:conf/csl/BilkowskiS13}
Marcin Bilkowski and Michal Skrzypczak.
\newblock Unambiguity and uniformization problems on infinite trees.
\newblock In Simona Ronchi~Della Rocca, editor, {\em Computer Science Logic
  2013 {(CSL} 2013), {CSL} 2013, September 2-5, 2013, Torino, Italy}, volume~23
  of {\em LIPIcs}, pages 81--100. Schloss Dagstuhl - Leibniz-Zentrum f{\"{u}}r
  Informatik, 2013.

\bibitem[CL07]{carayol2007mso}
Arnaud Carayol and Christof L{\"o}ding.
\newblock {MSO} on the infinite binary tree: Choice and order.
\newblock In {\em International Workshop on Computer Science Logic}, pages
  161--176. Springer, 2007.

\bibitem[CLNW10]{carayol2010choice}
Arnaud Carayol, Christof L{\"o}ding, Damian Niwinski, and Igor Walukiewicz.
\newblock Choice functions and well-orderings over the infinite binary tree.
\newblock {\em Open Mathematics}, 8(4):662--682, 2010.

\bibitem[Col15]{colcombet2015unambiguity}
Thomas Colcombet.
\newblock Unambiguity in automata theory.
\newblock In {\em International Workshop on Descriptional Complexity of Formal
  Systems}, pages 3--18. Springer, 2015.

\bibitem[EJ91]{emerson1991tree}
E~Allen Emerson and Charanjit~S Jutla.
\newblock Tree automata, mu-calculus and determinacy.
\newblock In {\em FoCS}, volume~91, pages 368--377. Citeseer, 1991.

\bibitem[Fin20]{OF20}
Olivier Finkel.
\newblock Personal communication, 12 2020.

\bibitem[GH82]{gurevich1982trees}
Yuri Gurevich and Leo Harrington.
\newblock Trees, automata, and games.
\newblock In {\em Proceedings of the fourteenth annual ACM symposium on Theory
  of computing}, pages 60--65, 1982.

\bibitem[GS83]{gurevich1983rabin}
Yuri Gurevich and Saharon Shelah.
\newblock Rabin's uniformization problem 1.
\newblock {\em The Journal of Symbolic Logic}, 48(4):1105--1119, 1983.

\bibitem[HSS17]{han2017ambiguity}
Yo-Sub Han, Arto Salomaa, and Kai Salomaa.
\newblock Ambiguity, nondeterminism and state complexity of finite automata.
\newblock {\em Acta Cybernetica}, 23(1):141--157, 2017.

\bibitem[JJ{\v{S}}16]{JJS16}
Jozef Jir{\'a}sek, Galina Jir{\'a}skov{\'a}, and Juraj {\v{S}}ebej.
\newblock Operations on unambiguous finite automata.
\newblock In {\em International Conference on Developments in Language Theory},
  pages 243--255. Springer, 2016.

\bibitem[Lei81]{leiss1981succinct}
Ernst Leiss.
\newblock Succinct representation of regular languages by boolean automata.
\newblock {\em Theoretical computer science}, 13(3):323--330, 1981.

\bibitem[Leu05]{leung2005descriptional}
Hing Leung.
\newblock Descriptional complexity of nfa of different ambiguity.
\newblock {\em International Journal of Foundations of Computer Science},
  16(05):975--984, 2005.

\bibitem[McN66]{mcnaughton1966testing}
Robert McNaughton.
\newblock Testing and generating infinite sequences by a finite automaton.
\newblock {\em Information and control}, 9(5):521--530, 1966.

\bibitem[Niw91]{Damian}
Damian Niwi{\'{n}}ski.
\newblock On the cardinality of sets of infinite trees recognizable by finite
  automata.
\newblock In Andrzej Tarlecki, editor, {\em Mathematical Foundations of
  Computer Science 1991}, pages 367--376, Berlin, Heidelberg, 1991. Springer
  Berlin Heidelberg.

\bibitem[PP04]{PinPerrin}
Dominique Perrin and Jean-{\'E}ric Pin.
\newblock {\em Infinite words: automata, semigroups, logic and games}, volume
  141.
\newblock Academic Press, 2004.

\bibitem[Rab69]{Rabin}
Michael~O Rabin.
\newblock Decidability of second-order theories and automata on infinite trees.
\newblock {\em Transactions of the american Mathematical Society}, 141:1--35,
  1969.

\bibitem[RT20]{rabinovich2020ambiguityHierarchy}
Alexander Rabinovich and Doron Tiferet.
\newblock {Ambiguity Hierarchy of Regular Infinite Tree Languages}.
\newblock In Javier Esparza and Daniel Kr{\'a}ľ, editors, {\em 45th
  International Symposium on Mathematical Foundations of Computer Science (MFCS
  2020)}, volume 170 of {\em Leibniz International Proceedings in Informatics
  (LIPIcs)}, pages 80:1--80:14, Dagstuhl, Germany, 2020. Schloss
  Dagstuhl--Leibniz-Zentrum f{\"u}r Informatik.

\bibitem[SHI85]{stearns1985equivalence}
Richard~Edwin Stearns and Harry~B Hunt~III.
\newblock On the equivalence and containment problems for unambiguous regular
  expressions, regular grammars and finite automata.
\newblock {\em SIAM Journal on Computing}, 14(3):598--611, 1985.

\bibitem[Skr18]{Skrz18}
Michal Skrzypczak.
\newblock Unambiguous languages exhaust the index hierarchy.
\newblock In Ioannis Chatzigiannakis, Christos Kaklamanis, D{\'{a}}niel Marx,
  and Donald Sannella, editors, {\em 45th International Colloquium on Automata,
  Languages, and Programming, {ICALP} 2018, July 9-13, 2018, Prague, Czech
  Republic}, volume 107 of {\em LIPIcs}, pages 140:1--140:14. Schloss Dagstuhl
  - Leibniz-Zentrum f{\"{u}}r Informatik, 2018.

\bibitem[TB73]{trakhtenbrotfinite}
Boris Trakhtenbrot and Ya~Martynovich Barzdin.
\newblock Finite automate: behaviour and synthesis.
\newblock 1973.

\bibitem[Tho90]{thomas1990automata}
Wolfgang Thomas.
\newblock Automata on infinite objects.
\newblock In {\em Formal Models and Semantics}, pages 133--191. Elsevier, 1990.

\end{thebibliography}
\appendix
\section{Proof of Claim \ref{claim-regular-positional-winning-strategy}}	
\claimregularpositionalwinningstrategy*
\begin{claimproof}
	$t_0$ is regular, and therefore there is a formula $\psi_{t_0}(\sigma)$ that defines $t_0$ in the unlabeled full-binary tree.
	
        We will use $\psi_{t_0}(\sigma)$ to define the following formula $\mathit{PathfinderWins}_{\cA, t_0}(\phi, STR)$, as the conjunction of the following conditions:
	\begin{enumerate}
		\item \label{bullet-pi-conditions} $\exists \pi$ such that:
		\begin{enumerate}
			\item $\pi$ is a branch
			\item $\forall u \in \pi: (STR(u, \phi(u \cdot l), \phi(u \cdot r)) = l) \leftrightarrow u \cdot l \in \pi)$~-- the Pathfinder moves $d_0\dots d_j\dots$ are consistent with $STR$ and are along the branch $\pi$.
		\end{enumerate}
		\item $\exists \sigma: \psi_{t_0}(\sigma)$ and at least one of the following holds:
		\begin{enumerate}
			\item \label{bullet-automaton-makes-invalid-move} $\exists v \in \pi$ such that $(\phi(v), \sigma(v), \phi(v \cdot l), \phi(v \cdot r)) \notin \delta$~-- the Automaton move at $(v,\phi(v))$ is invalid.
			\item \label{bullet-max-color-odd} The maximal color that $\mathbb{C}$ assigns infinitely often to states in $\phi(\pi)$ is odd.
		\end{enumerate}
	\end{enumerate}
	
	\begin{clm} \label{claim-pathfinder-wins-formula}
          $\mathit{PathfinderWins}_{\cA, t_0}(\phi, STR)$ holds for a positional strategy $STR$ of Pathfinder and a computation $\phi$ of $\cA$ on a tree $t'$ iff the play $\overline{s}$ of $STR$ against $str_\phi$ in $G_{t_0, \cA}$ is winning for Pathfinder.
	\end{clm}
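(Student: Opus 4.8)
The plan is to prove the biconditional by identifying the branch $\pi$ that witnesses the formula with the sequence of nodes visited by the unique play of $STR$ against $str_\phi$. First I would observe that, since $STR$ is positional and $str_\phi$ is the positional Automaton strategy determined by $\phi$, there is a single play $\overline{s} = e_0, d_0, e_1, d_1, \dots$ consistent with both, and hence a single branch through the nodes $v_0 = \epsilon, v_1 = d_0, v_2 = d_0 d_1, \dots$ of its positions. By Claim~\ref{claim-computation-to-strategy}(1), the $i$-th Automaton position in $\pi_{\overline{s}}$ is $(v_i, \phi(v_i))$, so Automaton's move there is $e_i = (\phi(v_i \cdot l), \phi(v_i \cdot r))$ and Pathfinder's reply is $d_i = STR(v_i, \phi(v_i \cdot l), \phi(v_i \cdot r))$. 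I would then note that clause (1) of the formula pins down $\pi$ uniquely: clause (1b) forces $u \cdot l \in \pi$ exactly when $STR(u, \phi(u \cdot l), \phi(u \cdot r)) = l$, which is precisely the recursion that defines the branch of $\pi_{\overline{s}}$. Hence any $\pi$ satisfying (1) is the branch of nodes of $\pi_{\overline{s}}$.

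With this identification in hand, the remaining work is to match the two disjuncts in clause (2)—instantiated with $\sigma = t_0$ via $\psi_{t_0}$—against the two ways Pathfinder can win. Recall that $\overline{s}$ is Pathfinder-winning iff it is not Automaton-winning, i.e.\ iff either it contains an invalid Automaton move or the maximal color $\mathbb{C}_G$ assigns infinitely often to the positions of $\pi_{\overline{s}}$ is odd. For the forward direction, if (2a) holds then some $v_i \in \pi$ satisfies $(\phi(v_i), t_0(v_i), \phi(v_i \cdot l), \phi(v_i \cdot r)) \notin \delta$; since Automaton's move at $(v_i,\phi(v_i))$ is exactly $(\phi(v_i \cdot l), \phi(v_i \cdot r))$, this is an invalid move and $\overline{s}$ is Pathfinder-winning. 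If (2b) holds, then because Pathfinder positions are colored $0$ and Automaton positions by $\mathbb{C}(\phi(v_i))$, the maximal color occurring infinitely often along $\pi_{\overline{s}}$ equals the maximal color of $\mathbb{C}$ on $\phi(\pi)$, which is odd, so again $\overline{s}$ is Pathfinder-winning. The converse reverses these implications: taking $\pi$ to be the branch of $\pi_{\overline{s}}$ yields clause (1), and the two cases of Pathfinder's winning condition translate directly into (2a) and (2b).

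The hard part will be the bookkeeping that establishes the exact correspondence $q_i = \phi(v_i)$ and $e_i = (\phi(v_i \cdot l), \phi(v_i \cdot r))$ between the game play and the free variables of the formula; once Claim~\ref{claim-computation-to-strategy}(1) is invoked this is routine, and everything else is a mechanical rewriting of the game's winning condition. The only subtlety to confirm is that $str_\phi$ is a legal Automaton strategy in $G_{t_0, \cA}$ even though $\phi$ is a computation on a possibly different tree $t'$; this holds because $(\phi(v), t'(v), \phi(v \cdot l), \phi(v \cdot r)) \in \delta$ supplies a witnessing letter for each move, while the mismatch with $t_0$ is exactly what clause (2a) detects as an invalid move.
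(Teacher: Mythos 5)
Your proof is correct and takes essentially the same approach as the paper's: both identify the branch $\pi$ of clause (1) with the node sequence of the unique play of $STR$ against $str_\phi$, invoke Claim~\ref{claim-computation-to-strategy}(1) to get the correspondence $q_i = \phi(v_i)$, and then match clause (2a) with invalid Automaton moves and clause (2b) with the odd-parity winning condition, using that Pathfinder's positions are colored $0$. Your two extra observations~-- that (1b) pins down $\pi$ uniquely, and that $str_\phi$ is a legal Automaton strategy in $G_{t_0,\cA}$ even though $\phi$ is a computation on $t'$~-- are correct points the paper leaves implicit.
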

	\begin{claimproof}
		By definition of $G_{t_0, \cA}$, Pathfinder wins if either Automaton makes an invalid move (condition \ref{bullet-automaton-makes-invalid-move}) or the maximal color that is assigned infinitely often to the positions in $\pi_{\overline{s}}$ is odd. Since all Pathfinder positions have color $0$, this is equivalent to the maximal color assigned infinitely often to Automaton positions being odd.
		
		Let $\overline{s} = e_0, d_0, e_1, d_1, \dots, e_i, d_i, \dots$. Notice that by condition \ref{bullet-pi-conditions}, there is a unique branch~$\pi$ such that $\pi = v_0, \dots v_i, \dots$ where $v_i = d_0 \dots d_{i-1}$.
		By Claim \ref{claim-computation-to-strategy}, we have $\phi(v_i) = q_i$, where the $i$-th position of Automaton in $\pi_{\overline{s}}$ is $(v_i, q_i)$. Since $\mathbb{C}_G(v_i, q_i) = \mathbb{C}(q_i)$, we conclude that the maximal color that $\mathbb{C}$ assigns infinitely often to states in $\phi(\pi)$ is odd iff the maximal color that $\mathbb{C}_G$ assigns infinitely often to positions in $\pi_{\overline{s}}$ is odd. This is assured by condition~\ref{bullet-max-color-odd}.
	\end{claimproof}

        \noindent Let $\mathit{WinningStrategy}_{t_0, \cA}(STR) := \forall \phi$ such that the following holds:
	\begin{itemize}
		\item If there is $t$ such that $\phi$ is an accepting computation of $\cA$ on $t$, then:
		\begin{itemize}
                  \item $\mathit{PathfinderWins}_{\cA, t_0}(\phi, STR)$ holds
		\end{itemize}
	\end{itemize}
        Recalling that the set of all computation of $\cA$ is MSO-definable, we can conclude that $\mathit{WinningStrategy}_{t_0, \cA}(STR)$ is MSO-definable in the unlabeled full-binary tree.
	\begin{clm} \label{claim-winningstrategy-formula}
          $\mathit{WinningStrategy}_{t_0, \cA}(STR)$ holds for a positional strategy $STR$ of Pathfinder iff $STR$ is a positional winning strategy of Pathfinder.
	\end{clm}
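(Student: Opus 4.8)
The plan is to prove both directions of the equivalence, translating the formula into a statement about plays by means of Claim~\ref{claim-pathfinder-wins-formula}: by that claim, $\mathit{WinningStrategy}_{t_0, \cA}(STR)$ holds exactly when, for every $\phi$ that is an accepting computation of $\cA$ on some tree, the play of $STR$ against $str_\phi$ in $G_{t_0, \cA}$ is winning for Pathfinder. For the easy implication, suppose $STR$ is a positional winning strategy for Pathfinder. Given any $\phi \in ACC(\cA, t)$, the induced $str_\phi$ is a legal positional strategy for Automaton in $G_{t_0, \cA}$, since at each reachable position $(v, \phi(v))$ the move $(\phi(v \cdot l), \phi(v \cdot r))$ is valid because $(\phi(v), t(v), \phi(v \cdot l), \phi(v \cdot r)) \in \delta$. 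As $STR$ is winning, every play consistent with it is won by Pathfinder, in particular the play against $str_\phi$; hence $\mathit{PathfinderWins}_{\cA, t_0}(\phi, STR)$ holds for all accepting $\phi$, so the formula holds.

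For the converse I would argue by contraposition, relying on the parity-game fact recalled earlier that a positional Pathfinder strategy defeating every positional Automaton strategy is winning. Thus, if $STR$ is \emph{not} winning, there is a positional Automaton strategy $str$ such that the unique play $\overline{s}$ of $STR$ against $str$ is winning for \emph{Automaton}; let $\pi = v_0, v_1, \dots$ be the branch it follows. Along $\pi$ there is then no invalid move and the maximal color occurring infinitely often is even. The goal is to upgrade $str$ into an accepting computation $\phi$ that induces the same losing play, thereby witnessing the failure of the formula.

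The main obstacle is that $\phi_{str}$ need not be accepting: the parity condition is guaranteed only on $\pi$, not on the branches that leave it. I would repair this using the standing assumption that all states of $\cA$ are useful. For each $i$, the child of $v_i$ that is off $\pi$ carries a state $q$ with $\phi_{str}$; since $q$ is useful, $L(\cA_q) \neq \emptyset$, so one may pick an accepting computation of $\cA_q$ on some tree and graft it, together with its tree, at that child via Lemma~\ref{lem:triv-composition}. Along $\pi$ one keeps $\phi_{str}$ and declares the labels to be $t_0(v_i)$, which is legal because $\overline{s}$ contains no invalid move, i.e.\ $(\phi_{str}(v_i), t_0(v_i), \phi_{str}(v_i \cdot l), \phi_{str}(v_i \cdot r)) \in \delta$. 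This produces a tree $t$ and a computation $\phi \in ACC(\cA, t)$: the branch $\pi$ inherits its even parity from $\overline{s}$, and every branch leaving $\pi$ eventually lies inside a grafted accepting computation, hence also satisfies the parity condition. Since $\phi$ agrees with $\phi_{str}$ on $\pi$ and on the children of $\pi$, the strategy $str_\phi$ coincides with $str$ at every position along $\pi$, so the play of $STR$ against $str_\phi$ is identical to $\overline{s}$ and is won by Automaton. By Claim~\ref{claim-pathfinder-wins-formula}, $\mathit{PathfinderWins}_{\cA, t_0}(\phi, STR)$ fails for this accepting $\phi$, so $\mathit{WinningStrategy}_{t_0, \cA}(STR)$ fails, completing the contrapositive and hence the equivalence.
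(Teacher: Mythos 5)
Your proof is correct, and its skeleton matches the paper's: the easy direction (winning strategy implies the formula holds) is exactly the paper's appeal to Claim \ref{claim-pathfinder-wins-formula}, and the hard direction combines that claim with the positional-determinacy facts recalled in Section 4.1. The noteworthy difference is in the hard direction. The paper's proof asserts outright that validity of the formula implies $STR$ defeats \emph{every} positional Automaton strategy, and then disposes of non-positional strategies by determinacy. Strictly speaking, the formula together with Claim \ref{claim-pathfinder-wins-formula} only yields that $STR$ defeats strategies of the form $str_\phi$ with $\phi$ an \emph{accepting} computation, whereas for an arbitrary positional strategy $str$ the function $\phi_{str}$ need not be accepting: the play against $STR$ controls the parity condition only along the single branch it follows. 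Your grafting construction --- keeping $\phi_{str}$ along the played branch, labelling that branch by $t_0$ (legal since the play contains no invalid move), and attaching, at each off-branch state, an accepting computation obtained from usefulness via Lemma \ref{lem:triv-composition} --- is precisely the missing bridge: it upgrades any positional strategy defeating $STR$ to an accepting computation whose induced strategy produces the same play, hence witnesses failure of the formula. So you prove the same statement with the same key lemmas, but you make explicit (and justify via usefulness of states) a step that the paper's own proof leaves implicit.
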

	\begin{claimproof}
		$\Rightarrow$: By Claim \ref{claim-pathfinder-wins-formula}, $STR$ wins in $G_{t_0, \cA}$ against each positional strategy of Automaton. Assume, for the sake of contradiction, that is a non-positional strategy $str'$ of automaton that wins against $STR$. Then by positional determinacy of parity games, we conclude that there is a positional strategy $str''$ that wins against $STR$~-- a contradiction.
		
		$\Leftarrow$: Follows immediately from Claim \ref{claim-pathfinder-wins-formula}.
	\end{claimproof}
        We have $t_0 \notin L(\cA)$ and therefore by Claim \ref{claim-computation-to-strategy}(3), Automaton does not have a positional winning strategy. From positional determinacy of parity games we conclude that Pathfinder has a positional winning strategy. Therefore, there is a strategy $STR'$ that satisfies $\mathit{WinningStrategy}_{t_0, \cA}(STR)$ in the unlabeled full-binary tree.
	
        Therefore, $\mathit{WinningStrategy}_{t_0, \cA}(STR)$ defines a non-empty tree language over alphabet $Q\times Q \rightarrow \{l,r\}$. By Rabin's basis Theorem, we conclude that there is a regular tree $\widehat{STR}$ in this language, and by Claim \ref{claim-winningstrategy-formula} we conclude that $\widehat{STR}$ is a positional winning strategy for Pathfinder in $G_{t_0, \cA}$.
\end{claimproof}
\paragraph{Remark
(Logic Free Proof of Claim \ref{claim-regular-positional-winning-strategy})}
One can reduce a membership game for a regular tree $t_0$ to a game on a finite graph. By positional determinacy Theorem, Pathfinder will
have a positional winning strategy in the reduced game. From this strategy a regular winning strategy in $G_{t_0, \cA}$ for Pathfinder is easily constructed.
\end{document}